\pgfplotsset{compat=1.17}
\newcolumntype{P}[1]{>{\centering\arraybackslash}p{#1}}
\newcolumntype{M}[1]{>{\centering\arraybackslash}m{#1}}
\newcommand*\samethanks[1][\value{footnote}]{\footnotemark[#1]}
\newcommand{\cls}{\mathbf{cl}}
\newcommand*{\todos}{}%
\def\neel#1{}
\def\yusuf#1{}
\def\neel#1{\marginpar{$\leftarrow$\fbox{N}}\footnote{$\Rightarrow$~{\sf\textcolor{purple}{#1 --Neel}}}}
\def\yusuf#1{\marginpar{$\leftarrow$\fbox{D}}\footnote{$\Rightarrow$~{\sf\textcolor{blue}{#1 --Yusuf}}}}
\title{Limitations of Stochastic Selection\\ with Pairwise Independent Priors
}
 \author{
	Shaddin Dughmi \thanks{Supported by NSF Grant CCF-2009060.} \\
	Department of Computer Science\\
	University of Southern California\\
	{\tt shaddin@usc.edu}
	\and
	Yusuf Hakan Kalayci \samethanks\\ 
	Department of Computer Science\\
	University of Southern California\\
	{\tt kalayci@usc.edu}
	\and
	Neel Patel \samethanks\\ 
	Department of Computer Science\\
	University of Southern California\\
	{\tt neelbpat@usc.edu}
}
\date{ }
\newtheorem{theorem}{Theorem}[section]
\newtheorem{lemma}[theorem]{Lemma}
\newtheorem{claim}[theorem]{Claim}
\newtheorem{definition}[theorem]{Definition}
\newtheorem{question}[theorem]{Question}
\newtheorem{observation}[theorem]{Observation}
\newtheorem{cor}[theorem]{Corollary}
\newcommand{\set}[1]{\left\{ #1 \right\}}
\newcommand{\union}{\cup}
\newcommand{\intersect}{\cap}
\newcommand{\sm}{\setminus}
\newcommand{\ceil}[1]{\left\lceil {#1} \right\rceil}
\renewcommand{\tilde}{\widetilde}
\renewcommand{\bar}{\overline}
\newcommand{\Ft}{\mathbb F_2}
\newcommand{\Ber}{\operatorname{Ber}}
\newcommand{\MtS}{\operatorname{MatrixToSet}}
\newcommand{\indicator}{\mathbbm 1}
\newcommand{\Ehard}{\EE_{\text{hard}}}
\newcommand{\opt}{\textbf{OPT}}
\DeclareMathOperator{\ind}{ind}
\DeclareMathOperator{\spn}{\textbf{Span}}
\DeclareMathOperator{\OPT}{OPT}
\newcommand{\rank}{\textbf{Rank}}
\let\vec\mathbf
\def\max{\qopname\relax n{max}}
\def\argmax{\qopname\relax n{argmax}}
\def\Pr{\qopname\relax n{\mathbf{Pr}}}
\def\Ex{\qopname\relax n{\mathbb{E}}}
\newcommand{\RR}{\mathbb{R}}
\newcommand{\ZZ}{\mathbb{Z}}
\def\A{\mathcal{A}}
\def\D{\mathcal{D}}
\def\E{\mathbb{E}}
\def\EE{\mathcal{E}}
\def\F{\mathcal{F}}
\def\FF{\mathbb{F}}
\def\I{\mathcal{I}}
\def\M{\mathcal{M}}
\def\P{\mathcal{P}}
\def\R{\mathbb{R}}
\def\w{\vec w}
\def\eps{\epsilon}
\def\sse{\subseteq}
\renewcommand{\vec}{\mathbf}
\newcommand{\eat}[1]{}
\newenvironment{lp*}{\begin{equation*}  \begin{array}{lll}}{\end{array}\end{equation*}}
\newcommand{\pwset}{\Delta_{\text{pw}}(2^E)}
\algnewcommand{\IIf}[1]{\State\algorithmicif\ #1\ \algorithmicthen}
\algnewcommand{\EndIIf}{\unskip\ \algorithmicend\ \algorithmicif}
\newcounter{proc}
\newenvironment{tbox}{
\vspace{0.2cm}
\begin{tcolorbox}[width=\textwidth,
                  enhanced,
                  boxsep=2pt,
                  left=1pt,
                  right=1pt,
                  top=4pt,
                  boxrule=1pt,
                  arc=0pt,
                  colback=white,
                  colframe=black,
                  breakable]
}{
\end{tcolorbox}
}
\newcommand{\tboxhrule}[0]{\vspace{0.1cm} \hrule \vspace{0.2cm}}
\newenvironment{titledtbox}[1]{\begin{tbox}#1 \tboxhrule}{\end{tbox}}
\newenvironment{procedure}[1]{\refstepcounter{proc}\begin{titledtbox}{\textbf{Procedure \theproc.} #1}}{\end{titledtbox}}
\newcolumntype{?}{!{\vrule width 1.3pt}}
\begin{document}

\maketitle

\begin{abstract}
Motivated by the growing interest in correlation-robust stochastic optimization, we investigate stochastic selection problems beyond independence. Specifically, we consider the instructive case of pairwise-independent priors and matroid constraints. We obtain essentially-optimal bounds for contention resolution and prophet inequalities.  The impetus for our work comes from the recent work of \citet{pi-uniform-prophet}, who derived a constant-approximation for the single-choice prophet inequality with pairwise-independent priors. 

For general matroids, our results are tight and largely negative. For both contention resolution and prophet inequalities, our impossibility results hold for the full linear matroid over a finite field. We explicitly construct pairwise-independent distributions which rule out an $\omega\left(\frac{1}{\rank}\right)$-balanced offline CRS and an $\omega\left(\frac{1}{\log \rank}\right)$-competitive prophet inequality against the (usual) oblivious adversary. For both results, we employ a generic approach for constructing pairwise-independent random vectors --- one which unifies and generalizes existing pairwise-independence constructions from the literature on universal hash functions and pseudorandomness. Specifically, our approach is based on our observation that random linear maps turn linear independence into stochastic independence.

We then examine the class of matroids which satisfy the so-called partition property --- these include most common matroids encountered in optimization. We obtain positive results for both online contention resolution and prophet inequalities with pairwise-independent priors on such matroids, approximately matching the corresponding guarantees for fully independent priors. These algorithmic results hold against the almighty adversary for both problems.

\end{abstract}


\newpage 
\tableofcontents
\newpage

\section{Introduction}

Combinatorial optimization subject to uncertainty has gained substantial interest in recent years, initially motivated by its applications in computational economics (\cite{hartline2013bayesian,chawla2010multi}).
In many of these tasks, the underlying uncertainty or stochasticity arises from either the random availability of elements of a set system or from a stochastic weight assignment to these elements. Two fundamental {stochastic selection problems}, \emph{contention resolution} (e.g., \cite{vondrak2011submodular,adamczyk2018random,feldman2016online,pollner2022improved,nuti2022towards}) and \emph{generalized prophet inequalities} (e.g., \cite{kleinberg2019matroid,gravin2019prophet,dutting2020prophet}), fit into this paradigm. These problems appear either directly, or indirectly as subroutines, throughout the fields of algorithms and combinatorial optimization with a wide range of applications including approximation algorithms \cite{vondrak2011submodular,rubinstein2017combinatorial,gravin2019prophet}, mechanism design \cite{pollner2022improved,feldman2016online,ezra2020online,dutting2020prophet,alaei2014bayesian},  online algorithms \cite{srinivasan2023online,dughmi22,rubinstein2017combinatorial}, stochastic probing \cite{feldman2016online,adamczyk2015improved,baveja2018improved}, sparsification \cite{dughmi2023sparsification}, and algorithmic delegation \cite{kleinberg2018delegated,bechtel2020delegated,bechtel2022delegated}. 

A rich literature examines the design of algorithms for these problems when the input is a product distribution or negatively correlated. However, our understanding is relatively limited when the input distribution exhibits correlations, particularly positive correlations, which are often present in many intended applications. For instance, consider the scenario of sequential posted pricing where a seller with a single item encounters $n$ prospective buyers in sequence, each possessing a valuation for the product. The seller, with the goal of maximizing profit, offers a fixed, non-negotiable price to each buyer, who then decides to buy the item if the price is less than or equal to their valuation. Yet, in today's hyper-connected world, it is unrealistic to presume buyers remain unaffected by or ignorant of each other's valuations. In fact, notable studies \cite{park2007elaboration, askalidis-2016-value} demonstrate this phenomenon by showing that the aggregate online reviews from a large group of buyers play a critical role in shaping customer behavior.

A deeper understanding of the interplay between correlation and optimal selection, and an expansion of the algorithmic and complexity-theoretic toolkit thereof, promises to impact the myriad aforementioned applications of decision-making subject to uncertainty.  Of particular note is the matroid secretary conjecture of \citet{babaioff_secretary}, which has recently been shown equivalent to stochastic selection in the presence of a particular kind of positive correlation by Dughmi \cite{dughmi20,dughmi22}. Algorithmic approaches for near-optimal decision making in the presence of correlation, as well as proof techniques for ruling out such algorithms, could therefore shed light on the conjecture.

A number of recent works explore a variety of models in which decisions must be made in the presence of correlated inputs \cite{babaioff2020escaping-cannibalization, bei2019correlation-robust-single-item-auction, cai21simple-mechanisms-dependent-items, garvin2018correlation-robust-monopolist, bateni2015revenue, immorlica2020prophet}. It is either known (e.g. \cite{hill1992survey,rinott1992prophet-inequality-survey}), or easy to show, that not much can be achieved in the presence of arbitrary positive correlation. Even under assumptions like the \emph{linear correlation model} of \citet{bateni2015revenue}, in the worst case there are no positive algorithmic results for prophet inequalities with non-sparse dependencies even for the rank one matroid as shown by \citet{immorlica2020prophet}.

Particularly inspiring our investigation is the recent work by \citet{pi-uniform-prophet}, which initiates the study of stochastic selection problems with inputs that are \emph{pairwise independent}: any two random variables are independent, though positive or negative correlations can manifest when considering larger groups of variables.  Pairwise independence significantly relaxes the usual assumption of full independence, and  pairwise independent distributions have found application in hashing, derandomization, and constructions of pseudo-random generators (for more details, see surveys \cite{luby2006pairwise,salil2012pseudorandomness}). In the context of the sequential posted pricing mechanism, empirical studies \cite{park2007elaboration, askalidis-2016-value} emphasize the significant impact of a large number of aggregated online reviews on shaping customer behavior. These studies suggest that a buyer's valuation is influenced by reviews from a large number of consumers, whereas a small selection of customer reviews (which reflect their valuations) have little effect. As such, pairwise (or more generally $k$-wise) independence serves as a reasonable idealization of such settings where correlations live largely in the higher-order moments of a distribution.

\citet{pi-uniform-prophet} show that pairwise independence suffices for a constant approximation in the single-choice prophet inequality problem and sequential posted price mechanisms. This finding encourages further exploration of stochastic selection problems under the same pairwise independent assumption. Our focus is particularly on prophet inequalities and contention resolution schemes. This naturally leads us to the following question:

\begin{question}
    Do constant approximation prophet inequalities or contention resolution schemes exist for a broader class of set-systems when the input distribution is pairwise independent?
\end{question}

We resolve the above question for matroids. We prove strong impossibility results for matroid prophet inequalities and contention resolution schemes when the stochastic inputs are only pairwise independent. These impossibility results hold even for the most permissive computational models considered for these problems, and stand in contrast to the strong algorithmic results for inputs that are mutually independent \cite{kleinberg2019matroid,chekuri2011multi,lee2018optimal,feldman2016online}. The following summarizes our main contribution.

\begin{itemize}
    \item There is no $\omega\left(\frac{1}{\log \rank}\right)$-competitive matroid prophet inequality for pairwise independent distributions. This holds even for the \emph{oblivious adversary} who selects the order of elements in advance, and even for binary matroids.
    \item There is no $\omega\left(\frac{1}{\rank}\right)$-balanced contention resolution scheme for pairwise independent distributions. This holds even in the \emph{offline} setting of contention resolution, and even for linear matroids.
\end{itemize}
We complement these negative results with simple algorithms that  match these bounds when inputs are pairwise independent, even in the most restrictive of computational models considered for these problems: a $\Theta\left(\frac{1}{\log \rank}\right)$-competitive matroid prophet inequality  and a $\Theta\left(\frac{1}{\rank}\right)$-balanced \emph{online} contention resolution scheme. In contrast to our impossibility results, both our algorithmic results hold even for the \emph{almighty adversary} who selects the order of elements with knowledge of all the realized inputs and any internal randomness of the algorithm

For both of our impossibility results, we carefully construct a pairwise independent distribution for the linear matroid $\FF_q^d$ for some large $d \in \ZZ_{+}$ and a suitable prime $q$. Our approach to constructing pairwise independent distributions is founded on the observation that uniformly random linear maps between vector spaces convert linear independence in the domain space to stochastic independence in the range space. To put it formally, when a family of $k$-wise linearly independent vectors is embedded in another vector space via a uniformly random linear map, the embedded vectors exhibit $k$-wise stochastic independence and each assumes a uniform marginal distribution over the second vector space. Special instances of this observation have previously been employed to define $k$-wise independent hash functions \cite{CarterWe79, WegmanCa79} and $k$-wise independent random bits \cite{alon1986fast-maximal-independent-set, karloff1994construction-k-wise-independent, alon1990approx-k-wise-independent, naor1990approx-k-wise-independent}. For a comprehensive overview of prior work on the construction of pairwise independent distributions, we refer interested readers to the survey by \cite{luby2006pairwise} and to \cite[Chapter~3]{salil2012pseudorandomness}. Our method for constructing $k$-wise stochastically independent vectors can be viewed as a simple unification and vector-generalization of existing constructions of scalar-valued random variables.\footnote{Despite the simplicity of our construction, we have been unable to identify another construction with this level of generality. The concepts presented here permeate existing work on constructing $k$-wise independent random variables.}

Later, we examine the class of matroids that satisfy the partition property --- these include the most common matroids encountered in optimization. Informally, this property holds if a matroid can be approximated by a (random) partition matroid. We demonstrate that, when a matroid fulfills the partition property, we can reduce the problem to one defined over rank one matroids. Leveraging the results and machinery of \cite{pi-uniform-prophet}, we obtain constant factor prophet inequalities and contention resolution schemes for pairwise independent distributions on matroids satisfying the partition property. As in our previous algorithmic results, our bounds hold even for the almighty adversary for both problems.
We note the concurrent independent work of \citet{anupam-pairwise-indep-crs}, which also studies pairwise-independent stochastic selection. They obtain similar contention resolution schemes and prophet inequalities for a number of natural matroid classes such as the uniform, laminar, graphic, co-graphic, and regular matroids.

Finally, we mention that our results deepen the existing schism between matroids that admit the partition property and those that do not, and in doing so might shed light on the matroid secretary conjecture. Much of the interest in the partition property is due to the fact --- pointed out in a survey by \citet{dinitz2013recent} --- that matroids satisfying the $\alpha$ partition property also admit an $O(\alpha)$-competitive secretary algorithm. In fact, most classes of matroids for which constant-competitive secretary algorithms are known --- such as graphic \cite{babaioff2009secretary}, co-graphic \cite{soto2013matroid}, and laminar \cite{jaillet2013advances} ---  satisfy a constant partition property. Moreover, many such algorithms either explicitly or implicitly exploit the partition property.   \citet{dinitz2013recent} therefore asked whether  every matroid satisfies a constant partition property, as a possible route to resolving the matroid secretary conjecture. This question was answered negatively by the recent work of \citet{shayan-partition-matroid-secretary}, who show that the full binary matroid of rank $d$ does not satisfy the $\alpha$-partition property for any $\alpha \leq O(d^{1/4})$. The parallel work of \citet{singla-matroid-secretary-barriers} also provides evidence of the limitations of partition-based algorithms for the secretary problem.

Our results add to this literature in two distinct ways. First, as corollaries of our results we show that full linear matroid $\FF_q^d$ with $q \geq d$ does not satisfy the $\alpha$ partition property for any $\alpha \leq O(d)$, and  that the full binary matroid of rank $d$ does not satisfy the partition property for any $\alpha \leq O(d/\log d)$,  strengthening the bound of \cite{shayan-partition-matroid-secretary}.  Second, by constructing provably ``hard instances'' for selection problems that are easy in the presence of the partition property, our techniques might shed light on the analogous question for the matroid secretary problem. In fact, showing that our construction for prophet inequalities remains hard in the random order model would disprove the matroid secretary conjecture. On the flip side, providing an algorithm for our construction in the random order model appears highly nontrivial, and therefore might stimulate the development of algorithmic techniques pertinent to the conjecture.

\subsection*{How to Read this Paper}
We present pertinent preliminaries in Section~\ref{sec:prelims}, of which a light perusal is  sufficient for the reader comfortable the basics of matroid theory, contention resolution, and prophet inequalities.  We then present an abridged technical overview of our results and techniques in Section~\ref{sec:overview}, followed by a more detailed treatment in Sections~\ref{sec:tool} through \ref{sec:partition-property and Pw Selection}. The reader looking to get a high-level sense of our results and techniques is invited to focus primarily on Section~\ref{sec:overview},  referring to the later technical sections for more detail as needed.  We close with open questions in Section~\ref{sec:conclusion}.


\newpage

\section{Preliminaries}
\label{sec:prelims}
\subsection{Basic Notation and Terminology} \label{sec:notations}
We use bold lowercase letters to denote vectors,  with the $\ell^{\text{th}}$ component of a vector $\vec v$ denoted by $\vec v(\ell)$.  Sets and matrices are denoted by uppercase letters, while collections of sets or matrices are denoted by bold uppercase letters. For a set $S$ of indices into a vector $\vec v$, we denote $\vec v(S) = \sum_{\ell \in S} \vec v (\ell)$. We denote the set of positive integers up to $n$ by $[n]$. For a set $X$, we denote its power set by $2^X$, and use $\Delta(X)$ to denote the family of all distributions supported on $X$.

A \emph{set system} is a pair $\M=(E,\I)$ where $E$ is a set of \emph{elements} and $\I \sse 2^E$ is a family of \emph{feasible} (a.k.a. \emph{independent}) sets. We concern ourselves primarily with set systems where $E$ is finite, and $\I$ is \emph{downwards-closed}: If $B \in \I$ and $A \sse B$ then $A \in \I$.  For a vector $\vec \mu \in [0,1]^{E}$ of \emph{marginals} indexed by the elements, we use $\Delta_{pw}(2^{E})(\vec \mu) \subseteq \Delta(2^E)$ to denote the family of pairwise independent distributions over sets of elements with marginal probabilities $\vec \mu$, i.e. for every $\mathcal D(\vec \mu)\in \Delta_{\text{pw}}(2^E)(\vec \mu)$, $\Pr_{Q\sim \D(\vec \mu)}[e\in Q] = \mu(e)$ and the events $\{e\in Q\}_{e\in E}$ are pairwise independent. 
We also consider distributions in $\Delta(\mathbb R_{\geq 0}^{E})$ that assign a nonnegative weight (a.k.a. value) to each element. We let $\Delta_{\text{pw}}(\R_{\geq 0}^{E}) \subseteq \Delta(\R_{\geq 0}^{E})$ be the class of pairwise independent weight distribution over elements $E$ --- i.e., if  $\vec w \sim \D \in \Delta_{\text{pw}}(\RR_{\geq 0}^E)$ then for any distinct pair of elements $e,f \in E$ their weights  $\vec w (e)$ and $\vec w (f)$ are pairwise independent random variables. Throughout the paper, we interchangeably use the terms ``weight'' and ``value''.

For a prime number $q$, $\FF_q$ denotes the finite field with $q$ elements and $\FF_q^d$ denotes the vector space of dimension $d$ over $\FF_q$.\footnote{Unless otherwise specified, we think of  vectors as column vectors.} For a given set of integer labels $L \subseteq \ZZ$, we denote the collection of labeled vectors by $\FF_q^d \times L := \set{\vec v^i: \text{for all }i\in L \text{ and } \vec v \in \FF_q^d }$.
We use capital letters to symbolize matrices over these finite fields, and their rank is denoted by $\rank(\cdot)$. A matrix $R \in \FF_q^{r \times c}$ is a full column-rank matrix if $\rank(R)=c$, i.e. its columns constitute a set of linearly independent vectors. Additionally, we often refer to the columns of a matrix $R \in \FF_q^{d \times n}$ using lowercase letters and subscripts, such as $\vec r_1, \dots \vec r_n \in \FF_q^d$, and we occasionally allow ourselves some flexibility in notation (clarifying with re-declarations in context) to use the matrix $R$ interchangeably with the set of its columns. For any matrix $A$, we denote its column space as $\cls(A)$.

\subsection{Matroid Theory}\label{sec:matroid-prelims}
We use standard definitions from matroid theory; for details see \cite{oxley2006matroid,welsh2010matroid}. A \emph{matroid} $\M=(E,\I)$ is a set-system with elements $E$ and a family of independent sets $\I \subseteq 2^E$ satisfying the three \emph{matroid axioms}. A \emph{weighted matroid} incorporates a matroid $\M=(E, \I)$ with weights $w \in \R^E$ for its elements.

By duplicating or making parallel labeled copies of each element of a matroid $\M=(E, \I)$ ``$m$'' times, we construct a larger matroid $\M^{\times m}=(E^{\times m}, \I^{\times m})$. Here $E^{\times m}$ contains $m$ parallel copies  $e^1,\ldots,e^m$ of  each $e \in E$, and $T \subseteq E^{\times m}$ is in $\I^{\times m}$ if $\{e : e^i \in T \mbox{ for some i} \} \in \I$ and $|T\cap \{e^i: i\in [m]\}|\leq 1$ for all $e\in E$.

The \emph{rank function} of matroid $\M=(E, \I)$ is denoted by $\rank^\M$, where $\rank^\M(S) = \max\{|T| : T\subseteq S, T \in \I\}$. The \emph{weighted rank function}  $\rank_{\vec w}^\M$ is defined for weighted matroids $(\M,\vec w)$ as $\rank^\M_{\vec w}(S) = \max\{\vec w(T): T\subseteq S, T \in \I\}$. The span function of matroid $\M$ is denoted by $\spn^\M(S)$ where $\spn^\M(S)=\{e \in E: \rank^\M(S \cup \{e\})=\rank^\M(S)\}$. We slightly abuse notation and use $\rank(\M)=\rank^\M(E)$ for the rank of the matroid. We may omit the superscript $\M$ when it is clear from  context. 

The \emph{matroid polytope} $\P(\M) \sse [0,1]^E$ associated with $\M$ is the convex hull of all indicator vectors of its independent sets. Equivalently, a nonnegative vector $\mu$ is in $\P(\M)$ if and only if $\sum_{e \in S} \mu_e \leq \rank^\M(S)$ for all $S \subseteq E$. 

A \emph{linear matroid} $\M=(E,\I)$ is a matroid where $E$ is a family of vectors in some vector space, and $\I$ consists of the linearly-independent subsets of $E$. We consider linear matroids where the underlying vector space is $\FF_q^d$, for $q$ a prime and $d \in \mathbb N$.  When $q=2$, this is also referred to as a \emph{binary matroid}. When $E=\FF_q^d$ we call this the \emph{full linear matroid over $\FF_q^d$}, and  when $E=\FF_2^d$ we call it the \emph{full binary matroid with rank $d$}.

The \emph{rank one matroid} on elements $E$ is the matroid whose independent sets are the singletons in $E$ as well as the empty set. A \emph{simple partition matroid} on $E$ is the disjoint union of rank one matroids; i.e., there is a partition $E_1,\ldots,E_k$ of $E$ such that $S$ is independent if and only if $|S \intersect E_i| \leq 1$ for all $i$.

\subsection{Contention Resolution}

\emph{Contention Resolution} is the algorithmic task of converting a random set which is feasible  ``on average'' to on to one which is always feasible. Contention resolution in the offline setting was originally formalized by \citet{vondrak2011submodular} for application to approximation algorithms. It has since been generalized, studied, and applied in various online models (see  \cite{feldman2016online,adamczyk2018random,lee2018optimal}).
An algorithm for contention resolution is referred to as a \emph{Contention Resolution Scheme (CRS)}.

Fix a downward-closed set-system $\M=(E,\I)$ over a ground set $E$ of elements, as well as a \emph{convex relaxation} $\P(\M) \sse [0,1]^E$ of the indicator vectors of sets in $\I$. Let  $\vec \mu \in \P(\M)$ be a vector of marginal probabilities, and let $A \sse E$ be a set of \emph{active} elements drawn from a known distribution $\D \in \Delta(2^E)$ satisfying $\Pr_{A \sim \D}[e\in A] = \mu(e)$ for all $e\in E$. Given these inputs, the goal of a CRS is to \emph{select} (a.k.a. \emph{accept}) a feasible subset of the active elements --- i.e., a set $I \in \I$ such that $I \subseteq A$.  

A CRS is judged by its balance ratio: we say that a contention resolution scheme is $c$-balanced if for all $e\in E$, $\Pr[e \in I \mid e\in A] \geq c$. Many natural classes of combinatorial constraints, including matroids, matchings, and knapsacks,  admit $\Omega(1)$-balanced contention resolution schemes when the events $\{e \in A\}_{e\in E}$ are jointly independent, and $\P(\M)$ is the usual relaxation of the problem.

In the \emph{offline} model of contention resolution, all inputs --- in particular the set $A$ of active elements --- are given upfront. This is the most permissive model considered for contention resolution, and serves as the setting of our impossibility results.  Our algorithmic results hold for the more restrictive  \emph{online} setting where elements are presented to the algorithm, which we refer to as an \emph{online CRS (OCRS)}, in some order determined by an adversary. When $e \in E$ arrives online, it is then revealed whether $e$ is active (i.e., whether $e \in A$), at which point the algorithm must irrevocably decide whether to select $e$ subject to feasibility. Several adversary models have been considered for online contention resolution, and our algorithmic results hold even for the most restrictive of those: the \emph{almighty adversary} who determines the order of elements with full knowledge of all inputs including the realization of $A$, as well as the realization of any internal randomness used by the algorithm. 

In this work, we focus on contention resolution for matroids when $\{e \in A\}_{e\in E}$ are only pairwise independent, and $\P(\M)$ is simply the matroid polytope. In particular, $\M=(E,\I)$ is a matroid and  $\D \in \Delta_{\text{pw}}(2^E)(\vec x)$ for some $\vec x \in \P(\M)$. When a CRS is $c$-balanced for all such $\D$, we say it is a \emph{$c$-balanced pairwise-independent CRS}.

Finally, we restate a Theorem from \cite{dughmi20} that characterizes the set of distributions which permit balanced contention resolution schemes in the offline setting.

\begin{theorem}[Theorem~3.6 from \cite{dughmi20}]\label{thm:shaddins characterization}
	Fix a matroid $\M=(E, \I)$, and let $\D$ be a distribution supported on $2^E$. The following are equivalent for every $c \in [0,1]$,
	\begin{enumerate}
		\item There exists an offline contention resolution scheme which is $c$-balanced for $\D$. 
		\item For every weight vector $\w \in \R_{\geq 0}^E$, the following holds: $ \Ex_{A \sim \D}[\rank_{\w}(A)] \geq c \cdot \Ex_{A \sim \D}[\w(A)]$ 
        \item For every $F \subseteq E$, the following holds: $ \Ex_{A \sim \D}[\rank(A \intersect F)] \geq c \cdot \Ex_{A \sim \D}[|A \intersect F|]$.
	\end{enumerate}
\end{theorem}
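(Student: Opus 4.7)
The plan is to establish the cycle $(1) \Rightarrow (2) \Rightarrow (3) \Rightarrow (2) \Rightarrow (1)$, treating $(1) \Leftrightarrow (2)$ as the substantive equivalence (a ``dual characterization'' of offline CRS) and $(2) \Leftrightarrow (3)$ as a bookkeeping reduction between weighted and unweighted rank.

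For $(1) \Rightarrow (2)$, if a $c$-balanced CRS produces a random output $I \subseteq A$ with $I \in \I$, then pointwise in the realization of $A$ we have $\rank_\w(A) \geq \w(I)$, so taking expectations and swapping the sum gives $\Ex[\rank_\w(A)] \geq \sum_e \w(e)\Pr[e \in I] \geq c \sum_e \w(e)\Pr[e \in A] = c\,\Ex[\w(A)]$, using the $c$-balanced property coordinatewise. The direction $(2) \Rightarrow (3)$ is immediate by specializing to $\w = \indicator_F$, for which $\w(A) = |A \cap F|$ and $\rank_\w(A) = \rank(A \cap F)$. For the reverse $(3) \Rightarrow (2)$, I would use a layer-cake argument: setting $F_t := \{e : \w(e) \geq t\}$, we have $\w(A) = \int_0^\infty |A \cap F_t|\,dt$, and by the correctness of the greedy algorithm for maximum-weight independent sets in a matroid, $\rank_\w(A) = \int_0^\infty \rank(A \cap F_t)\,dt$ pointwise as well. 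Applying (3) at each level $t$, taking expectations, and integrating yields (2).

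The main work is $(2) \Rightarrow (1)$, which I would prove via LP duality, equivalently the von Neumann minimax theorem. Writing a randomized offline CRS as a family of conditional distributions $\{x_{A,I}\}$ indexed by realizations $A$ in the (finite) support of $\D$ and feasible $I \in \I$ with $I \subseteq A$, the optimal balance ratio is
\[
c^\ast \;=\; \max_{\mathrm{CRS}} \; \min_{e : \Pr[e \in A] > 0} \; \frac{\Pr[e \in I]}{\Pr[e \in A]}.
\]
Extending the inner minimization over elements to distributions over elements makes the objective bilinear, and both strategy spaces are compact convex subsets of Euclidean space (the CRS space is a product of simplices indexed by the support of $\D$), so minimax applies. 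After reparameterizing the adversary's distribution as an unnormalized nonnegative weight vector $\w$, the CRS's best response against a fixed $\w$ is to output a maximum-$\w$-weight feasible subset of each realized $A$, which aggregates in expectation to $\Ex[\rank_\w(A)]$. The minimax identity then reads
\[
c^\ast \;=\; \inf_{\w \geq 0,\; \Ex[\w(A)] > 0} \; \frac{\Ex_{A \sim \D}[\rank_\w(A)]}{\Ex_{A \sim \D}[\w(A)]},
\]
and condition (2) is exactly the assertion $c^\ast \geq c$, which is (1).

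The main obstacle is the minimax step: one must set up the two strategy spaces as compact convex sets with a bilinear objective, which requires discarding elements of zero marginal and quotienting the adversary's weights by scale so that $\w$ ranges over a simplex. Once this is done, the remaining matroid-specific content --- the identity $\rank_\w(A) = \int_0^\infty \rank(A \cap F_t)\,dt$ and the fact that the per-realization best response against $\w$ is the maximum-weight independent subset of $A$ --- are both standard consequences of the greedy algorithm for matroids, and everything else is linear-algebra bookkeeping.
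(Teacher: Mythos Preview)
The paper does not prove this theorem; it is quoted verbatim from \cite{dughmi20} as a preliminary result and used as a black box, so there is no ``paper's own proof'' to compare against. Your proof is correct and is essentially the standard argument from the original reference: the easy directions $(1)\Rightarrow(2)\Rightarrow(3)$ are exactly as you wrote, the layer-cake identity $\rank_\w(A)=\int_0^\infty \rank(A\cap F_t)\,dt$ (a consequence of greedy optimality for matroids) gives $(3)\Rightarrow(2)$, and the substantive direction $(2)\Rightarrow(1)$ is the LP/minimax duality you describe, with the reparameterization $\w(e)=p_e/\Pr[e\in A]$ turning the adversary's simplex into the normalized weight constraint $\Ex[\w(A)]=1$.
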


\subsection{Prophet Inequalities}
Fix a downwards-closed set system $\M=(E, \I)$. In a  \emph{prophet inequality problem}, there are \emph{weights} (or \emph{values}) $\w \in \R_{\geq 0}^E$ on the elements that are drawn from a distribution $\D$, and elements arrive online in some order determined by an adversary. 
We take the perspective of a \emph{gambler} who a-priori knows $\mathcal M$ and the distribution $\D$ of weights, but not the realized weights $\w$. When an element $e$ arrives online, the gambler learns $\vec w(e)$ and must irrevocably decide whether or not to accept $e$, subject to accepting a feasible set of of elements $S \in \I$. The gambler seeks to maximize their utility $\w(S) = \sum_{e \in S} \w(e)$. The goal is to compete --- in expectation --- with an omniscient \emph{prophet} who obtains the maximum possible utility $\max\{ \w(T) : T \in \I\}$. When the gambler's expected utility is an $\alpha$ fraction of the prophet's expected utility, we say that we have an $\alpha$-competitive prophet inequality for $\M$ and $\D$. 

Our negative results in this paper hold even against the weakest of adversaries considered in the literature on  prophet inequality problems: the \emph{oblivious adversary} who determines the order of elements in advance as a function of only $\M$ and $\D$; the gambler, therefore, knows the (arbitrary) order at the outset. In contrast, our positive results hold even for the strongest adversary considered in the literature: the \emph{almighty adversary} who determines the order of elements with full knowledge of all inputs, including the realized weights $\w$ as well as any internal randomness of the gambler's algorithm. The gambler therefore only learns the order as elements arrive online.

In this paper, we focus on  prophet inequality problems where $\M$ is a matroid and $\D$ is a pairwise independent distribution over weight vectors , i.e. $\D \in \Delta_{\text{pw}}\left(\R_{\geq 0}^{|E|}\right)$.

\subsection{A Useful Lemma  for Pairwise Independent Events}

The following lemma, from \cite{pi-uniform-prophet} establishes a lower bound for the probability that at least one event from a collection of pairwise independent events will occur.
\begin{lemma}[Lemma 1 from \cite{pi-uniform-prophet}]\label{lem:local-lemma-type}
	Let $\D$ and $\ind$ be pairwise independent and mutually independent distributions over a collection of random events $\{\mathcal E_i\}_{i=1}^k$ such that $\Pr_{\EE \sim \D}[\EE_i] = \Pr_{\EE \sim \ind}[\EE_i]$. Then,
	\begin{equation*}
		\Pr_{\vec \EE \sim \D}\left[\bigvee_{i=1}^k \mathcal E_i\right] \geq \frac{\sum_{i=1}^k \Pr[\mathcal E_i]}{1+\sum_{i=1}^k \Pr[\mathcal E_i]} \quad \text{  and  } \quad\Pr_{\EE \sim \D}\left[\bigvee_{i=1}^n E_i\right] \geq \frac{1}{1.299} \cdot \Pr_{\EE \sim \ind}\left[\bigvee_{i=1}^n \EE_i \right]
	\end{equation*}
	where $\EE \vee \F$ denotes the event that at least one of $\EE$ or $\F$ occurs.
\end{lemma}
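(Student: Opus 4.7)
My plan is to prove the two bounds in sequence. The first is a standard second-moment argument that uses pairwise independence to control the variance; the second combines it with a scalar optimization over the total marginal $S$.

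For the first inequality, set $X := \sum_{i=1}^k \indicator[\EE_i]$ and $S := \E[X] = \sum_i \Pr[\EE_i]$. Pairwise independence kills all cross-covariances, so $\mathrm{Var}(X) = \sum_i \Pr[\EE_i](1 - \Pr[\EE_i]) \leq S$, hence $\E[X^2] \leq S + S^2$. The Paley--Zygmund inequality (equivalently Cauchy--Schwarz applied to $\E[X \cdot \indicator\{X \geq 1\}]$) then yields
\[
\Pr_{\D}\!\left[\bigvee_{i=1}^k \EE_i\right] \;=\; \Pr[X \geq 1] \;\geq\; \frac{(\E X)^2}{\E [X^2]} \;\geq\; \frac{S^2}{S + S^2} \;=\; \frac{S}{1+S},
\]
matching the first claimed bound.

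For the second inequality, I would compare the bound just obtained against the product identity $\Pr_{\ind}[\bigvee_i \EE_i] = 1 - \prod_i (1 - p_i)$ with $p_i := \Pr[\EE_i]$, and argue that the worst-case ratio is realized in the ``Poissonian'' regime where all $p_i$ are equal and tend to $0$. In that limit $1 - \prod_i(1 - p_i) \to 1 - e^{-S}$ and the bound $S/(1+S)$ on $\Pr_{\D}$ is asymptotically tight (since $\sum_i p_i(1-p_i) \to S$). A one-variable calculus exercise shows that $(1+S)(1 - e^{-S})/S$ attains its maximum, numerically $1.299$, at the unique positive root $S^{\ast} \approx 1.79$ of $e^S = 1 + S + S^2$; this rearranges to the bound $S/(1+S) \geq (1 - e^{-S})/1.299$ and supplies the constant. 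Outside of this limit one uses the weighted Cauchy--Schwarz refinement $\Pr_{\D}[\bigvee_i \EE_i] \geq T/(1+T)$ where $T := \sum_i p_i/(1-p_i)$ --- which in particular recovers $\Pr_{\D} = 1$ whenever some $p_i = 1$ --- together with the Schur-concavity of $\vec p \mapsto \sum_i \log(1 - p_i)$, to bound the ratio from below by its Poissonian-limit value.

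The hard step is justifying the reduction to the Poissonian limit, i.e.\ showing rigorously that the infimum of $\Pr_{\D}[\bigvee_i \EE_i] / \Pr_{\ind}[\bigvee_i \EE_i]$ over pairwise-independent distributions with arbitrary marginals is attained as $k \to \infty$ with equal, vanishing $p_i$'s. The cleanest approach I would try is a pairwise exchange argument on the marginal vector: show that for fixed $S$, equalizing any two unequal marginals can only weaken the ratio, and then take $k \to \infty$ via compactness; the remaining calculus finishes the proof.
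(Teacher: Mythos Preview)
The paper does not prove this lemma; it is quoted from \cite{pi-uniform-prophet} and used as a black box, so there is no in-paper proof to compare against. Evaluating your proposal on its own merits:

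Your argument for the first inequality is correct and standard.

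For the second inequality you have the right ingredients but you are making the argument harder than it needs to be. The weighted second-moment bound $\Pr_{\D}\bigl[\bigvee_i \EE_i\bigr] \geq T/(1+T)$ with $T := \sum_i p_i/(1-p_i)$ is correct: take $X = \sum_i \indicator[\EE_i]/(1-p_i)$, so $\E[X] = T$ and, by pairwise independence, $\E[X^2] = \sum_i \frac{p_i(1-p_i)}{(1-p_i)^2} + \bigl(\sum_i \frac{p_i}{1-p_i}\bigr)^2 = T + T^2$; Paley--Zygmund gives the bound. Once you have this, you do \emph{not} need Schur-concavity, a pairwise-exchange argument, or any reduction to a Poissonian limit. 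The one-line observation you are missing is
\[
\Pr_{\ind}\Bigl[\bigvee_i \EE_i\Bigr] \;=\; 1 - \prod_i (1-p_i) \;\leq\; 1 - e^{-T},
\]
which follows termwise from $1-p \geq e^{-p/(1-p)}$ for $p \in [0,1)$ (equivalently $\log y \leq y-1$ with $y = 1/(1-p)$). Combining the two displays, the ratio is at least $T/\bigl((1+T)(1-e^{-T})\bigr)$ for the \emph{actual} $T$ of the instance, and the one-variable calculus you already carried out --- $(1+T)(1-e^{-T})/T$ is maximized at the positive root $T^\ast \approx 1.793$ of $e^{T} = 1 + T + T^2$, with value $\approx 1.2984$ --- finishes the proof directly. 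The ``hard step'' you flag is not a step at all; the first half of your paragraph on the second inequality (the $S$-based discussion and the Poissonian limit) can simply be deleted.
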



\section{Overview of Technical Results}
\label{sec:overview}
In this section, we present an overview of our techniques and results. First we outline our construction of pairwise stochastically independent vector families. We then show how to employ this construction to prove our main impossibility results for pairwise-independent offline matroid contention resolution, and pairwise-independent matroid prophet inequalities against the oblivious adversary. We show that our impossibility results are tight by providing algorithms with matching bounds for both problems, even against the almighty adversary. Finally, we examine matroids satisfying the partition property, which includes most common matroids encountered in combinatorial optimization. We provide constant factor algorithms for pairwise-independent contention resolution and prophet inequalities on such matroids, even against the almighty adversary.

\subsection{A Recipe for Pairwise-Independent Vector Families}
\label{sec:overview-recipe}

As our main technical tool, we present a simple and flexible recipe for constructing pairwise independent families of vectors. We instantiate this recipe in different ways for matroid prophet inequalities and contention resolution schemes. Since our construction permits encoding of a rich variety of higher order relationships between the vectors, while maintaining lower-order independence, we hope it might be of independent interest.

There are two versions of our recipe. The first version produces an \emph{ordered} family (i.e., a tuple) of pairwise independent vectors, and the second turns that into an \emph{unordered} family (i.e., a set) of \emph{labeled} vectors wherein membership is pairwise independent. We require the second, unordered, version for our impossibility results. The ordered construction is the most natural, and easily generalizes to $k$-wise independence for arbitrary $k$ --- we present the more general construction in this paper. The unordered construction is built on its ordered counterpart, and is tailored for pairwise independence (i.e., $k=2$).

\subsubsection*{Ordered Families}
We begin with our construction of an \emph{ordered} family of $k$-wise independent vectors over the field $\FF_q$, where $q$ is a prime. Let $m,n,d \geq k$ be positive integers. Let $\Sigma \in \FF_q^{m \times n}$ be a matrix whose columns $\vec \sigma_1, \ldots, \vec \sigma_n \in \FF_q^m$ are $k$-wise \emph{linearly} independent; i.e., no linear combination of $k$ or fewer of these columns evaluates to the zero vector in $\FF_q^m$. Intuitively, $\Sigma$ is the input matrix we get to ``design'' for encoding a ``hard instance'' of the problem at hand, while respecting $k$-wise independence.  Let $R \in \FF_q^{d \times m}$ be drawn uniformly at random; i.e., each entry of the matrix $R$ is a uniformly-random element of $\FF_q$.  Let $X = R \Sigma \in \FF_q^{d \times n}$ be the output matrix, with columns $\vec x_1,\ldots,\vec x_n \in \FF_q^d$ where $\vec x_i = R \vec \sigma_i$.

The key observation here, which we prove in Section~\ref{sec:tool}, is that the columns $\vec x_1,\ldots,\vec x_n \in \FF_q^d$ of $X$ are $k$-wise \emph{stochastically} independent, and moreover each is uniformly distributed in $\FF_q^d$. Notably, the uniformly-random linear map $R$ converts linear independence to stochastic independence. Also notably, linearity of $R$ entails that any higher-order (greater than $k$) linear relationships between the columns of $\Sigma$ --- designed to inject ``hardness'' as previously described--- are preserved as relationships between the corresponding columns of $X$. In both our applications, the dimension $d$ of the output vectors $\vec x_1,\ldots, \vec x_n$ is larger than the dimension $m$ of the inputs $\vec \sigma_1,\ldots, \vec \sigma_n$, implying that the linear operator $R$ is injective --- and the $\vec x_i$s are therefore distinct --- with high probability.

\subsubsection*{Unordered Families}
Both of our impossibility results require constructing a set of non-zero vectors $A$ where the events $\set{\vec v \in A}$ are pairwise independent, and yet feature higher-order positive dependencies. This is easiest to see in the case of contention resolution, where we require the set of active elements to be concentrated in a lower-dimensional subspace, as needed to rule out a balanced CRS. This motivates our second, \emph{unordered}, construction, which we describe next.

A natural first attempt would be to invoke our ordered construction to obtain pairwise-independent  $[\vec x_1, \ldots, \vec x_n ] =  R  \Sigma$, then take $A= \set{\vec x_1,\ldots,\vec x_n} \sm \vec 0$ to be the (unordered) set of non-zero columns. Some thought reveals that this can introduce mild pairwise correlation. 
This is largely because $\vec u \in A$ implies one fewer of the $n$ ``chances'' is available for a different vector $\vec v$. Moreover, analyzing the exact magnitude of this correlation is complicated by the event --- albeit a low probability one in our applications --- that the random linear operator $R$ is non-injective. We circumvent these issues by creating $n$ copies $\vec v^1,\ldots,\vec v^n$ of each vector $\vec v \in \FF_q^d$, and including $\vec v^i$ in $A$ when $\vec x_i = \vec v \neq 0$. Therefore, we work in the matroid $\FF_q^d \times [n]$, with $n$ parallel copies of each element in the matroid $\FF_q^d$ \emph{labeled} with the positions $1,\ldots,n$.  
It is now clear that each non-zero $\vec u^i$ is in $A$ with probability $1/q^d$, and that the events $\vec u^i \in A$ and $\vec v^j \in A$ are pairwise-independent so long as $i \neq j$. This leaves the case of $\vec u^i$ and $\vec v^i$ for distinct $\vec u$ and $\vec v$, whose membership in $A$ is mutually exclusive and hence negatively correlated. This, however, can be easily corrected by mixing in --- with small probability $\frac{1}{q^d}$ ---  a set which positively correlates vectors with the same label without introducing dependencies across different labels, nor changing the marginals.\footnote{It is this step which recovers pairwise, but not necessarily $k$-wise, independence. A generalization to $k$-wise independence appears more technically involved, though likely possible.} We describe the details in Section~\ref{sec:tool}.\footnote{Our procedure might seem similar to that of \citet{alon2003almost}, where they devise a procedure to convert an \emph{almost} $k$-wise independent distribution to a $k$-wise independent distribution over $\{0,1\}^n$. However,  given pairwise independent random variables (or events) $X_1,\dots, X_n$, their procedure requires the following: for any $S\subseteq I$ with $|S|\leq k$, $\Pr[\oplus_{i\in S} X_i  = 1] = \frac{1 +\eps}{2}$ for small $\eps >0$ which is clearly not the case in our setting as $\Pr[\vec v^i \in A \land \vec u^i \in A] = 0$ for any distinct $\vec u, \vec v\in \FF_q^d$.  Here $\oplus$ denotes the binary sum or XOR of the bits.}

\subsection{Contention Resolution on Matroids}
\label{sec:technical-crs}

We use our construction of pairwise independent vector families to rule out a balance ratio better than $\frac{3}{\rank}$ for pairwise-independent contention resolution on linear matroids, even in the offline setting.  For a desired rank $d$, we instantiate the recipe described in Section~\ref{sec:overview-recipe} with $m=2$, $n=d$, and an arbitrary prime $q \geq d$. The input matrix $\Sigma \in \FF_q^{2 \times d}$ is a fat (rank $2$) matrix with $d$ pairwise-linearly-independent columns $\sigma_1,\ldots,\sigma_d \in \FF_q^2$ --- we show that such a matrix exists whenever $q \geq d$. The random linear operator $R \in \FF_q^{d \times 2}$ then maps $\Sigma$ to $X \in \FF_q^{d \times d}$. The $d$ columns $\vec x_1, \ldots, \vec x_d \in \FF_q^d$ of $X$ are uniformly distributed in $\FF_q^d$, stochastically pairwise independent, and most crucially --- since $\Sigma$ has rank $2$ --- all lie in a subspace of rank at most $2$.

We convert $X=[\vec x_1,\ldots, \vec x_d]$ to a pairwise-independent set $A \sse \FF_q^d \times [d]$ as described in Section~\ref{sec:overview-recipe}. Each non-loop element of the matroid $\FF_q^d \times [d]$ is in $A$ with probability $\frac{1}{q^d}$, so its marginals lie in the matroid polytope\footnote{To see this, note that a uniformly random base of $\FF_q^d \times [d]$ has essentially the same --- in fact, very slightly larger --- marginals since all the vectors are equally likely to be part of the base and $\vec 0$ is never sampled.} and $\Ex[ |A|]  \approx d$. Moreover, since $X$ has rank at most $2$, and $A$ consists of (labeled copies of) the columns of $X$ with probability $1-\frac{1}{q^d}$, we can bound $\Ex[\rank(A)] \leq 2+ \frac{d}{q^d} < 3$. It follows from Theorem~\ref{thm:shaddins characterization} that no contention resolution scheme has a balance ratio better than $\frac{\Ex[\rank(A)]}{\Ex[|A|]} \leq \frac{3}{d}$, as claimed. This holds even in the offline setting of contention resolution.

We show that our bound of $O(1/\rank)$ is essentially tight for general matroids. Given a pairwise-independent distribution with feasible marginals for a matroid of rank $d$, the contention resolution scheme which greedily selects active elements with probability $\frac{1}{2d}$ is $\frac{1}{4d}$-balanced, even against the almighty adversary. This follows simply from pairwise independence, as well as the fact that the marginals sum to at most $d$.  

We also note that our CRS impossibility result for linear matroids can easily be adapted to the important special case of binary matroids,  while degrading the bound from $O\left(\frac{1}{\rank}\right)$ to $O\left(\frac{\log \rank}{\rank}\right)$. Specifically, we let $\Sigma \in \FF_2^{O(\log d) \times d}$ be a fat (rank $O(\log d)$) binary matrix with $d$ pairwise independent columns, which we show always exists. We then proceed in identical fashion with a uniformly random linear map $R \in \FF_2^{d \times \log d}$, culminating  in a stochastically pairwise-independent family of $\approx d$ vectors in $\FF_2^d$  with expected rank $O(\log d)$, as needed.

\subsection{Prophet Inequalities on Matroids}\label{sec:technical-pw}

Through a somewhat more involved application of our recipe, we rule out a competitive ratio better than $O(\frac{1}{\log d})$ for prophet inequalities on binary matroids of rank $d$, even against the oblivious adversary. We also show by way of an algorithm that this bound is tight for pairwise-independent prophet inequalities, even for general matroids and against the almighty adversary. 

\subsubsection*{The Impossibility Result}

We begin with the following randomized construction. For arbitrarily large $d$ and some $\kappa = \Omega(\log d)$, we  define a (random) nested sequence  $V_1 \supset  V_2 \ldots \supset V_{\kappa}$ of subspaces of $\FF_2^d$,  and corresponding independent sets $S_1, \ldots, S_\kappa \sse \FF_2^d$,  satisfying the following properties:

\begin{enumerate}[label=(\roman{enumi})]
\item $V_\ell$ has dimension $\frac{d}{2^{\ell -1}}$\label{prop:flat_dim}
\item $S_\ell$ is a linearly independent subset of $V_\ell$ with size $n_\ell$ equal to a constant fraction of its dimension.\footnote{In fact, we can guarantee half the dimension, i.e. $n_\ell = d/2^\ell$.}\label{prop:basis}
\item Let $\ell< \ell' \leq \kappa$. Conditioned on $S_1,\ldots,S_\ell$ and $V_1,\ldots,V_\ell$, a vector $v \in S_\ell$ is in $V_{\ell'}$ with probability $\geq 1/2^{\ell'-\ell}$. \label{prop:span}
\item The sets $S_1, \ldots, S_\kappa$ are disjoint.\label{prop:disjoint}
\end{enumerate}

We show the existence of such random $S_\ell$s and $V_\ell$s through a highly technical construction, the details of which we defer to Section~\ref{sec:PI_main}. For now, the reader might be satisfied of its plausibility by noting that it becomes trivial if we drop property \eqref{prop:disjoint}: Let $S_1$ be the standard basis vectors, then for each $\ell$ let $S_{\ell+1}$ be a random half of $S_\ell$, and let $V_\ell$ be the span of $S_\ell$. We cannot help but speculate whether our construction, which goes to some pains in order to achieve all four properties simultaneously, can be simplified or elegantly reduced to known linear-algebraic facts.

To motivate this construction, consider the following ``hard'' instance of the prophet inequality problem on the matroid $\FF_2^d$ where elements of each $S_\ell$ are assigned weight $w_\ell = 2^\ell$, and remaining elements are assigned weight $0$. Suppose also that the non-zero weight elements arrive in increasing order of weight, i.e. in the order $S_1,\ldots,S_\kappa$, followed by the zero-weight elements at the end. Forgive for a moment that the weights are not pairwise independent, and that guaranteeing this order is beyond the powers of the oblivious adversary --- these are issues we will address later. The maximum weight independent set, which can be constructed by running the greedy algorithm on the non-zero weight elements in decreasing order of weights $S_\kappa, \dots, S_1$, selects a constant fraction of each $S_\ell$ -- this follows easily from the matroid exchange axiom and Property \ref{prop:basis}. Therefore, the prophet's reward is $\Omega(\sum_{\ell=1}^\kappa w_\ell \cdot n_\ell) = \Omega(d \log d)$. The gambler, in having to choose a subset $T_\ell$ of each $S_\ell$  before learning anything about flats $V_{\ell'}$ for $\ell' > \ell$, is not as fortunate. Properties \eqref{prop:flat_dim} and \eqref{prop:span} imply that  $\Ex[\sum_{\ell=1}^{\ell'} |T_\ell| \cdot \frac{1}{2^{\ell'-\ell}} ] \leq \frac{d}{2^{\ell'-1}}$. Plugging in $w_\ell = 2^\ell$ and multiplying both sides by $2^{\ell'}$,  we get $\Ex[\sum_{\ell=1}^{\ell'} w_\ell |T_\ell|] \leq 2 d$. Since this holds for arbitrary $\ell'$, we conclude that the gambler's expected total reward is only $O(d)$.

It remains to address two issues with this prophet inequality instance: pairwise independence, and the limited powers of the oblivious adversary. For pairwise independence, we use our recipe described in Section~\ref{sec:overview-recipe}. We let $\Sigma_\ell$ be the binary $d \times n_\ell$ matrix with $S_\ell$ as its columns, and let the matrix $\Sigma = [ \Sigma_1,\ldots,\Sigma_\kappa] \in \FF_2^{d \times n}$ with $n=\sum_{\ell=1}^\kappa n_\ell$ be the concatenation of the $\Sigma_\ell$s. By property~\ref{prop:disjoint}, as well as the fact that we defined each $S_\ell$ as a set (rather than a multiset), the columns of $\Sigma$ are distinct. Since we are in $\FF_2$, this is equivalent to the columns being pairwise linearly independent, as required for using $\Sigma$ in our recipe. For the random linear embedding $R$, we choose a slightly larger output dimension to guarantee that the mapping is injective with high probability, and therefore preserves the geometry of $S_1,\ldots,S_\kappa$ as captured by properties \eqref{prop:flat_dim} through \eqref{prop:disjoint}. An output dimension of $2d$ suffices, so we let $R$ be a uniformly random matrix in $\FF_2^{2d \times d}$.

We proceed as described in Section~\ref{sec:overview-recipe}. We define $X = R \Sigma \in \FF_2^{2d \times n}$, or in more detail $X= [X_1,\ldots,X_\kappa]$ where $X_\ell = R \Sigma_\ell \in \FF_2^{2d \times n_\ell}$ has as its columns the image of $S_\ell$ under the linear map $R$. Recall from Section~\ref{sec:overview-recipe} that, for a fixed $\Sigma$, any pair of columns of $X$ are distributed independently and uniformly in the destination space $\FF_2^{2d}$. It follow that this continues to hold when using our randomly-constructed $\Sigma$, and the columns of our matrix $X$ constitute a pairwise-independent ordered family of vectors distributed uniformly in $\FF_2^{2d}$. We then convert $X$ to an unordered family  $A \sse \FF_2^{2d} \times [n]$ wherein membership is stochastically pairwise independent as in Section~\ref{sec:overview-recipe}. In more detail, with probability $1-1/2^d$  we let $A_\ell$ consist of the the $n_\ell$ columns of $X_\ell= R \Sigma_\ell$ 
labeled with the integers $L_\ell$ from $(\sum_{\ell' <\ell} n_{\ell'}) +1$ to $\sum_{\ell' \leq \ell} n_{\ell'}$,
and let $A$ be the (disjoint) union of the $A_\ell$s. With remaining probability $1/2^d$ the set $A$ is drawn from a positively-correlated distribution designed to ensure pairwise-independence overall as described in Section~\ref{sec:overview-recipe}.

In summary, $A$ is an (unordered) family of (labeled) vectors in $\FF_2^{2d} \times [n]$ wherein membership is stochastically pairwise independent. Moreover, with high probability $A$ is the disjoint union of $A_1,\ldots,A_\kappa$, where $A_\ell$ is simply $S_\ell$ transformed by an injective linear map $R$ then distinctly labeled with integers from $L_\ell$. By assigning weight $w_\ell = 2^\ell$ to elements with labels in $L_\ell$, we obtain what is effectively our original ``hard'' instance of the matroid prophet inequality transformed by the injective linear operator~$R$. Moreover, since a labeled vector $\vec v^i \in \FF_2^{2d} \times [n]$ with $i \in L_\ell$ has weight $w_\ell=2^\ell$ when $\vec v^i \in A$ and weight $0$ otherwise, it follows that the weights are stochastically pairwise  independent. Therefore, we have converted our original hard instance into one which is stochastically pairwise independent.

But what of the oblivious adversary's power to set the arrival order? Fortunately, now that the weight of $\vec v^i \in \FF_2^{2d} \times [n]$  is either zero or uniquely determined as a non-decreasing function of its its label $i \in [n]$, ordering elements in increasing order of label guarantees that vectors in $\FF_2^{2d}$ arrive in non-decreasing order of weight. In particular, after arrival of elements with labels in $L_1,\ldots,L_\ell$, the gambler knows the positions in $\FF_2^{2d}$ of the vectors with weights up to $2^\ell$ --- those corresponding with high probability to the original  $S_1,\ldots,S_\ell$ --- but nothing else of the positions of vectors with weights exceeding $2^\ell$  --- corresponding to $S_{\ell'}$ for $\ell' > \ell$.  This is a fixed arrival order, and therefore obviously within the power of the oblivious adversary.

\subsubsection*{The Algorithm}

We show that our bound of $O(1/\log \rank)$ is essentially tight for general matroids. Specifically, we obtain an $\Omega(1/\log \rank)$ matroid prophet inequality for pairwise independent distributions, even against the almighty adversary. We necessarily exploit pairwise independence, as no nontrivial guarantee is possible for general correlated distributions even for the rank one matroid.

Our algorithm is based on simple bucketing. Let $\M=(E,\I)$ be a matroid with pairwise-independent random weights $w \in \RR_{\geq 0}^\E$, and let $\OPT= \Ex[\rank_w(\M)]$ denote the expected reward of the prophet. Weights smaller than $\frac{OPT}{2 \rank}$ contribute at most half the prophet's reward, so can be discarded. Partition the weights between $\frac{OPT}{2 \rank}$ and $3 \OPT$ into $O(\log \rank)$  ``regular'' buckets delimited by the integer powers of $2$.  Weights larger than $3 OPT$ are assigned their own ``special'' bucket. The algorithm simply chooses the bucket with the largest contribution to the prophet's reward, and greedily selects as many elements as possible from that bucket subject to feasibility.

Clearly, the greedy algorithm applied to any of the regular buckets recovers at least half of that bucket's contribution to the prophet's expected reward. The special bucket $B_\infty$ requires a more careful analysis which exploits pairwise independence of the events $e \in B_\infty$.  By Markov's inequality, $B_\infty$  is non-empty with probability at most  $1/3$.   A Lemma from \cite{pi-uniform-prophet} for pairwise-independent events implies that $\sum_{e \in E} \Pr[ e \in B_\infty]  \leq \frac{\Pr [ B_\infty \neq \emptyset]}{1-\Pr[B_\infty \neq \emptyset]} \leq \frac{1}{2}$. Invoking pairwise independence again, together with  the union bound, an element $e$ in $B_\infty$ is the only element in that bucket with probability at least $1/2$, even after conditioning on its weight $w_e$. It follows that the greedy algorithm applied to $B_\infty$ recovers half of its contribution to the prophet's reward.

Putting it all together, since our algorithm greedily selects from the bucket contributing most to the prophet's reward, and there are $O(\log \rank)$ buckets, we obtain an $\Omega(1/\log \rank)$ prophet inequality for pairwise-independent distributions on matroids. Since we assumed nothing about the order in which elements are greedily selected, this holds even against the almighty adversary. 

\subsection{Exploiting the Partition Property}

In contrast to our negative results for general matroids, we show that not all is lost for most common matroids in the optimization literature. Such matroids often satisfy the  \emph{$\alpha$-partition property} for a constant $\alpha$. Informally, this means that the matroid $\M$ can be approximated by a randomly-chosen simple partition matroid $\M'$ on the same elements, in the sense that the weighted rank of $\M'$ approximates that of $\M$ from below up to a factor of $\alpha$ for every vector of element weights. 

For prophet inequalities, we apply the pairwise-independent single-choice prophet inequality of \cite{pi-uniform-prophet} to each part of $\M'$ separately. This gives a $\frac{1}{3}$  prophet inequality for $\M'$, and therefore a $\frac{\alpha}{3}$ prophet inequality for $\M$, for pairwise independent distributions. Since the guarantees of \cite{pi-uniform-prophet} hold even against the almighty adversary, so do ours.

For contention resolution, the situation is slightly more involved. First, we show the existence of a balanced offline CRS for pairwise-independent distributions by utilizing Theorem \ref{thm:shaddins characterization} and Lemma \ref{lem:local-lemma-type}, then we show how to exploit a duality argument as well as the aforementioned prophet inequality to convert it to an online CRS against the almighty adversary.

Let $A$ be a set of elements wherein membership is pairwise-independent with marginals $\mu$ feasible for $\M$. To show the existence of an offline balanced CRS, it suffices by Theorem \ref{thm:shaddins characterization} to show, for every set of elements $F$, that the expected rank of $A \cap F$ is a constant fraction of its expected size $\mu(F)$. We begin by analyzing the set $\tilde{A}$ where membership is jointly independent with the same marginals $\mu$, and later relate $A$ and $\tilde{A}$ through Lemma \ref{lem:local-lemma-type}. It follows from \cite{vondrak2011submodular} that  $\tilde{A}$ admits a $(1-1/e)$-balanced CRS with respect to $\M$, and therefore $\Ex[\rank_\M(\tilde{A} \intersect F)] \geq (1-1/e) \cdot \mu(F)$ for every set of elements $F$. Invoking the partition property, we get $\Ex[\rank_{\M'}(\tilde{A} \intersect F)] \geq \alpha \cdot (1-1/e) \cdot \mu(F)$. We then observe that the rank function of the partition matroid $\M'$ decomposes additively across its parts $E_1,\ldots,E_k$, with the $i$-th part contributing $1$ to $\rank_{\M'} (\tilde{A} \intersect F)$  precisely when at least one of the elements in $E_i \cap F$ is in $\tilde{A}$. Lemma \ref{lem:local-lemma-type} implies that the probability of a disjunction of pairwise-independent events approximates, up to a factor of $1.299$, the same quantity in the jointly independent case. In particular, $\Ex[\rank_{\M'}(A \intersect F)] \geq \frac{1}{1.299} \cdot \alpha \cdot (1-1/e) \cdot \mu(F)$. Since the rank function of $\M'$ is smaller than that of $\M$, we can invoke Theorem \ref{thm:shaddins characterization} to conclude the existence of an $\frac{1-1/e}{1.299} \cdot \alpha$-balanced CRS for $A$ with respect to $\M$. 

For good measure, we also show how to turn our offline CRS into an online one, even against the almighty adversary, at a cost of an additional $O(\alpha)$ factor in the balance ratio. We use a duality-based construction essentially identical to that in \cite[Theorem 4.1]{dughmi20}. At a high level, the problem of maximizing the weighted rank of active elements functions as a dual to contention resolution. An online $\beta$-approximation to this dual problem can be converted to an online CRS whose balance ratio is within $\beta$ of the best possible offline. Our prophet inequality is such an approximation with $\beta = \frac{\alpha}{3}$, even against the almighty adversary. Therefore, we obtain a pairwise-independent online CRS with balance ratio $\frac{1-1/e}{3.897} \cdot \alpha^2$  against the almighty adversary.

We note that our offline CRS for matroids satisfying the partition property, together with the impossibility results for linear and binary matroids in Section~\ref{sec:technical-crs}, imply limits on the partition property for these matroids. In particular, for an $\alpha$-partition property  we show that $\alpha = O\left(\frac{1}{d}\right)$  for the full linear matroid, and  $\alpha = O\left(\frac{\log d}{d}\right)$ for the full binary matroid, where $d$ denotes the rank. The latter results improves on the bound of $O\left(\frac{1}{d^{1/4}}\right)$ from \cite{shayan-partition-matroid-secretary}.


\section{A Recipe for Pairwise-independent Vector Families}
\label{sec:tool}

In this section, we devise our main tool to produce a family of pairwise independent (labeled) vectors.
Initially, we detail our approach to constructing an ordered family of pairwise independent vectors. Subsequently, we outline the process for transforming an ordered family to an unordered family.
	
\subsection{Ordered Pairwise Independent Vector Families}

    The construction of an ordered family of pairwise independent vectors naturally extends to the $k$-wise independence. Therefore, in this section, we present the construction in the most general form.
    We fix positive integers $m,n,d \geq k$. Let $\Sigma \in \FF_q^{m \times n}$ be a matrix with $k$-wise linearly independent columns $\vec \sigma_1 , \vec \sigma_2, \dots \vec \sigma_n$ and $R \in \FF_q^{d\times m}$ be a matrix where each entry is a uniformly random element of $\FF_q$ drawn independently. Consider the matrix $X = R \Sigma \in \FF_q^{d\times n}$ whose columns are obtained by linearly transforming each column of $\Sigma$ via $R$. In the following lemma, we show that the columns $\vec x_1 ,\dots , \vec x_n \in \FF_q^{d}$ of $X$ form an ordered family of  $k$-wise stochastically independent vectors., i.e., for each $S \subseteq [n]$ of size at most $k$ the vectors $\{\vec x_i \}_{i\in S}$ are mutually (stochastically) independent.

\begin{lemma}
	\label{lem:rotation}
	Let $\Sigma \in \FF_q^{m \times n}$ be a matrix with $k$-wise linearly independent columns $\sigma_1, \dots \sigma_n \in \FF_q^m$, $R\in \FF_q^{d \times m}$ be a uniformly random matrix with entries $r_{i,j} \sim \operatorname{Unif}\left( 0,1,\dots ,q-1\right)$, and $X =  R \Sigma$ with column vectors $\vec x_1, \dots \vec x_n$. For any subset $S \subseteq[n]$ of size at most $k$, the vectors $\set{\vec x_i}_{i\in S}$ are mutually stochastically independent. Moreover, for any $i\in [n]$ and $\vec v\in \FF_q^d$, $\Pr[\vec x_i = \mathbf v] = \frac{1}{q^d}$.
\end{lemma}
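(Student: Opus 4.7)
\bigskip
\noindent\textbf{Proof Plan.}
The plan is to show the stronger statement that, for any $S \subseteq [n]$ with $|S| \le k$, the tuple $(\vec x_i)_{i \in S}$ is distributed uniformly on $(\FF_q^d)^{|S|}$. This single claim packages both the uniform marginals (take $|S|=1$) and the mutual independence (a product of uniforms is uniform iff its coordinates are independent uniforms), so it suffices to establish it.

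First I would rewrite the joint distribution in linear-algebraic form. Let $\Sigma_S \in \FF_q^{m \times |S|}$ denote the submatrix of $\Sigma$ whose columns are $\{\vec\sigma_i\}_{i \in S}$, and observe that $(\vec x_i)_{i \in S}$ is precisely the random matrix $R \Sigma_S \in \FF_q^{d \times |S|}$. By the $k$-wise linear independence hypothesis and $|S|\le k$, the matrix $\Sigma_S$ has full column rank $|S|$. Thus the goal reduces to showing that $R\Sigma_S$ is uniformly distributed on $\FF_q^{d \times |S|}$ whenever $R$ is uniform on $\FF_q^{d \times m}$ and $\Sigma_S$ has full column rank.

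Next I would exploit the row-wise structure. Since the entries of $R$ are mutually independent and uniform, the rows $\vec r_1^\top,\ldots,\vec r_d^\top \in \FF_q^m$ of $R$ are i.i.d. uniform on $\FF_q^m$, and the $j$-th row of $R\Sigma_S$ equals $\vec r_j^\top \Sigma_S$ — a function of $\vec r_j$ alone. Therefore the rows of $R\Sigma_S$ are mutually independent, and it suffices to show that for any fixed $j$ and any target vector $\vec y \in \FF_q^{|S|}$,
\[
\Pr_{\vec r_j}\bigl[\vec r_j^\top \Sigma_S = \vec y^\top\bigr] \;=\; \frac{1}{q^{|S|}}.
\]
This is a standard counting fact: the linear map $\vec r \mapsto \Sigma_S^\top \vec r$ from $\FF_q^m$ to $\FF_q^{|S|}$ is surjective (because $\Sigma_S$ has full column rank, so $\Sigma_S^\top$ has full row rank), hence every fiber has cardinality $q^{m-|S|}$, and dividing by $q^m$ gives the required probability.

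Multiplying across the $d$ independent rows yields
\[
\Pr[R\Sigma_S = Y] \;=\; \prod_{j=1}^{d} \frac{1}{q^{|S|}} \;=\; \frac{1}{q^{d|S|}}
\]
for every $Y \in \FF_q^{d \times |S|}$, which is exactly uniformity of $R\Sigma_S$ on $\FF_q^{d\times |S|}$. There is no real obstacle here beyond cleanly reducing ``stochastic independence of the $\vec x_i$'' to ``surjectivity of $\vec r \mapsto \Sigma_S^\top \vec r$,'' which is immediate from linear-algebra first principles; the conceptual content of the lemma is precisely the slogan that a uniform random linear map converts linear independence in the source into stochastic independence in the target, and the proof above is the one-line formalization of that slogan.
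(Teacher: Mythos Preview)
Your proof is correct and follows essentially the same approach as the paper's: both decompose the joint probability row-by-row using the independence of the rows of $R$, and both reduce each row's contribution to counting solutions of a full-rank linear system over $\FF_q^m$ (the paper phrases this as ``the solution set is an affine subspace of dimension $m-|S|$,'' you phrase it as ``the map $\vec r \mapsto \Sigma_S^\top \vec r$ is surjective with fibers of size $q^{m-|S|}$''). The only cosmetic difference is that you package the marginal and joint computations into a single uniformity claim, whereas the paper treats them separately.
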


\begin{proof}
First we show that each column $\vec x_i \in \FF_q^d$ of $X$ is uniformly distributed over $\FF_q^d$. For arbitrary $\vec v \in \FF_q^d$, we can express,
\begin{align*}
	\Pr[\vec x_i = \vec v] &= \Pr[R\vec \sigma_i = \vec v] \notag \\
	 & = \Pr \left[\bigwedge_{\ell=1}^d \left\{ \sum_{j=1}^m r_{\ell, j} \cdot \vec  \sigma_i(j) = \vec v(\ell) \right\} \right] \notag \\
	&=\prod_{\ell=1}^d \Pr \left[ \sum_{j=1}^m r_{\ell, j} \cdot \sigma_i(j) = \vec v(\ell)\right]&& (\text{disjoint set of independent RVs}). 
\end{align*}
For any fixed $\ell \in [d]$, the set of possible solutions $(r_{\ell, 1},\dots , r_{\ell,m})$ to the equation $\sum_{j=1}^m r_{\ell, j} \cdot \sigma_i(j) = \vec v_{i}(\ell)$, considering $\vec \sigma_i$ and $\vec v$ as fixed, forms an affine subspace of rank $m-1$ which has size of $q^{m-1}$. As each $r_{i,j}$ is a uniformly random element from $\FF_q$, the probability,
\begin{equation}
\Pr[\vec x_i = \vec v]  = 	\prod_{\ell=1}^d \Pr \left[ \sum_{j=1}^m r_{\ell, j} \cdot \sigma_i(j) = \vec v(\ell)\right] = \left (\frac{q^{m-1}}{q^m} \right)^d= \frac 1 {q^d}. \label{eq:unifomrly_random_vector} 
\end{equation}

Next, we prove $k$-wise stochastic independence of the columns of $X$. Let $S \sse [n]$ be such that $|S| \leq k$. By assumption, the corresponding columns $\set{\sigma_i}_{i \in S}$ of $\Sigma$ are linearly independent. It follows that

\begin{align*}
  \Pr\left[ \bigwedge_{i \in
  S} \{\vec x_i = \vec v_i\} \right] \notag
	&= \Pr \left[ \bigwedge_{i \in S} \left\{ R \vec \sigma_i = \vec v_i \right\} \right] \notag \\
	&= \Pr \left[ \bigwedge_{i \in S} \bigwedge_{\ell=1}^d \left\{ \sum_{j=1}^m r_{\ell, j} \cdot \vec  \sigma_i(j) = \vec v_{i}(\ell) \right\} \right] \notag \\
	&= \Pr \left[ \bigwedge_{\ell=1}^d \bigwedge_{i \in S} \left\{ \sum_{j=1}^m r_{\ell, j} \cdot \sigma_i(j) = \vec v_{i}(\ell) \right\} \right] \notag  \\
	&= \prod_{\ell=1}^d \Pr \left[ \bigwedge_{i \in S} \left\{ \sum_{j=1}^m r_{\ell, j} \cdot \sigma_i(j) = \vec v_{i}(\ell) \right\} \right]. && (\text{disjoint set of independent RVs}) 
\end{align*}

Consider the following system of equations with variables $(r_{\ell,1}, \dots r_{\ell,m})$ for a fixed $\ell \in [d]$:
\begin{equation}\label{eq:system_of_eqns}
    \sum_{j=1}^m r_{\ell, j} \cdot \sigma_i(j) = \vec v_{i}(\ell) \qquad \forall i \in S.
\end{equation}
Since  $\{\sigma_i \mid i \in S\}$ are linearly independent, the set of solutions $(r_{\ell,1}, \dots r_{\ell,m})$ to the system of equations~\eqref{eq:system_of_eqns} forms an affine subspace of dimension $m-|S|$ containing $q^{m-|S|}$ many vectors. As each $r_{i,j}$ is a uniformly random element from $\FF_q$, the probability,

\begin{align*}
\Pr\left[ \bigwedge_{i \in S} \{ \vec x_i = \vec v_i\} \right] &= \prod_{\ell=1}^d \Pr \left[ \bigwedge_{i \in S} \left\{ \sum_{j=1}^m r_{\ell, j} \cdot \sigma_{i}(j) = \vec v_{i}(\ell) \right\} \right]\\
& = \left(\frac{q^{m-|S|}}{q^m}\right)^d \\
&= q^{-d\cdot |S|}\\
& =  \prod_{i \in S} \Pr[ \vec x_i = \vec v_i] .
\end{align*}
Above, the second equality follows from the fact that $r_{i,j}$ is a uniformly random element from $\FF_q$ and the last equality follows from Equation~\eqref{eq:unifomrly_random_vector}. This proves that the random vectors $\{\vec x_i\}_{i\in S}$ are mutually independent, as needed.
\end{proof}

Next, we present a well-known fact about random matrices defined over finite fields. For the self containment of the paper, we provide its simple proof in Appendix~\ref{sec:missingproofsTool}. 

\begin{restatable}{lemma}{lemLinearIndep}
\label{lem:linear_indep}
Let $R\in \FF_q^{d \times m}$ be a uniformly random matrix with entries $r_{i,j} \sim \operatorname{Unif}\left( 0,1,\dots ,q-1\right)$. Then, for any $m<d$, we have $\Pr[\rank(R)=m] \geq 1-\frac{1}{q^{d-m}}.$ In addition, for any $\Sigma \in \FF_q^{m\times n}$ with distinct columns, the columns of $X = R\Sigma$ are distinct with probability $\geq 1-\frac{1}{q^{d-m}}$.
\end{restatable}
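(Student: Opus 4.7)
The plan is to prove the two claims separately, with the second following from the first via a simple observation about injective linear maps.

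For the first claim, I would expose the columns $\vec r_1, \ldots, \vec r_m$ of $R$ one at a time and track the probability that each newly revealed column avoids the span of its predecessors. Since the entries of $R$ are independent uniform samples from $\FF_q$, the columns are mutually independent and each is uniform over $\FF_q^d$. Conditioned on $\vec r_1, \ldots, \vec r_{i-1}$ being linearly independent, their span has exactly $q^{i-1}$ elements, so the conditional probability that $\vec r_i$ falls outside this span is $1 - q^{i-1-d}$. Chaining these conditional probabilities gives the exact formula $\Pr[\rank(R) = m] = \prod_{i=1}^{m}(1 - q^{i-1-d})$. I would then apply the elementary inequality $\prod_i(1 - x_i) \geq 1 - \sum_i x_i$ and sum the resulting geometric series, using $q \geq 2$ to simplify $\frac{q^m - 1}{q-1} \leq q^m$, to arrive at the clean bound $1 - q^{-(d-m)}$.

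For the second claim, I would observe that the columns of $\Sigma$ being distinct is equivalent to each pairwise difference $\vec \sigma_i - \vec \sigma_j$ (for $i \neq j$) being a nonzero vector in $\FF_q^m$. Since $\vec x_i - \vec x_j = R(\vec \sigma_i - \vec \sigma_j)$, distinctness of the columns of $X$ is equivalent to $R$ sending each of these nonzero differences to a nonzero vector in $\FF_q^d$. Whenever $\rank(R) = m$, the map $R\colon \FF_q^m \to \FF_q^d$ is injective and therefore kills no nonzero vector, so the event $\{\rank(R) = m\}$ is contained in the event that the columns of $X$ are distinct. The bound from the first part then transfers directly.

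I do not anticipate any real obstacle: both claims are standard linear-algebra facts over finite fields, and the only mildly fiddly step is the arithmetic that converts the exact product formula into the clean $1 - q^{-(d-m)}$ form, which needs only the elementary geometric-series estimate above together with $q \geq 2$.
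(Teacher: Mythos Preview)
Your proposal is correct and takes essentially the same approach as the paper: both arguments reveal the columns of $R$ one at a time and bound the probability that each new column avoids the span of its predecessors, with the paper packaging this as an induction on $m$ while you write out the exact product and apply $\prod(1-x_i)\geq 1-\sum x_i$. You also spell out the second claim (that $\rank(R)=m$ implies the columns of $R\Sigma$ are distinct), which the paper's appendix proof leaves implicit.
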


The last lemma essentially states that when $d>>m$, the random matrix $R \in \FF_q^{d \times m}$ becomes full rank. Therefore, the \emph{random linear map} under $R$ becomes a random linear ``embedding'' (or an injective mapping)  with high probability. In such a scenario, when a matrix $\Sigma \in \FF_q^{m \times n}$ consists of distinct columns, the product $X = R \Sigma$ will have distinct columns with high probability.

\subsection{Unordered Pairwise Independent Vector Families}
In the context of stochastic selection, we typically focus on probability distributions over sets of elements. This requires us to construct an (unordered) set of vectors $A$ such that for any two distinct vectors $\vec u$ and $\vec v$, the events $\set{\vec u \in A}$ and $\set{\vec v\in A}$ are independent.

In the previous section, we introduced a technique for producing a sequence $\vec{x}_1, \ldots, \vec{x}_n \in\FF_q^d$ of vectors that exhibit pairwise (indeed $k$-wise) stochastic independence.
To turn this into a set of vectors $A$, it is tempting to define $A = \set{\vec x_1,\ldots,\vec x_n}$. However, some thought reveals that membership in this set $A$ can exhibit some slight negative correlation. Even in the (high probability) event that the vectors $\vec x_1, \ldots, \vec x_n$ are distinct, for distinct $\vec u, \vec v \in \FF_q^d$ we have  $ \Pr[\vec u \in A] = \Pr[ \vec v \in A]  = \frac{n}{q^d}$, and yet $\Pr [ \vec u \in A \land \vec v \in A ] =  \frac{n (n-1)}{q^{2d}} < \Pr [\vec u  \in A ] \Pr [ \vec v \in A] $. It might be tempting to ``correct'' for this negative correlation in $A$ by slightly mixing with a different set $A'$  featuring  positive correlation that is calibrated to ``cancel out'' the negative correlation in $A$. However, this is further complicated by the (low probability) event where the random linear map used in our construction is singular, and the vectors $\vec x_1, \ldots, \vec x_n$ are therefore not necessarily distinct. This makes it challenging to quantify the exact ``amount'' of correlation in $A$ in general, and therefore challenging to argue for the existence of a suitable $A'$. To circumvent these difficulties, we take a somewhat different approach which creates $n$ labeled duplicates for each vector. We describe this approach next. 

Recall the construction of ordered families from the previous section:  The matrix $\Sigma \in \mathbb{F}_q^{m \times n}$ is comprised of pairwise linearly independent column vectors $\sigma_1, \dots, \sigma_n$,  the matrix  $R \in \FF_q^{d \times m}$ has its entries $r_{i,j}$ drawn i.i.d. uniformly from $\FF_q$, and the matrix  $X=[\vec x_1, \ldots, \vec x_n] = R \cdot \Sigma$ is the image of $\Sigma$ under the linear transformation defined by matrix $R$. We create $n$ labeled copies $\vec v^{\ell_1} , \dots , \vec v^{\ell_n}$ of each vector $\vec v \in \FF_q^d$,  and include $\vec v^{\ell_i} \in A$ whenever the $i$-th column of $X$ equals $\vec v$, i.e. $\vec x_i =\vec v$. This allows the inclusion of each vector $\vec v$ in the set $A$ multiple times, each occurrence distinguished by a unique \emph{label} from a set of labels $L$. $A$ is therefore a subset of $\FF_q^d \times L$. Without loss of generality, one can think of $L = [n]$.

\begin{figure}
\begin{procedure}{Random Set Generation from Random Matrix (MatrixToSet)}\label{proc:construct_pw_set}
    \textbf{Input:} $X \in \FF_q^{d \times n}$ with pairwise independent columns $\vec x_1, \dots \vec x_n$ and an ordered label set $L = \{\ell_1 , \dots , \ell_n\}$,
    
    \textbf{Output:} $A \subseteq \FF_q^{d} \times L$.

    \textbf{Dist - I : } $\mathcal D_1$
     \begin{enumerate}[label={(\arabic*)}]
		\item Construct an unordered set $A: = \{\vec x_i^{\ell_i} : i\in [n] \} \subseteq \FF_q^d \times L$. 
     \end{enumerate}

    \textbf{Dist II : }$\mathcal D_2$
    \begin{enumerate}[label={(\arabic*)}]
        \item Start with $A=\emptyset$. Then, for each label $\ell_i \in [n]$, independently with probability $\frac{1}{q^d}$, include all elements $\{ \vec v^{\ell_i} : \vec v\in \FF_q^d \}$ in $A$.
    \end{enumerate}
   
\textbf{Sample} $A \subseteq \FF_q^d \times L$ from $\mathcal D_1$ w.p $1-\frac{1}{q^{d}}$ and $\mathcal D_2$ w.p. $\frac 1 {q^d}$.
\end{procedure}
\end{figure}
Let us proceed to understand correlation structure within this set $A$. First, we observe that for distinct indices $i,j \in [n]$ and any two (possibly equal) vectors $\vec v, \vec u \in \FF_q^d$, the events $\set{\vec v^{\ell_i} \in A}$ and $\set{\vec u^{\ell_j} \in A}$ are independent as events $\set{\vec x_i=\vec v}$ and $\set{\vec x_j=\vec u}$ are assured to be independent by Lemma~\ref{lem:rotation}. Nonetheless, this method introduces correlation between the events $\{\vec v^{\ell_i} \in A\}$ and $\{\vec u^{\ell_i} \in A\}$ since $\Pr[\vec u^{\ell_i} \in A \wedge \vec v^{\ell_i} \in A] = 0$ for any two distinct vectors $\vec v, \vec u \in \FF_q^d$ and index $i \in [n]$ (or label $\ell_i \in L$).
To overcome this, we mix $A$ with another distribution that positively correlates the inclusion of labeled vectors with the same label without introducing dependencies across different labels, nor changing the marginals. We describe our procedure of converting ordered families of vectors to unordered families of vectors in Procedure~\ref{proc:construct_pw_set}. 

In the subsequent lemma, we demonstrate that the unordered family of vectors sampled according to Procedure~\ref{proc:construct_pw_set} is pairwise independent provided the ordered family of vectors $\vec x_1,\dots , \vec x_n$ are pairwise independent and uniformly sampled from $\FF_q^d$. To maintain a smooth and uninterrupted discussion, we have postponed the detailed technical proof to Appendix~\ref{sec:missingproofsTool}.

\begin{restatable}{lemma}{lemSetConstruct}\label{lem:setConstruct}
    Let $X \in \FF_q^{d \times n}$ be a random matrix with pairwise independent column vectors $\vec x_1, \dots \vec x_n$ for some $n < q^d$ where each vector $\vec x_i$ is distributed uniformly in $\FF_q^d$. Then a random set $A \subseteq \FF_q^d \times L$ generated by Procedure~\ref{proc:construct_pw_set} given the inputs of $X$ and $L=\set{\ell_1, \dots \ell_n}$ satisfies:
    \begin{enumerate}[label={(\arabic*)}]
    	\item For any $\vec v^{\ell_i} \in \FF_q^d \times L$, $\Pr[\vec v^{\ell_i}  \in A]= \frac{1}{q^d}$.
    	\item For any two distinct $\vec v^{\ell_i}, \vec u^{\ell_j} \in \FF_q^d \times L$, the events $\{\vec v^{\ell_i} \in A\}$ and $\{ \vec u^{\ell_j} \in A\}$ are independent. 
     \end{enumerate}
\end{restatable}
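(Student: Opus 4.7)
The plan is a direct computation by the law of total probability, splitting on the two components of the mixture distribution defined in Procedure~\ref{proc:construct_pw_set}. Write $\alpha = 1-\frac{1}{q^d}$ and $\beta = \frac{1}{q^d}$ for the mixing weights of $\mathcal D_1$ and $\mathcal D_2$ respectively. For the marginal claim, I would observe that under $\mathcal D_1$ the event $\{\vec v^{\ell_i} \in A\}$ is exactly the event $\{\vec x_i = \vec v\}$, which has probability $\frac{1}{q^d}$ by the uniform marginal hypothesis of the lemma; and under $\mathcal D_2$ the event $\{\vec v^{\ell_i} \in A\}$ is exactly the event that label $\ell_i$ is ``activated,'' which also happens with probability $\frac{1}{q^d}$ by construction. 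Mixing gives $\alpha \cdot \frac{1}{q^d} + \beta \cdot \frac{1}{q^d} = \frac{1}{q^d}$, establishing (1).

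For the pairwise-independence claim (2), I would fix two distinct labeled vectors $\vec v^{\ell_i}, \vec u^{\ell_j} \in \FF_q^d \times L$ and distinguish two cases, in each case showing that the joint probability equals $\frac{1}{q^{2d}} = \Pr[\vec v^{\ell_i} \in A]\,\Pr[\vec u^{\ell_j} \in A]$. In the case $\ell_i \neq \ell_j$, under $\mathcal D_1$ the joint event is $\{\vec x_i = \vec v\} \wedge \{\vec x_j = \vec u\}$, whose probability is $\frac{1}{q^{2d}}$ by the pairwise independence and uniform marginals of $\vec x_i, \vec x_j$; under $\mathcal D_2$ the joint event requires both labels $\ell_i$ and $\ell_j$ to be activated, which happens with probability $\frac{1}{q^{2d}}$ by the independence of the label-activation coins. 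Mixing yields $\frac{1}{q^{2d}}$.

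The interesting case is $\ell_i = \ell_j$ (so the vectors must be distinct, i.e.\ $\vec v \neq \vec u$). Under $\mathcal D_1$ the two events $\{\vec x_i = \vec v\}$ and $\{\vec x_i = \vec u\}$ are mutually exclusive, contributing $0$; under $\mathcal D_2$ both events hold exactly when label $\ell_i$ is activated, which happens with probability $\frac{1}{q^d}$. Mixing gives $\alpha \cdot 0 + \beta \cdot \frac{1}{q^d} = \frac{1}{q^{2d}}$, as required. This case is the crux of the construction: the positive correlation deliberately injected by $\mathcal D_2$ (activating an entire label at once) exactly compensates the negative correlation induced by the mutual exclusion $\Pr_{\mathcal D_1}[\vec v^{\ell_i} \in A \wedge \vec u^{\ell_i} \in A] = 0$, and the mixing weight $\frac{1}{q^d}$ is chosen precisely so that the two contributions balance to the product of the marginals.

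There is no real obstacle beyond bookkeeping; the only subtlety worth emphasizing in the writeup is that the choice of mixing weight $\beta = \frac{1}{q^d}$ is not arbitrary but is forced by the calculation in the same-label case, and that the different-label case works regardless thanks to pairwise independence of the columns of $X$ under $\mathcal D_1$ and independence of the label coins under $\mathcal D_2$.
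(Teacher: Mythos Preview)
Your proposal is correct and follows essentially the same approach as the paper's proof: split on the two mixture components, compute marginals and joint probabilities directly, and handle the same-label case by observing that the $\mathcal D_1$ contribution vanishes while the $\mathcal D_2$ contribution supplies exactly $\frac{1}{q^d}\cdot\frac{1}{q^d}$. Your added remark that the mixing weight $\beta=\frac{1}{q^d}$ is forced by the same-label calculation is a nice touch not made explicit in the paper.
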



\section{Pairwise-independent Contention Resolution on Matroids}\label{sec:pw-CRS-hardness} 
    In this section, we utilize the tool devised in the previous section to show the limits of pairwise-independent contention resolution schemes for matroids. In particular, we show that the class of linear matroids does not admit an $\omega(1/\rank)$-balanced pairwise independent offline CRS. We complement this  with a pairwise independent online CRS against the almighty adversary whose balance ratio matches our impossibility result for offline schemes up to a constant factor. 

\subsection{Limits of Pairwise Independent Contention Resolution}

We consider the matroid $\M^{\times d} = (\FF_q^d \times [d], \I^{\times d})$ which consists of $d$ labeled copies of each element in the full linear matroid $\FF_q^d$.
The following theorem states the limits of pairwise independent offline contention resolution for this matroid.

\begin{theorem}\label{thm:CRSmain}
    For any $d>2$ and any prime $q$, the matroid $\FF_q^d \times [d]$ does not admit a $\frac{c+1}{d}$-balanced pairwise independent offline CRS where $c$ is any positive integer satisfying $q^{c-1} \geq d$. 
    In particular,
    \begin{enumerate}
        \item $\FF_q^d \times [d]$ does not admit a $\frac{3}{d}$-balanced pairwise independent offline CRS for $q \geq d$.
        \item $\FF_2^d \times [d]$ does not admit a $\frac{3+\log_2 d}{d}$-balanced pairwise independent offline CRS. 
    \end{enumerate}
\end{theorem}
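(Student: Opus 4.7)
The plan is to apply Theorem~\ref{thm:shaddins characterization} to a pairwise-independent distribution $\mathcal D$ over subsets $A \subseteq \FF_q^d \times [d]$ produced by our recipe in Section~\ref{sec:tool}. I will tailor $\mathcal D$ so that it has pairwise independent support with feasible marginals, and simultaneously $\Ex[\rank(A)] \le c + d/q^d$ while $\Ex[|A \cap F|] \approx d$ for an appropriate $F$; together these force the balance ratio of any offline CRS to be strictly below $(c+1)/d$.

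For the construction, first observe that since $q^{c-1} \ge d$, the projective space of $\FF_q^c$ contains $(q^c-1)/(q-1) \ge q^{c-1} \ge d$ points, so one can pick a matrix $\Sigma \in \FF_q^{c \times d}$ whose $d$ columns are pairwise linearly independent. I then sample $R \in \FF_q^{d \times c}$ uniformly and set $X = R \Sigma$; by Lemma~\ref{lem:rotation}, the columns $\vec x_1,\ldots,\vec x_d$ of $X$ are uniform in $\FF_q^d$ and pairwise independent. Feeding $X$ and labels $L = [d]$ into (a variant of) Procedure~\ref{proc:construct_pw_set} produces a random set $A \subseteq \FF_q^d \times [d]$ whose element memberships are pairwise-independent Bernoullis with parameter $1/q^d$, as guaranteed by Lemma~\ref{lem:setConstruct}. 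The crucial structural property is that in the Dist~I branch (probability $1 - 1/q^d$) every element of $A$ lies in $\cls(R)$, which has rank at most $c$; only in the Dist~II branch (probability $1/q^d$) can $\rank(A)$ exceed $c$, and there it is trivially bounded by $d$.

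Taking expectations then yields $\Ex[\rank(A)] \le (1-1/q^d)\, c + (1/q^d)\, d \le c + d/q^d$, and restricting to the non-loop set $F = (\FF_q^d \setminus \{\vec 0\}) \times [d]$ yields $\Ex[|A \cap F|] = d(1 - 1/q^d)$. Invoking condition~(3) of Theorem~\ref{thm:shaddins characterization}, any $\beta$-balanced offline CRS must satisfy
\[ \beta \;\le\; \frac{\Ex[\rank(A \cap F)]}{\Ex[|A \cap F|]} \;\le\; \frac{c\, q^d + d}{d(q^d - 1)}, \]
and an elementary rearrangement shows this bound is strictly less than $(c+1)/d$ whenever $q^d > c + d + 1$. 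The latter inequality is readily implied by $q^{c-1} \ge d$ together with $d > 2$ in the nontrivial regime $c \le d-2$. The two explicit corollaries then follow by substituting $c = 2$ with $q \ge d$, and $c = 1 + \lceil \log_2 d \rceil$ with $q = 2$.

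The main obstacle is a subtle but unavoidable technicality around the loop elements $\vec 0^i$: Procedure~\ref{proc:construct_pw_set} as stated assigns them marginal $1/q^d$, which violates membership of the marginal vector in the matroid polytope of $\FF_q^d \times [d]$ (so any CRS would vacuously fail to be balanced on those elements). I plan to address this either by invoking part~(3) of Theorem~\ref{thm:shaddins characterization} applied to the non-loop set $F$ --- which sidesteps any direct reference to polytope-feasibility --- or, equivalently, by modifying both branches of Procedure~\ref{proc:construct_pw_set} to suppress $\vec 0$ and then re-running the case analysis of Lemma~\ref{lem:setConstruct} to verify that pairwise independence with uniform marginal $1/q^d$ is preserved on $F$. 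The remaining ingredients --- the existence of pairwise linearly independent columns of $\Sigma$, and the rank bound $\rank(A) \le \rank(\cls(R)) \le c$ under Dist~I --- are short counting and linear-algebra observations.
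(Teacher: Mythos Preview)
Your proposal follows essentially the same route as the paper: construct $\Sigma\in\FF_q^{c\times d}$ with pairwise linearly independent columns (the paper's Claim~\ref{claim:sigma_crs_pw_linear_indep}), apply the random linear map $R$, feed $X=R\Sigma$ into Procedure~\ref{proc:construct_pw_set}, bound $\Ex[\rank(A)]\le c + d/q^d$ via the $\D_1/\D_2$ split, compute $\Ex[|A|]$, and conclude via Theorem~\ref{thm:shaddins characterization}. The paper takes $F=E$ and simply bounds $\Ex[\rank(A)]/\Ex[|A|]\le (c+1)/d$, after verifying $\vec\mu\in\P(\M^{\times d})$ in a separate lemma (Lemma~\ref{lem:A in polytope}); you take $F=(\FF_q^d\setminus\{\vec 0\})\times[d]$ and do a slightly sharper arithmetic comparison. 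These are the same argument with cosmetic differences in bookkeeping.

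One comment on your handling of the loop issue. You are right that Procedure~\ref{proc:construct_pw_set} as written gives $\Pr[\vec 0^i\in A]=1/q^d>0$, which violates $\vec\mu\in\P(\M^{\times d})$ (indeed the paper's proof of Lemma~\ref{lem:A in polytope} silently divides by $\rank(S)$ and so breaks when $S$ consists only of loops). However, your ``fix~1'' --- applying Theorem~\ref{thm:shaddins characterization}(3) with $F$ equal to the non-loops --- does not by itself sidestep polytope feasibility: the theorem you are trying to prove is about nonexistence of a CRS over the class of pairwise-independent distributions \emph{with marginals in $\P(\M)$}, so your hard distribution must itself lie in that class to serve as a counterexample. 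Your ``fix~2'' is the correct resolution, and it is even simpler than you suggest: just replace $A$ by $A'=A\setminus(\{\vec 0\}\times[d])$. For non-loops the marginals and all pairwise joints are unchanged, and for loops the marginal becomes $0$, so pairwise independence and polytope feasibility both hold trivially. After this replacement your computation with $F$ and the paper's computation with $F=E$ coincide.
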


Our construction of a hard offline contention resolution instance utilizes the tool devised in Section~\ref{sec:tool} to sample pairwise independent unordered family of vectors. This unordered family will serve as the set of active elements for contention resolution. We first form a ``fat'' matrix $\Sigma \in \FF_q^{c \times d}$ ($c<<d$) with pairwise linearly independent columns. The existence of such a matrix can be assured by choosing $q$ and $c$ so that $d \leq q^{c-1}$ as we show in Claim~\ref{claim:sigma_crs_pw_linear_indep}.  
Then, linear transformation of $\Sigma$ via a uniformly random matrix $R \in \FF_q^{d \times c}$ yields a matrix $X=R \Sigma$ whose columns $\vec x_1, \dots \vec x_d$ are pairwise stochastically independent  due to Lemma~\ref{lem:rotation}.
Finally, we use Procedure~\ref{proc:construct_pw_set} to turn the sequence $(\vec x_1, \dots \vec x_d)$ into the set  $A \subseteq \FF_q^d \times [d]$ of active elements. Lemma~\ref{lem:setConstruct} ensures that for any two distinct elements $\vec v^i , \vec u^i \in \FF_q^d \times [d]$, the events $\{\vec v^i \in A\}$ and $\{\vec u^j \in A\}$ are pairwise independent. We summarize this construction in Procedure~\ref{proc:construct_crs_active_elements}.
\begin{figure}
\begin{procedure}{Pairwise Independent Set of Active Elements in $\M^{\times d} = (\FF^{d}_q \times [d], \I^{\times d})$}\label{proc:construct_crs_active_elements}
    \textbf{Input: } Positive integers $d,c$ and a prime $q$ satisfying $q^{c-1} \geq d$.

    \textbf{Output: } $A \subseteq \FF_q^d \times [d]$.

    \begin{enumerate}[label={(\arabic*)}]
        \item Let $\Sigma^{\text{CRS}} \in \FF_q^{c \times d}$ be an arbitrary matrix with pairwise linearly independent column vectors.
        \item Let $R\in \FF_q^{d \times c}$ be a matrix where each entry $r_{ij}$ is uniformly and independently sampled from $\set{0, \dots q-1}$.
        \item Set $A \leftarrow \MtS(R\cdot \Sigma^{\text{CRS}} , [d])$.
    \end{enumerate}
\end{procedure}
\end{figure}

The following claim together with Lemma~\ref{lem:setConstruct} implies that the output of Procedure~\ref{proc:construct_crs_active_elements} follows a pairwise independent distribution over $\FF_q^{d}\times [d]$.
\begin{claim}
    \label{claim:sigma_crs_pw_linear_indep}
    Given that $d \leq q^{c-1}$, there exists a matrix $\Sigma^{\text{CRS}} \in \FF_q^{c \times d}$ which consists of $d$ pairwise linearly independent columns.
\end{claim}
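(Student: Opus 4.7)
The plan is to count the number of one-dimensional subspaces of $\FF_q^c$ and show that it comfortably exceeds $d$, which lets us pick representatives from distinct lines through the origin to populate the columns of $\Sigma^{\text{CRS}}$.

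First I would recast the problem in projective terms. Two nonzero vectors $\vec u, \vec v \in \FF_q^c$ are linearly dependent precisely when $\vec u = \lambda \vec v$ for some $\lambda \in \FF_q^{\times}$, i.e.\ when they span the same one-dimensional subspace. Hence a family of columns is pairwise linearly independent if and only if they lie on pairwise distinct lines through the origin, i.e.\ correspond to distinct points of the projective space $\mathbb{P}^{c-1}(\FF_q)$.

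Next I would do the counting. There are $q^c - 1$ nonzero vectors in $\FF_q^c$, and each one-dimensional subspace contains exactly $q - 1$ of them, so the number of one-dimensional subspaces is
\[
\frac{q^c - 1}{q - 1} \;=\; 1 + q + q^2 + \cdots + q^{c-1} \;\geq\; q^{c-1}.
\]
By the hypothesis $d \leq q^{c-1}$, there are at least $d$ distinct lines through the origin.

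Finally I would conclude by construction: choose any $d$ distinct one-dimensional subspaces of $\FF_q^c$ and pick one nonzero vector from each; assembling these as the columns of a matrix yields the desired $\Sigma^{\text{CRS}} \in \FF_q^{c \times d}$. Since the approach is purely a counting argument followed by a selection, there is no real obstacle; the only subtlety worth flagging is that we are using pairwise \emph{linear} independence in the sense of ``no two columns are scalar multiples of each other,'' which is strictly weaker than genuine linear independence of the set and is exactly what the projective count captures.
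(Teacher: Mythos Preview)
Your proof is correct and essentially identical to the paper's: both count the one-dimensional subspaces of $\FF_q^c$ as $(q^c-1)/(q-1)\geq q^{c-1}\geq d$ and pick one representative from each of $d$ distinct lines. The only cosmetic difference is that you phrase it in projective language while the paper speaks of equivalence classes under scalar multiplication.
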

\begin{proof}
    Notice that there are $q^{c}-1$ non-zero vectors in $\FF_q^c$, and linear dependence partitions them into equivalence classes of size $q-1$. Therefore, by selecting one vector from each equivalence class one can generate $\frac{q^{c}-1}{q-1}>q^{c-1}$ pairwise linearly independent columns. When $d \leq q^{c-1}$, the matrix $\Sigma^{\text{CRS}}$ always exists.
\end{proof}

Before we show that $A$, the output of Procedure~\ref{proc:construct_crs_active_elements}, does not admit a $\frac{c+1}{d}$-balanced contention resolution scheme, we first verify that the marginals of $A$ reside within the matroid polytope $\P(\M^{\times d})$.
\begin{lemma}\label{lem:A in polytope}
    Let $\vec \mu \in \R_+^{\FF_q^d \times [d]}$ be the marginal probability vector of $A$ where $\mu(\vec v^i)=\Pr[\vec v^i \in A]$ when $A$ is sampled according to Procedure~\ref{proc:construct_crs_active_elements}. Then,
    $\vec \mu \in \P(\M^{\times d})$. 
\end{lemma}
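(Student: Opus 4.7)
The plan is to deduce $\vec{\mu} \in \P(\M^{\times d})$ by comparing $\vec{\mu}$ coordinate-wise to the marginal vector of a \emph{uniformly random base} of $\M^{\times d}$ --- which sits in the polytope by construction --- and invoking the downward closure of the matroid polytope under coordinate-wise domination.

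First, I would chain together our already-established facts to pin down $\vec{\mu}$. By Claim~\ref{claim:sigma_crs_pw_linear_indep} the matrix $\Sigma^{\text{CRS}}$ has $d$ pairwise linearly independent columns, so Lemma~\ref{lem:rotation} (with $k=2$, $m=c$, $n=d$) tells us that the columns of $R \cdot \Sigma^{\text{CRS}}$ form a pairwise stochastically independent family with each column uniformly distributed on $\FF_q^d$. The hypotheses of Lemma~\ref{lem:setConstruct} are then satisfied, and it yields $\mu(\vec v^i) = 1/q^d$ for every $\vec v^i \in \FF_q^d \times [d]$.

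Next, let $B$ be a uniformly random base of $\M^{\times d}$ and let $\vec{\nu}$ be its marginal vector, so that $\vec{\nu}$ is a convex combination of indicators of independent sets of $\M^{\times d}$ and hence $\vec{\nu} \in \P(\M^{\times d})$. I would compute $\vec \nu$ via symmetry. Since $\vec 0$ is a loop in the full linear matroid on $\FF_q^d$, each parallel copy $\vec 0^i$ is excluded from every base, giving $\nu(\vec 0^i)=0$. The group $\mathrm{GL}_d(\FF_q) \times \mathrm{Sym}([d])$ acts on $\FF_q^d \times [d]$, preserving the collection of bases and acting transitively on the non-loop parallel copies; hence $\nu(\vec v^i)$ takes the same value for every $\vec v \neq \vec 0$ and every $i \in [d]$. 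Since $|B|=d$ almost surely and there are $d(q^d-1)$ non-loop parallel copies, this common value must equal $\frac{1}{q^d-1}$.

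The proof concludes with the coordinate-wise bound $\mu(\vec v^i) = \frac{1}{q^d} \leq \frac{1}{q^d-1} = \nu(\vec v^i)$ for every non-loop $\vec v^i$. Combined with the downward-closure of $\P(\M^{\times d})$ --- if $\vec \nu$ satisfies $\nu(S) \leq \rank(S)$ for all $S$, so does any $\vec \mu \leq \vec \nu$ --- this yields $\vec \mu \in \P(\M^{\times d})$. The one subtle point, which is also the main obstacle to a fully clean proof, is the treatment of the loop coordinates $\vec 0^i$: these have positive $\mu$-mass but must contribute $0$ on any polytope side, so I would handle them by restricting attention to non-loop elements when verifying the polytope inequalities (loops contribute $0$ to $\rank$ and can be moved to both sides without affecting validity), allowing the comparison with $\vec \nu$ to go through.
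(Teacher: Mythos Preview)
Your approach is correct and takes a genuinely different route from the paper's. The paper verifies the polytope constraints directly by counting: for any $S \subseteq \FF_q^d \times [d]$ it bounds $|S| \leq d \cdot q^{\rank(S)}$ (the underlying vectors lie in a subspace of dimension $\rank(S)$, and there are $d$ labels), then uses monotonicity of $x \mapsto x q^{-x}$ on $[1,d]$ to conclude $\mu(S) = |S|/q^d \leq \rank(S)$. You instead exhibit an explicit polytope point $\vec \nu$ --- the marginals of a uniformly random base, computed by the $\mathrm{GL}_d(\FF_q)\times \mathrm{Sym}([d])$ symmetry --- and invoke coordinate-wise domination plus downward closure of $\P(\M^{\times d})$. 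The paper's argument is a line shorter and entirely elementary; yours is more structural, makes the slack $\frac{1}{q^d}$ versus $\frac{1}{q^d-1}$ explicit, and would transfer cleanly to any matroid with a transitive automorphism group on non-loops.

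One correction on the loop issue you flag: your parenthetical fix is not right. Adding a loop to $S$ increases $\mu(S)$ while leaving $\rank(S)$ unchanged, so loops cannot be ``moved to both sides without affecting validity'' --- for $S=\{\vec 0^i\}$ the constraint $\mu(S)\leq \rank(S)=0$ is outright violated, and comparing with $\vec \nu$ does not help since $\nu(\vec 0^i)=0<\mu(\vec 0^i)$. The paper's own proof has exactly the same defect: its step $\frac{d\cdot q^{\rank(S)}}{\rank(S)\cdot q^d}\leq 1$ silently divides by zero when $\rank(S)=0$. The honest repair in both cases is to restrict to the loopless submatroid $(\FF_q^d\setminus\{\vec 0\})\times [d]$, equivalently to zero out the $d$ loop marginals; this is harmless for Theorem~\ref{thm:CRSmain} since the total loop mass is only $d/q^d<1$.
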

\begin{proof}
    By Lemma~\ref{lem:setConstruct} we know that $\mu(\vec v^i) = \frac{1}{q^{d}}$ for each $\vec v^i \in \FF_q^d$. Thus, for any subset $S \subseteq \mathbb F_q^{d} \times [d]$, observe that 
    $$\mu(S):= \sum_{\vec v^i \in S} \mu(\vec v^i) = |S| \cdot \frac{1}{q^{d}} \leq \frac{d \cdot q^{\rank(S)}}{q^{d}} = \frac{d \cdot q^{\rank(S)}}{\rank(S) \cdot q^{d}} \cdot \rank(S)\leq \rank(S).$$
    Above, the first inequality holds because $|S| \leq d\cdot q^{\rank(S)}$ as we have $d$ copies of each $\vec v\in \FF_q^{d}$ in $\mathbb F_q^{d} \times [d]$. The last inequality follow from the fact that $x q^{-x}$ is a decreasing function of $x$ and $\rank(S) \leq d$, hence, $\frac{d \cdot q^{\rank(S)}}{\rank \cdot q^{d}} \leq 1$. Thus, $\vec \mu \in \P(\M^{\times d})$.
\end{proof}

Now we are ready to prove Theorem~\ref{thm:CRSmain}.
\begin{proof}[Proof of Theorem~\ref{thm:CRSmain}]
    Let $q$ be any prime, $c$ and $d$ be two integers satisfying $q^{c-1} \geq d$, and $A$ be the random set sampled according to Procedure~\ref{proc:construct_crs_active_elements}. 
    By Lemma~\ref{lem:setConstruct} and Claim~\ref{claim:sigma_crs_pw_linear_indep} we know that $A$ follows a pairwise independent distribution and Lemma~\ref{lem:A in polytope} demonstrates that the set $A$ has marginals inside the matroid polytope $\P(\M^{\times d})$. 
    
        The final step of Procedure~\ref{proc:construct_crs_active_elements} invokes Procedure~\ref{proc:construct_pw_set}. Recall that Procedure~\ref{proc:construct_pw_set} mixes distributions $\D_1$ and $\D_2$. Then,     
    \begin{align*}
        \Ex[\rank(A)] &\leq \Ex[\rank(A) \mid A \sim \D_1] \cdot \Pr[A \sim \D_1] + \rank(\M^{\times d}) \cdot \Pr[A \sim \D_2]\\
        &\leq \Ex[\rank(A) \mid A \sim \D_1] + \frac{d}{q^d} && (\rank(\M^{\times d})=d)\\
        &\leq c + 1 && (q \geq 2).
    \end{align*}
    Moreover, due to Lemma~\ref{lem:setConstruct} we know that $\Pr[\vec v^i \in A] = \frac{1}{q^d}$ for any $\vec v^i \in \FF_q^d \times [d]$. Thus, 
    $$ \Ex[|A|] = \sum_{\vec v^i \in \FF_q^{d} \times [d]} \Pr[\vec v^i \in A] = \frac{1}{q^{d}} \cdot \left|\FF_q^{d} \times [d] \right| = d.$$
    Combining these two facts we obtain:
    $$ \frac{\Ex[\rank(A)]}{\Ex[|A|]} \leq \frac{c+1}{d}.$$
        Finally, the characterization of distributions permitting contention resolution  from Theorem~\ref{thm:shaddins characterization} concludes the proof.
    
\end{proof}

\subsection{Optimal Pairwise Independent OCRS}

Consider a matroid $\M=(E, \I)$ and a randomly chosen set of active elements $A \subseteq 2^E$, which is sampled from a pairwise independent distribution $\D \in \Delta_{\text{pw}}(2^E)(\vec \mu)$ with marginals $\vec \mu \in \P(\M)$. In online contention resolution, the elements $e \in E$ arrive in some order chosen by an adversary, at which point the CRS algorithm learns whether $e$ is in $A$ and must irrevocably decide whether or not to select  $e$ for inclusion in its solution.  The algorithm is required to adhere to the feasibility constraints of the matroid during these inclusions.
We assume that the arrival order is chosen by the almighty adversary who is aware of the algorithm and all random outcomes, including those internal to the algorithm.

In the previous section we showed that for a class of linear matroids no offline CRS can attain a balance ratio greater than $\frac{3}{\text{rank}(\mathcal{M})}$ for pairwise independent distributions. The following theorem shows the existence  a $\frac{4}{\text{rank}(\mathcal{M})}$-balanced online CRS against the almighty adversary for all matroids.

\begin{theorem}
    For any matroid $\M=(E, \I)$, an arbitrary vector $\vec \mu \in \P_{\mathcal I}$ and distribution $\mathcal D \in \Delta_{\text{pw}}(2^E)(\vec \mu)$, there exists $\frac{1}{4\rank(\M)}$-balanced pairwise independent online contention resolution scheme against the almighty adversary.
\end{theorem}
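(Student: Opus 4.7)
The plan is to analyze the natural sample-then-greedy algorithm. Set $p := \frac{1}{2\rank(\M)}$. Independently for each $e \in E$, sample $c_e \sim \operatorname{Bernoulli}(p)$ a priori, and let $T := \set{e : c_e = 1}$. When $e$ arrives online, include $e$ in the output $I$ iff $e \in A$, $e \in T$, and $I \cup \set{e} \in \I$. The goal is to show $\Pr[e \in I \mid e \in A] \geq \frac{1}{4\rank(\M)}$ for every $e$ with $\mu_e > 0$; the case $\mu_e = 0$ is vacuous since then $\Pr[e \in A] = 0$.

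The key observation is that, even though the almighty adversary sees both $A$ and $T$ and chooses the arrival order with this knowledge, the event ``$e$ is feasible when it arrives'' reduces to a single order-free event. Letting $S_t$ denote the greedily-selected set before $e$'s arrival, $S_t \sse A \cap T \sm \set{e}$, so $e$ is feasible whenever $e \notin \spn(A \cap T \sm \set{e})$. Conversely, if $e \in \spn(A \cap T \sm \set{e})$ then there is an independent subset $B \sse A \cap T \sm \set{e}$ with $e \in \spn(B)$, and the adversary's best strategy is to place $B$ before $e$, after which greedy will have selected all of $B$, rendering $e$ infeasible. Therefore
\[
\Pr[e \in I \mid e \in A,\, e \in T] \;=\; 1 \,-\, \Pr\bigl[e \in \spn(A \cap T \sm \set{e}) \,\big|\, e \in A,\, e \in T\bigr].
\]

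The remaining step is to bound the span probability using pairwise independence and a first-moment (Markov) argument. Conditional on $e \in A$ and $e \in T$, pairwise independence of $A$ gives $\Pr[f \in A \mid e \in A] = \mu_f$ for $f \neq e$, and mutual independence of the $c_f$'s (together with $T \perp A$) gives $\Pr[f \in T \mid e \in T] = p$; hence the conditional marginal of $f \in A \cap T$ is exactly $p \mu_f$. Since $\mu_e > 0$ forces $e$ to not be a loop, the event $e \in \spn(A \cap T \sm \set{e})$ requires $\abs{A \cap T \sm \set{e}} \geq 1$, so by Markov combined with $\mu(E) \leq \rank(\M)$ (the matroid polytope constraint),
\[
\Pr\bigl[e \in \spn(A \cap T \sm \set{e}) \,\big|\, e \in A,\, e \in T\bigr] \;\leq\; \sum_{f \neq e} p\,\mu_f \;\leq\; p \cdot \rank(\M) \;=\; \tfrac{1}{2}.
\]
Since $T \perp A$, we have $\Pr[e \in T \mid e \in A] = p$, and combining yields $\Pr[e \in I \mid e \in A] \geq p \cdot \tfrac{1}{2} = \tfrac{1}{4\rank(\M)}$.

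The main step to verify carefully is the order-free reduction in the second paragraph: it relies on the matroid-specific fact that greedy will include every element of any independent subset of $A \cap T$ that arrives as a prefix, which is precisely what lets the almighty adversary's power be captured by the single event $e \in \spn(A \cap T \sm \set{e})$. Once this reduction is in place, pairwise independence enters only through the first moment of $\abs{A \cap T \sm \set{e}}$ conditioned on $e \in A \cap T$, and Markov suffices; no higher-order moments are needed.
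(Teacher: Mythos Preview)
Your proof is correct and follows essentially the same approach as the paper: both analyze the sample-then-greedy scheme with sampling probability $\frac{1}{2\rank(\M)}$ and lower-bound selection of $e$ by the event that $e$ is the sole element of $A \cap T$, concluding via pairwise independence and a union bound (your Markov step on $|A \cap T \setminus \{e\}|$ is exactly the paper's union bound on $\bigvee_{f \neq e} \EE_f$). Your intermediate span reduction is a correct sharpening that you then immediately relax back to the same size-one event the paper uses directly.
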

\begin{proof}
    We use the simple greedy algorithm that selects each active element with a probability of $\frac{1}{2 \cdot \text{rank}(\mathcal{M})}$ unless it violates feasibility. This algorithm internally flips an independent random coin for each element $e \in E$, where each coin lands on ``heads'' with  probability $\frac{1}{2 \cdot \text{rank}(\mathcal{M})}$. We denote the joint occurrence of an element being active and its corresponding coin flip resulting in heads by $\EE_e$, and the complement of this event is denoted by $\bar{\EE_e}$. 

    We note two crucial observations: First, events $\set{\EE_e}_{e \in E}$ are pairwise independent since events $\set{e \in A}_{e \in E}$ are pairwise independent. Second, given that the marginal probability vector $\vec{\mu}(e) := \Pr[e \in A]$ is within the matroid polytope $\mathcal{P}(\mathcal{M})$, the total sum of probabilities $\sum_{e \in E} \Pr[e \in A]$ is at most  $\text{rank}(\mathcal{M})$. Consequently, the sum $\sum_{e \in E} \Pr[\EE_e]$ is at most $\frac{1}{2}$. We now bound the balance ratio of the greedy algorithm as follows.
    
\begin{align*}
    \Pr[e {\mbox{ is selected}}] &\geq \Pr\left[\EE_e \wedge \bigwedge_{f \in E \setminus \set{e}} {\bar{\EE_f}}\right] && (\EE_e\text{ solely occurs})\\
    &={\left(1 - \Pr\left[\bigvee_{f \in E \setminus \set{e}} \EE_f \middle| \EE_e\right] \right)} \cdot \Pr[\EE_e]\\
    &\geq \left(1-\sum_{f \in E\setminus \set{e}} \Pr[\EE_f \mid \EE_e]\right) \cdot \Pr[\EE_e]&& \text{(Union bound)}\\
    &= \left(1-\sum_{f \in E\setminus \set{e}} \Pr[\EE_f]\right) \cdot \Pr[\EE_e] && (\text{Pairwise Independence})\\
    &\geq \frac{1}{2} \cdot \frac{1}{2 \rank} \cdot \Pr[e \in A] && \left( \sum_{f \in E} \Pr[\EE_f] \leq \frac{1}{2}\right)\\
    &= \frac{1}{4 \rank} \cdot \Pr[e \in A]. && \qedhere 
\end{align*}
\end{proof}


\section{Pairwise-independent Prophet Inequalities on Matroids}
\label{sec:PI_main}

In this section, we explore a specific instance of the pairwise independent prophet inequality problem which demonstrates the impossibility of achieving any $\omega\left(\frac{1}{\log \rank}\right)$-competitive algorithm against the oblivious adversary. Our focus is on a complete binary matroid of rank $2d$, comprising $n$ labeled copies of each element, represented as $\M^{\times n} = (E^{\times n}, \I^{\times n})$. Here, $d$ is an integer power of $2$, $n= \Theta(d)$,  and $E^{\times n}$ is defined as $\FF_2^{2d}\times [n]$. The precise value of $n$ will be determined in subsequent discussions. The hard instance is structured around two main elements: (i) a pairwise independent probability distribution for the weights assigned to $E^{\times n}$ and (ii) a fixed order $\lambda$ of elements $E^{\times n}$. This setup will ensure that any algorithm adhering to $\lambda$ for processing elements inevitably fails to select an independent set that offers substantial reward, particularly due to the lack of prior knowledge regarding the weight assignments.

\subsection{Construction of Weight Distribution and Arrival Order}

We start by outlining the process for {constructing} a pairwise independent weight distribution for the elements in the matroid $\mathbb{F}_2^{2d} \times [n]$, where $n=\Theta(d)$.
Reflecting on the core principles summarized in Section~\ref{sec:technical-pw}, our approach for determining weight distribution involves generating $\kappa = \Theta(\log d)$ distinct \emph{levels}. 
The elements of $\FF_2^{2d} \times [n]$ will be partitioned into these levels based on their respective labels. For elements assigned to the $\ell$-th level, we will assign weights, either being $2^\ell$ or zero, in accordance with a carefully formulated distribution.

For each level $\ell \in [\kappa]$, we fix the size of labels $L_\ell$ {of level $\ell$} as $d/2^\ell$. Subsequently, we define $n$ as the cumulative sum of the sizes of these label sets, expressed as $n:=\sum_{\ell=1}^\kappa |L_\ell|$. In particular, we define 
label set $L_\ell$ as $\left \{\sum_{j=1}^{\ell - 1} \frac{d}{2^j} +1 , \dots, \sum_{j=1}^{\ell } \frac{d}{2^j} \right \}$. Note that the collection $\set{L_\ell}_{\ell \in [\kappa]}$ effectively forms a partition of the set $[n]$. 
Consequently, the set of vectors $\FF_2^{2d} \times L_\ell \subseteq \FF_2^{2d} \times  [n]$ is categorized as belonging to the level $\ell$. Having established this framework, we now proceed to outline the pairwise independent weight distribution along with a deterministic worst-case arrival order of elements which renders this distribution challenging for a gambler.

First, we describe the 
weight assignment procedure for elements in the matroid, with an approach akin to the active element distribution for the pairwise independent CRS problem, described in Procedure~\ref{proc:construct_crs_active_elements}.
More formally, for each level $\ell \in [\kappa]$, we begin by constructing a random matrix $\Sigma_\ell \in \FF_2^{d\times |L_\ell|}$ whose columns are pairwise linearly independent with probability one. Subsequent to this, we apply a linear transformation with a matrix $R$ whose entries are sampled independently from $r_{(i,j)} \sim \Ber(1/2)$. This step produces a new random matrix $X_\ell := R\Sigma_\ell$ for each level $\ell \in [\kappa]$.

Next, $X_\ell$ is converted into a subset $A_\ell$ of $\FF_2^{2d} \times L_\ell$ by independently applying Procedure~\ref{proc:construct_pw_set} across all levels. Each vector $\vec v^i$ in $A_\ell$ is assigned a weight of $2^\ell$. Notice that this process is well defined as $\set{L_\ell}_{\ell \in [\kappa]}$ forms a partition of $[n]$. As we will show in Lemma~\ref{lem:key_pwind_prophet}, weights sampled according to this process are pairwise independent, provided the matrix $\Sigma := \begin{bmatrix}
	\Sigma_1\ \Sigma_2\ \dots \Sigma_\kappa
\end{bmatrix}$ comprises pairwise linearly independent columns.

Regarding the deterministic order in the prophet inequality problem, we define an arbitrary fixed sequence $\lambda_\ell$ for the elements within each level. The worst-case order $\lambda$ for the matroid $\Ft^{2d} \times [n]$ is then composed by concatenating these individual level sequences in ascending order of levels. This is expressed as $\lambda: = \lambda_1 , \lambda_2 , \dots, \lambda_\kappa$. Crucially, this sequence is structured without foreknowledge of the weight realizations, ensuring that any algorithm addressing the prophet inequality will confront elements with nonzero weights in ascending order of their assigned weights.

\subsubsection{\texorpdfstring{Properties of $\Sigma$ matrices}{Properties of Sigma matrices}}
We encode the challenging instance of the prophet inequality problem via a careful selection of the matrices $\Sigma_1, \dots, \Sigma_\kappa$, each adhering to certain essential properties.
At high level, we designate a random subspace for each level $\ell$, derived from the span of a randomly chosen subset of principal basis vectors $B_\ell$. The matrices $\Sigma_\ell$ are then selected in a way that their columns are linearly independent subset within $\spn(B_\ell)$. For convenience, we think of $\Sigma_\ell$ interchangeably as a matrix as well as the set of it's columns, and write $e \in \Sigma_\ell$ for $e \in \FF_2^d$ if $e$ is a column of $\Sigma_\ell$. The following four properties define critical features of $\set{\Sigma_\ell}_{\ell \in [\kappa]}$ and $\set{B_\ell}_{\ell \in [\kappa]}$ which ensure that our resulting  $\vec w$ and $\lambda$ constitute a hard instance for the pairwise-independent prophet inequality problem as explained in Section~\ref{sec:technical-pw}.

\begin{enumerate}[label=(\roman*)]
	\item  $B_\kappa \subseteq \dots \subseteq B_2 \subseteq B_1 \subseteq \FF_2^d$ is a nested system with size $|B_\ell| = \frac{d}{2^{\ell-1}}$. \label{property:nested_B}
	\item  $\Sigma_\ell \in \FF_2^{d \times d/2^{\ell}}$ is a full column rank matrix with columns from $\spn(B_\ell)$.\label{property:almost_fullrank_sigma}
	\item  $\Pr[e \in \spn(B_{\ell'}) \mid \{e \in \Sigma_{\ell}\}, \Sigma_1, \dots, \Sigma_\ell, B_1, \ldots, B_\ell] = \frac{1}{2^{\ell'-\ell}}$ for all $e \in \FF_2^d$ and $\ell' \geq \ell$. \label{property:level_increase}
	\item  $\Sigma:=\begin{bmatrix}
		\Sigma_1\ \Sigma_2\ \dots \Sigma_{\kappa}
	\end{bmatrix}$ consists of distinct columns with probability $1$. \label{property:pw_indep_sigma}
\end{enumerate}

Let us revisit the roles of these properties, briefly touched upon in Section~\ref{sec:technical-pw}. We first set aside for a moment the requirement for the distribution to be pairwise independent. Consider a ``hard'' instance of a prophet inequality problem on $\FF_2^d$ where each column vector in some $\Sigma_\ell$ is assigned a weight of $2^\ell$ and the rest of the vectors are assigned  weight zero. Assume that the vectors with non-zero weight appear in  ascending order of weight, starting with vectors in $\Sigma_1, \Sigma_2, \dots, \Sigma_\kappa$, followed by vectors with weight zero.

A greedy offline algorithm, aiming to select the maximum weight independent set, would process the non-zero-weight vectors in reverse order from $\Sigma_\kappa$ to $\Sigma_1$. By utilizing the matroid exchange principle and Property~\ref{property:almost_fullrank_sigma}, the algorithm can secure a constant fraction of each set $\Sigma_\ell$. This strategy enables the prophet to achieve a total reward of $\Omega(d \cdot \log d)$.
 
Conversely,  a gambler must select a subset $T_\ell$ from the columns of $\Sigma_\ell$ without prior knowledge of the sets $\Sigma_{\ell'}$ or their corresponding weights, for all $\ell' > \ell$. Roughly speaking, based on Properties~\ref{property:nested_B} and~\ref{property:level_increase}, 
we can potentially show that if $|T_\ell| = c\cdot |\Sigma_\ell |$ for some constant $c$ then $T_\ell$ ends up spanning $c$-fraction of vectors from all the sets $\Sigma_{\ell'}$ for all $\ell' >\ell$. Hence, the gambler can only select $c$ fraction of vectors from $\frac 1 c$ many levels. This leads an $O(d)$ upper bound on the total reward of the gambler and rules out $\omega(1/\log d)$-competitive algorithm for the gambler.

Returning to the aspect of pairwise independence, Property~\ref{property:pw_indep_sigma} facilitates the conversion of this ``hard'' instance into one where weights are pairwise independent, while maintaining the problem's inherent difficulty. This step utilizes the tools developed in Section~\ref{sec:tool}.

\subsubsection{Formal description of hard instance}
\begin{figure}[t]
\begin{procedure}{Pairwise Independent Weight Assignments to Matroid $\FF_2^{2d} \times [n]$}\label{proc:constructing_weights_prophet_inequality}
	\textbf{Input: } Dimension of column vector $d$, Number of levels $\kappa$.
	
	\textbf{Output:} Weight assignment $\w$ and $\lambda$.

	\begin{itemize}
		\item Construction of random matrices with pairwise independent columns.
		\begin{enumerate}[label={(\arabic*)}]
			\item \label{step:construct-sigma-crs} Let $\Sigma_1, \Sigma_2, \dots \Sigma_\kappa$ be random matrices satisfying \ref{property:nested_B}, \ref{property:almost_fullrank_sigma}, \ref{property:level_increase} \ref{property:pw_indep_sigma}.
			\item \label{step:define_R} Let matrix $R\in \FF_2^{2d \times d}$ to be a random matrix with entries $r_{ij} \sim \Ber(1/2)$ independently. 
			\item Define $X_\ell = R \Sigma_\ell \in \FF_2^{2d \times d/2^\ell}$ for each $\ell \in [\kappa]$.
			\item Define $n := \sum_{\ell=1}^\kappa \frac{d}{2^\ell}.$
		\end{enumerate}
		\item Conversion from matrices to pairwise independent sets and weight assignment.
		\begin{enumerate}[label={(\arabic*)}]
			\setcounter{enumi}{4}
			\item Define labels of level $\ell$ as $L_\ell := \left \{\sum_{j=1}^{\ell - 1} \frac{d}{2^j} +1 , \dots, \sum_{j=1}^{\ell } \frac{d}{2^j} \right \} \subseteq [n]$.
			\item For each $\ell \in [\kappa]$, call Procedure~\ref{proc:construct_pw_set}  with $X_\ell$ and $L_\ell$  to obtain $A_\ell \leftarrow \MtS(X_\ell, L_\ell)$.
			\label{step:construct_A}
			\item 
			For each $\vec v^i \in \FF_2^{2d} \times L_\ell$, assign weight 
			$ \vec w(\vec v^i) = \begin{cases}
				2^\ell & \text{if } \vec v^i \in A_\ell,\\
				0 &\text{otherwise.}
			\end{cases} $
		\end{enumerate}
		\item Definition of a worst-case deterministic order.
		\begin{enumerate}[label={(\arabic*)}]
			\setcounter{enumi}{9}
			\item Let   $\lambda_\ell$ be  an arbitrary ordering of elements $\{\vec v^i: \vec v \in \mathbb F_2^{2d} \text{ and } i\in L_\ell  \} = \FF_2^{2d} \times L_\ell $.
			\item Concatenate  $\lambda_\ell$ in increasing order of $\ell$ to form the ordering $\lambda = \lambda_1,  \dots \lambda_\kappa$ of $\FF_2^{2d}\times [n]$.
		\end{enumerate}
	\end{itemize}
      \end{procedure}
    \end{figure}
At this stage, we provisionally accept the existence of a random sampling for the sets $\{B_\ell\}_{\ell \in [\kappa]}$ and matrices $\{\Sigma_\ell\}_{\ell \in [\kappa]}$, which adhere to the four specified properties in unison. We postpone the discussion of how this distribution is explicitly constructed to Section~\ref{sec:pi_sigma_construct}. Now, we are prepared to describe the detailed process for determining the weight assignments and the sequence $\lambda$ in Procedure~\ref{proc:constructing_weights_prophet_inequality}.
    
Next, we show that the weight assignment constructed by Procedure~\ref{proc:constructing_weights_prophet_inequality} is pairwise independent whose proof simply follows from Property~\ref{property:pw_indep_sigma} and Lemmas~\ref{lem:setConstruct} and ~\ref{lem:rotation}. 

\begin{restatable}{lemma}{lemKeyPwindProphet}\label{lem:key_pwind_prophet}
	Given any $\kappa \geq 4$ and $d = 2^{2 \cdot \kappa}$, let $n = \sum_{\ell=1}^\kappa \frac{d}{2^{\ell}}$ and $\w$ 
	be the random weight assignment of elements of $\FF_2^{2d} \times [n]$ sampled according to the Procedure~\ref{proc:constructing_weights_prophet_inequality}. Then, for any distinct pairs of labeled vectors $\vec v^i, \vec u^j \in \FF_2^{2d} \times [n]$, and $\ell, \ell' \in [\kappa]$ we have 
	$$\Pr[\w(\vec v^i)=2^\ell \wedge \w(\vec u^j)=2^{\ell'}] = \Pr[\w(\vec v^i) = 2^\ell] \cdot \Pr[\w(\vec u^j) = 2^{\ell'}].$$
\end{restatable}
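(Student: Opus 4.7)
The plan is to reduce the pairwise independence of weights to the pairwise independence of set-membership events across the $A_\ell$'s, and then to establish that independence by conditioning carefully on $(\Sigma, R)$. Let $\ell(i)$ denote the unique level with $i \in L_{\ell(i)}$. The weight $\vec w(\vec v^i)$ takes values only in $\{0, 2^{\ell(i)}\}$, so if $\ell \neq \ell(i)$ (or $\ell' \neq \ell(j)$) the marginal $\Pr[\vec w(\vec v^i) = 2^\ell]$ vanishes, which forces both sides of the claimed identity to vanish. It therefore suffices to treat the case $\ell = \ell(i)$ and $\ell' = \ell(j)$, in which the identity reduces to
\[
\Pr[\vec v^i \in A_{\ell(i)} \wedge \vec u^j \in A_{\ell(j)}] \;=\; \Pr[\vec v^i \in A_{\ell(i)}]\cdot\Pr[\vec u^j \in A_{\ell(j)}].
\]

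Next I would unpack the structural properties of $\Sigma$. By Property~\ref{property:almost_fullrank_sigma} every column of every $\Sigma_\ell$ is non-zero, and by Property~\ref{property:pw_indep_sigma} the concatenated matrix $\Sigma = [\Sigma_1\ \cdots\ \Sigma_\kappa]$ has distinct columns; over $\FF_2$ these two facts together mean that all columns of $\Sigma$ are pairwise linearly independent. Applying Lemma~\ref{lem:rotation} (with $k=2$, $q=2$) to $\Sigma$ as a whole, every column $\vec x_i$ of $X = R\Sigma$ is uniform on $\FF_2^{2d}$ and any two columns are stochastically independent, conditional on $\Sigma$; since these conditional marginals do not depend on $\Sigma$, the same independence and uniformity hold unconditionally. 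Combined with Lemma~\ref{lem:setConstruct} applied to each level separately, this yields $\Pr[\vec v^i \in A_{\ell(i)}] = \Pr[\vec u^j \in A_{\ell(j)}] = 1/2^{2d} =: p$.

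For the joint probability I would split into two cases. If $\ell(i) = \ell(j)$, both events are determined by the single call $A_{\ell(i)} = \MtS(X_{\ell(i)}, L_{\ell(i)})$, and Lemma~\ref{lem:setConstruct} applied to $X_{\ell(i)}$ gives the desired independence directly. The main obstacle is the case $\ell(i) \neq \ell(j)$, where the two events come from distinct invocations of Procedure~\ref{proc:construct_pw_set} that are coupled through the shared random matrix $R$ (and the jointly sampled $\Sigma$); here Lemma~\ref{lem:setConstruct} cannot be invoked off the shelf. The plan is to condition on $(\Sigma, R)$: since the two procedure calls use independent internal randomness, the events factor conditionally, and unpacking the $\D_1$/$\D_2$ mixture inside Procedure~\ref{proc:construct_pw_set} yields
\[
\Pr[\vec v^i \in A_{\ell(i)} \mid \Sigma, R] \;=\; (1-p)\,\indicator[\vec x_i = \vec v] + p^2,
\]
and analogously for $j$. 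Multiplying the two factors, taking expectation over $R$ while holding $\Sigma$ fixed, and invoking pairwise independence of $\vec x_i, \vec x_j$ from Lemma~\ref{lem:rotation} collapses the cross terms via the identity $(1-p)^2 p^2 + 2(1-p)p^3 + p^4 = p^2\bigl((1-p)+p\bigr)^2 = p^2$. Averaging over $\Sigma$ preserves this value, matching the product of marginals and completing the argument.
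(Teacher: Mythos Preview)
Your proof is correct and takes essentially the same approach as the paper: reduce to set-membership events, invoke Lemma~\ref{lem:setConstruct} for the same-level case, and for distinct levels unpack the $\D_1/\D_2$ mixture in each Procedure~\ref{proc:construct_pw_set} call. The only difference is organizational---the paper expands directly into the four $\D_1/\D_2$ combinations and sums, whereas you condition on $(\Sigma,R)$ first to make the conditional independence of the two procedure calls explicit; the underlying algebra is identical.
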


\begin{proof}
	We consider three cases:
	\begin{itemize}
		\item \textbf{Case 1 ($i\notin L_\ell$ or $j \notin L_{\ell'}$)}: We observe that $\Pr[\w(\vec v^i)=2^\ell \wedge \w(\vec u^j)=2^{\ell'}] = \Pr[\w(\vec v^i) = 2^\ell] \cdot \Pr[\w(\vec u^j) = 2^{\ell'}] = 0$.
		\item \textbf{Case 2 ($\ell = \ell'$ and $i,j \in L_\ell$)}: Property~\ref{property:pw_indep_sigma} and Lemma~\ref{lem:setConstruct} ensures that the events $\{\vec v^i \in A_\ell\}$ and $\{\vec v^i \in A_\ell\}$ are independent.
\filbreak
                \item \textbf{Case 3 ($\ell \neq \ell'$ and $i \in L_\ell$ and $j \in L_{\ell'}$)}: We have marginal probabilities $\Pr[\vec v^i = 2^\ell] = \Pr[\vec u^j = 2^{\ell'}]=\frac 1 {2^d}$. Recall that $A_\ell$ sampled according to mixture of two distributions $\D_1$ and~$\D_2$ in $\MtS$ for each level $\ell$. We compute the joint probability as                
                  \begin{align*}
			\Pr[\vec v^i = &2^\ell \land \vec u^j = 2^{\ell'}]\\
				&= \Pr[\set{\vec x_i = \vec v} \land \set{\vec x_j = \vec u} \mid A_\ell \sim \D_1 \wedge A_{\ell'} \sim \D_1] \cdot \Pr[A_\ell \sim \D_1 \wedge A_{\ell'} \sim \D_1]\\
				&\qquad+ \Pr[\set{\vec x_i = \vec v} \land \set{\vec u^j \in A_{\ell'}} \mid A_\ell \sim \D_1 \land A_{\ell' } \sim \D_2] \cdot \Pr[A_\ell \sim \D_1 \land A_{\ell' } \sim \D_2]\\
			 	&\qquad+ \Pr[\set{\vec v^i \in A_\ell} \land \set{\vec x_j = \vec u} \mid A_{\ell} \sim \D_2 \land A_{\ell'} \sim \D_1] \cdot \Pr[A_{\ell} \sim \D_2 \land A_{\ell'} \sim \D_1]\\
				&\qquad+ \Pr[\set{\vec v^i \in A_\ell} \land \set{\vec u^j \in A_{\ell'}} \mid A_\ell \sim \D_2 \wedge A_{\ell'} \sim \D_2] \cdot \Pr[A_\ell \sim \D_2 \wedge A_{\ell'} \sim \D_2]\\
			&= \frac{1}{2^{2d}}\left(  1 - \frac{1}{2^d} \right)^2 + \frac{1}{2^{3d}}\left( 1- \frac{1}{2^d} \right)+ \frac{1}{2^{3d}}\left( 1- \frac{1}{2^d} \right) + \frac{1}{2^{4d}} = \frac 1 {2^{2d}}. \qedhere
		\end{align*}
	\end{itemize}

\end{proof}


\filbreak
\subsection{Upper Bounding the Approximation Ratio}
\label{sec:pi_upperbound}
In this section, we prove the following theorem.
\begin{theorem}
    \label{thm:prophet_main}
    For any integers $\kappa \geq 4$ and $d = 2^{2 \cdot \kappa}$, let  $n=\sum_{\ell=1}^\kappa \frac{d}{2^\ell}$ and $\M^{\times n}=(E^{\times n}, \I^n)$ where $\M=(E, \I)$ is the full binary matroid with rank $2d$, i.e., $E=\FF_2^{2d}$. Then, there is no $\omega(1/\log d)$-competitive pairwise independent prophet inequality algorithm for matroid $\M^{\times n}$.
\end{theorem}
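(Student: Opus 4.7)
The plan is to apply Procedure~\ref{proc:constructing_weights_prophet_inequality} with $\kappa \geq 4$ and $d = 2^{2\kappa}$ to obtain a pairwise-independent weight distribution $\vec w$ on $\FF_2^{2d} \times [n]$, which is guaranteed by Lemma~\ref{lem:key_pwind_prophet}, together with a fixed arrival order $\lambda$ compatible with the oblivious adversary. It then suffices to establish (a) the prophet's expected reward is $\Omega(d \log d)$, and (b) every gambler following $\lambda$ achieves expected reward $O(d)$. Throughout the analysis, I work on the event $\mathcal{G}$ that $R \in \FF_2^{2d \times d}$ is injective, which by Lemma~\ref{lem:linear_indep} fails with probability at most $1/2^d$; since $d = 2^{2\kappa}$, the complement of $\mathcal{G}$ together with the $1/2^d$ mass on $\D_2$ inside Procedure~\ref{proc:construct_pw_set} contributes only $o(1)$ to the expectations of interest, given that all weights and matroid ranks involved are polynomial in $d$. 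Under $\mathcal{G}$, each $X_\ell = R\Sigma_\ell$ is a set of $d/2^\ell$ linearly independent vectors in $\FF_2^{2d}$ (by Property~\ref{property:almost_fullrank_sigma}), and $R(\spn(B_\ell))$ is a subspace of dimension $d/2^{\ell-1}$.

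For (a), I analyze the greedy maximum-weight-independent-set algorithm processing levels in decreasing order of weight. Let $I_\ell$ denote its selection after handling levels $\geq \ell$, with $I_{\kappa+1} = \emptyset$. Since $X_\ell$ consists of $d/2^\ell$ linearly independent vectors, at most $|I_{\ell+1}|$ of them can lie in $\spn(I_{\ell+1})$, so greedy adds at least $\max(0, d/2^\ell - |I_{\ell+1}|)$ new elements at level $\ell$. The bound $|I_{\ell+1}| \leq \sum_{j=\ell+1}^\kappa d/2^j < d/2^\ell$ makes this always positive, and a simple induction then gives $|I_\ell| \geq d/2^\ell$. Abel summation yields
\begin{equation*}
\Ex[\text{prophet reward}] \geq \sum_{\ell=1}^\kappa 2^\ell(|I_\ell| - |I_{\ell+1}|) = 2|I_1| + \sum_{\ell=2}^\kappa 2^{\ell-1} |I_\ell| \geq d + \frac{(\kappa-1)d}{2} = \Omega(d \log d).
\end{equation*}

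For (b), fix any (possibly randomized) gambler strategy and let $T_\ell \subseteq A_\ell$ denote its level-$\ell$ selection. For every $\ell' > \ell$, Property~\ref{property:level_increase} gives $\Pr[s \in \spn(B_{\ell'}) \mid \{\Sigma_j, B_j\}_{j \leq \ell}, s \in \Sigma_\ell] = 1/2^{\ell'-\ell}$, and since $R$ together with the internal randomness of Procedure~\ref{proc:construct_pw_set} is independent of $B_{\ell'}$ given $\{\Sigma_j, B_j\}_{j \leq \ell}$, this conditional probability survives further conditioning on the gambler's full observation history through level $\ell$. Summing over the columns of $\Sigma_\ell$ underlying the selected $T_\ell$ and using linearity of expectation gives $\Ex[|T_\ell \cap R(\spn(B_{\ell'}))|] = \Ex[|T_\ell|]/2^{\ell'-\ell}$ on $\mathcal{G}$. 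Feasibility of $\bigcup_{\ell \leq \ell'} T_\ell$ in $\M^{\times n}$ forces the underlying vectors to be linearly independent in $\FF_2^{2d}$, so their intersection with the $d/2^{\ell'-1}$-dimensional subspace $R(\spn(B_{\ell'}))$ has size at most $d/2^{\ell'-1}$. This yields $\Ex\bigl[\sum_{\ell \leq \ell'} |T_\ell|/2^{\ell'-\ell}\bigr] \leq d/2^{\ell'-1}$, or equivalently $\Ex\bigl[\sum_{\ell \leq \ell'} 2^\ell |T_\ell|\bigr] \leq 2d$; setting $\ell' = \kappa$ caps the gambler's total expected reward at $2d$, and combining with (a) gives the desired ratio $O(1/\log d)$. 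The main obstacle I anticipate is this conditioning step: it requires carefully verifying that the sigma-algebra capturing the gambler's information by the end of level $\ell$ is generated by $\{\Sigma_j, B_j\}_{j \leq \ell}$, the matrix $R$, and auxiliary randomness from Procedure~\ref{proc:construct_pw_set}, while $B_{\ell'}$ for $\ell' > \ell$ is conditionally independent of these latter pieces --- facts that should follow from the disjoint-block sampling structure of Procedure~\ref{proc:constructing_weights_prophet_inequality}.
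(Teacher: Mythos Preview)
Your proposal is correct and follows essentially the same approach as the paper: condition on the high-probability event that $R$ is injective and all $A_\ell$ come from $\D_1$ (the paper's $\Ehard$), lower-bound the prophet via greedy in decreasing weight order, and upper-bound the gambler by combining Property~\ref{property:level_increase} with the dimension constraint $\dim R(\spn(B_{\ell'})) = d/2^{\ell'-1}$. Your prophet bound via $|I_\ell|\geq d/2^\ell$ and Abel summation is a clean variant of the paper's level-pairing argument (Lemma~\ref{lem:prophet_lowerbound}), and your gambler bound is exactly the content of Claim~\ref{claim:rho_gamma} unpacked directly; the conditioning step you flag as the main obstacle is handled identically in the paper, with the same brief justification that $B_{\ell'}$ is conditionally independent of the gambler's level-$\ell$ information given $\{\Sigma_j,B_j\}_{j\leq\ell}$.
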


In order to prove the above theorem, we show that when the weights $\vec w$ and the order $\lambda$ are determined according to Procedure~\ref{proc:constructing_weights_prophet_inequality}, the prophet is able to secure a reward of $\Omega(d \cdot \log d)$, where as any gambler is limited to obtaining a reward at most $O(d)$. To prove this claim, we first identify a high probability event which makes the problem challenging for the gambler. Let $\Ehard$ be the event when the following two events happen together:
\begin{enumerate}[label=(\arabic*)]
    \item The random matrix $R$ sampled in Step~\ref{step:define_R} of Procedure~\ref{proc:constructing_weights_prophet_inequality} has full column rank,
   	\item The set $A_\ell$ constructed in the Step~\ref{step:construct_A} of Procedure~\ref{proc:constructing_weights_prophet_inequality}, is sampled according to $\D_1$ in Procedure~\ref{proc:construct_pw_set} for all $\ell \in [\kappa]$ simultaneously.
\end{enumerate}
Notice that the two described events are mutually independent. The first event occurs with a probability of $1-\frac{1}{2^d}$, as $R \in \FF_q^{2d \times d}$. The second event occurs with a probability of $\left(1-\frac{1}{2^{2d}}\right)^\kappa \geq 1-\frac{\kappa}{2^{2d}}$. Together, these events imply that $\Ehard$ occurs with a probability of at least $1- \frac{\kappa+1}{2^d}$. 

Furthermore, when the latter event happens only labeled copies of columns in $X_\ell$ will have a non-zero weight for each $\ell \in [\kappa]$. Additionally, if the first event occurs, at most one labeled copy of each vector will have a non-zero weight. Conceptually, when $\Ehard$ occurs, one can imagine that $\vec w$ assigns non-zero weights to columns of $X_\ell$ and so vectors of $\FF_2^{2d}$, rather than their labeled copies, since there is only one labeled copy of each item with a non-zero weight.

Next, we demonstrate two lemmas that lower-bound the prophet's reward and upper-bound the any gambler's utility respectively, under the condition that $\Ehard$ happens.
\begin{lemma}[Lower-bound Prophet's Reward]
    \label{lem:prophet_lowerbound}
    Given $\kappa \geq 4, d = 2^{2 \cdot \kappa}, n=\sum_{\ell=1}^\kappa \frac{d}{2^\ell}$ and $\M = \FF_2^{2d}$, let $\vec w$ be the random weight assignment over elements of $\M^{\times n}$ determined according to Procedure~\ref{proc:constructing_weights_prophet_inequality}. Then,
        $$ \Ex[\text{reward of prophet} \mid \Ehard] \geq \frac{\kappa \cdot d}{10}. $$
\end{lemma}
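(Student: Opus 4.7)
The plan is to use the two defining events of $\Ehard$ to reduce the prophet's problem to a deterministic, weighted, max-independent-set problem in the matroid $\FF_2^d$ over the columns of $\Sigma$, and then to lower bound that optimum using the matroid greedy algorithm together with Property~\ref{property:almost_fullrank_sigma}. I will condition throughout on an arbitrary realization of $\{\Sigma_\ell\}_{\ell\in[\kappa]}$ that satisfies Properties (i)--(iv); since these hold with probability $1$, a deterministic bound on $\Ehard$ suffices.

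First, under $\Ehard$ the map $R\colon\FF_2^d\to\FF_2^{2d}$ is injective and every $A_\ell$ is drawn from the distribution $\D_1$ of Procedure~\ref{proc:construct_pw_set}. Under $\D_1$, each column $\vec\sigma$ of $\Sigma_\ell$ contributes exactly one labeled copy $(R\vec\sigma)^{i}$ of weight $2^\ell$, where $i$ is a label in $L_\ell$ uniquely determined by the column. The label sets $L_1,\dots,L_\kappa$ are disjoint by construction, the columns of $\Sigma$ are distinct by Property~\ref{property:pw_indep_sigma}, and $R$ is injective, so every non-zero-weight labeled copy corresponds bijectively and weight-preservingly to a distinct column of $\Sigma$. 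A collection $T$ of such labeled copies is independent in $\M^{\times n}$ iff its underlying $\FF_2^{2d}$-vectors are distinct and linearly independent, which by injectivity of $R$ is equivalent to the associated columns of $\Sigma$ being linearly independent in $\FF_2^d$. Hence the prophet's reward on $\Ehard$ equals the maximum of $\sum_{\ell=1}^\kappa 2^\ell\,|T\cap\Sigma_\ell|$ over linearly independent $T\subseteq\bigcup_{\ell=1}^\kappa\Sigma_\ell$.

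Next I will invoke the greedy algorithm for weighted matroids, processing columns in decreasing order of weight (from level $\kappa$ down to level $1$). Letting $N_\ell$ denote the number of columns added at level $\ell$, a standard matroid-greedy argument gives
\[
N_\ell \;=\; \rank\!\Bigl(\bigcup_{\ell'\ge\ell}\Sigma_{\ell'}\Bigr)-\rank\!\Bigl(\bigcup_{\ell'>\ell}\Sigma_{\ell'}\Bigr).
\]
Setting $S_\ell:=\sum_{\ell'\ge\ell}N_{\ell'}=\rank(\bigcup_{\ell'\ge\ell}\Sigma_{\ell'})$ and $S_{\kappa+1}:=0$, Property~\ref{property:almost_fullrank_sigma} immediately yields $S_\ell\ge\rank(\Sigma_\ell)=d/2^\ell$ for every $\ell\in[\kappa]$.

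Finally, summation by parts will give
\[
\Ex[\text{reward of prophet}\mid\Ehard] \;=\; \sum_{\ell=1}^\kappa 2^\ell(S_\ell-S_{\ell+1}) \;=\; 2S_1+\sum_{\ell=2}^\kappa 2^{\ell-1}S_\ell \;\ge\; d+\sum_{\ell=2}^\kappa\tfrac{d}{2} \;=\; \tfrac{(\kappa+1)d}{2},
\]
which is comfortably at least $\kappa d/10$ for every $\kappa\ge 4$, establishing the lemma. The main obstacle is the reduction step, where I must carefully combine injectivity of $R$, the $\D_1$-sampling of the $A_\ell$'s, and Property~\ref{property:pw_indep_sigma} to transfer the prophet's problem on $\M^{\times n}$ onto the smaller matroid $\FF_2^d$; once this is in place, the greedy-plus-Abel-summation estimate relies only on Property~\ref{property:almost_fullrank_sigma} and is routine.
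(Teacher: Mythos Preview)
Your proof is correct and follows the same high-level line as the paper: under $\Ehard$ reduce the prophet's problem to a max-weight independent set over the columns of $\Sigma$, then run matroid greedy in decreasing order of weight and bound the resulting sum. The reduction you spell out (injectivity of $R$, the $\D_1$-sampling, and Property~\ref{property:pw_indep_sigma}) matches what the paper uses implicitly when it works directly with the sets $A_\ell$ in $\FF_2^{2d}$.

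The difference is in the final estimate. The paper pairs consecutive levels and uses both Property~\ref{property:almost_fullrank_sigma} (to lower bound $\rank(\bigcup_{j\ge 2i-1} A_j)$) and Property~\ref{property:nested_B} (to upper bound $\rank(\bigcup_{j\ge 2i+1} A_j)$ via $|B_{2i+1}|$), obtaining $\kappa d/4$. You instead observe that the partial sums $S_\ell=\rank(\bigcup_{\ell'\ge\ell}\Sigma_{\ell'})$ satisfy $S_\ell\ge d/2^\ell$ by Property~\ref{property:almost_fullrank_sigma} alone, and then Abel-sum to get $(\kappa+1)d/2$. Your route is slightly cleaner for this lemma: it avoids Property~\ref{property:nested_B} entirely and yields a sharper constant, at the cost of no extra machinery.
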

\begin{lemma}[Upper-bound Gambler's Reward]
    \label{lem:gambler_upperbound}
    Given $\kappa \geq 4, d = 2^{2 \cdot \kappa}, n=\sum_{\ell=1}^\kappa \frac{d}{2^\ell}$ and $\M = \FF_2^{2d}$, let $\vec w$ be the random weight assignment over elements of $\M^{\times n}$ and $\lambda$ be the order of these elements determined according to Procedure~\ref{proc:constructing_weights_prophet_inequality}. Then,
    $$ \Ex[\text{reward of gambler} \mid \Ehard] \leq 2d. $$
\end{lemma}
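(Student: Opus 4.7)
The plan is to reduce the gambler's reward under $\Ehard$ to a combinatorial question in the source space $\FF_2^d$, via the injective linear map $R$, and then exploit Property~\ref{property:level_increase} to bound it.  Under $\Ehard$, the matrix $R\in \FF_2^{2d \times d}$ is injective and each $A_\ell$ is sampled from $\D_1$ in Procedure~\ref{proc:construct_pw_set}, so the labeled elements with nonzero weight at level $\ell$ are exactly $\{(R\sigma)^{i(\sigma)} : \sigma \in \Sigma_\ell\}$, where $i(\sigma) \in L_\ell$ denotes the label that Procedure~\ref{proc:construct_pw_set} assigns to the column $\sigma$.  Without loss of generality, the gambler only selects elements with positive weight (selecting zero-weight elements can only shrink her future feasibility without adding to the reward).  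Let $T_\ell$ denote the gambler's selection at level $\ell$ and define $U_\ell := \{\sigma \in \Sigma_\ell : (R\sigma)^{i(\sigma)} \in T_\ell\} \subseteq \Sigma_\ell$.  By injectivity of $R$ together with pairwise disjointness of the $\Sigma_\ell$'s (Property~\ref{property:pw_indep_sigma}), the $U_\ell$'s are pairwise disjoint with $|T_\ell| = |U_\ell|$, and the independence of $\bigcup_\ell T_\ell$ in $\M^{\times n}$ forces $U := \bigcup_\ell U_\ell$ to be linearly independent in $\FF_2^d$.

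The core estimate I will prove is that, for every $\ell' \in [\kappa]$,
\[
\sum_{\ell \leq \ell'} 2^\ell \cdot \Ex[|U_\ell| \mid \Ehard] \;\leq\; 2d.
\]
Setting $\ell' = \kappa$ and recognizing that the gambler's reward under $\Ehard$ equals $\sum_\ell 2^\ell |U_\ell|$ then yields the lemma.  To derive the estimate, I fix $\ell'$ and consider $W_{\ell'} := \bigl|\bigl(\bigcup_{\ell \leq \ell'} U_\ell\bigr) \cap \spn(B_{\ell'})\bigr|$.  Because $U$ is linearly independent and $\dim(\spn(B_{\ell'}))=|B_{\ell'}|=d/2^{\ell'-1}$ by Property~\ref{property:nested_B}, we have $W_{\ell'} \leq d/2^{\ell'-1}$ pointwise.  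Using disjointness of the $U_\ell$'s and the fact that $U_{\ell'} \subseteq \Sigma_{\ell'} \subseteq \spn(B_{\ell'})$ (Property~\ref{property:almost_fullrank_sigma}), we decompose $W_{\ell'} = |U_{\ell'}| + \sum_{\ell < \ell'} |U_\ell \cap \spn(B_{\ell'})|$.  Taking conditional expectations and invoking Property~\ref{property:level_increase} termwise yields $\Ex[|U_\ell \cap \spn(B_{\ell'})| \mid \Ehard] = \Ex[|U_\ell| \mid \Ehard]/2^{\ell'-\ell}$ for $\ell < \ell'$; multiplying the resulting inequality through by $2^{\ell'}$ gives the estimate.

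The main obstacle will be justifying this termwise invocation of Property~\ref{property:level_increase} under the conditioning on $\Ehard$.  For each $\ell < \ell'$, the random set $U_\ell$ is a deterministic function of $\Sigma_1,\dots,\Sigma_\ell$, of $R$, and of the gambler's internal randomness, while the event $\{e \in \spn(B_{\ell'})\}$ for $e \in \Sigma_\ell$ is determined by $\Sigma_\ell$ and $B_{\ell'}$.  Property~\ref{property:level_increase} gives the conditional probability $2^{-(\ell'-\ell)}$ given $(\Sigma_1,\dots,\Sigma_\ell, B_1,\dots,B_\ell)$ together with $e \in \Sigma_\ell$, and I must verify that this remains $2^{-(\ell'-\ell)}$ after further conditioning on $R$, on the algorithm's randomness, and on $\Ehard$ itself.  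This is where the construction pays off: $\Ehard$ is an event on $R$ and on the per-level $\D_1$-versus-$\D_2$ coin flips from Procedure~\ref{proc:construct_pw_set}, all of which are independent of the joint construction of $\{(\Sigma_\ell,B_\ell)\}_{\ell \in [\kappa]}$, so the extra conditioning leaves the relevant conditional probability unchanged.
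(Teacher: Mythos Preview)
Your proposal is correct and follows essentially the same approach as the paper. The paper works in the image space $\FF_2^{2d}$ with $\gamma_\ell := |S_{\leq \ell} \cap \spn(RB_\ell \times [n])|$, proves the recursion $\Ex[\rho_\ell] \leq \Ex[\gamma_\ell] - \tfrac{1}{2}\Ex[\gamma_{\ell-1}]$ via Property~\ref{property:level_increase}, and then telescopes $\sum_\ell 2^\ell \Ex[\rho_\ell] \leq 2^\kappa \Ex[\gamma_\kappa] \leq 2d$; your $W_{\ell'}$ is exactly $\gamma_{\ell'}$ pulled back through the injective $R$, and your direct expansion $\Ex[W_{\ell'}] = \sum_{\ell \leq \ell'} \Ex[|U_\ell|]/2^{\ell'-\ell}$ is the unrolled form of the paper's recursion. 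Your treatment of the conditioning on $\Ehard$, $R$, and the algorithm's randomness is, if anything, slightly more explicit than the paper's.
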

Finally, we utilize these lemmas to prove the main result of this section.
\begin{proof}[Proof of Theorem~\ref{thm:prophet_main}]
    Due to Lemma~\ref{lem:prophet_lowerbound}, we have
    \begin{align}
        \Ex[\text{reward of prophet}] &\geq \Ex[\text{reward of prophet} \mid \Ehard] \cdot \Pr[\Ehard]\\
        &\geq \frac{\kappa \cdot d}{4} \cdot \left(1-\frac{\kappa+1}{2^d}\right) \geq \frac{\kappa \cdot d}{5} && (\kappa \geq 4)
    \end{align}
    and due to Lemma~\ref{lem:gambler_upperbound}, we have
    \begin{align*}
        \Ex[\text{reward of gambler}] &\leq \Ex[\text{reward of gambler} \mid \Ehard]  + 2^\kappa \cdot \rank(\M^{\times n}) \cdot \left(1-\Pr\left[\Ehard\right]\right)\\
        &\leq 2d + (d \cdot 2^\kappa) \cdot \frac{\kappa + 1}{2^d}\leq 2d + 1. && (\kappa \geq 4) 
    \end{align*}
    Therefore, combining two inequalities concludes the desired result.
    $$ \frac{\Ex[\text{reward of gambler}]}{\Ex[\text{reward of prophet}]} \leq \frac{10}{\kappa} = O\left(\frac{1}{\log d}\right). $$
\end{proof}
In the remaining part of this section, we condition on the event $\Ehard$ and prove the two lemmas.

\subsubsection{Proof of Lemma~\ref{lem:prophet_lowerbound}: Lower-bound for Prophet's Reward}

To compute a lower-bound for prophet's reward, or equivalently $\Ex[\rank_{\vec w}(\M^{\times n})]$, we will construct a feasible solution $S$ and show that its expected weight is at least $\frac{\kappa \cdot d}{2}$. Let $A_1, \dots A_\kappa$ be the random sets constructed in Step~\ref{step:construct_A} of Procedure~\ref{proc:constructing_weights_prophet_inequality}. 
Given that we have conditioned on $\Ehard$, each set $A_\ell$ contains $n_\ell = \frac{d}{2^\ell}$ vectors, and at most one labeled copy of a vector $\vec v \in \FF_2^{2d}$ appears in any of the sets $\set{A_\ell}_{\ell \in [\kappa]}$. 

Then, for each level $\ell$ in decreasing order, let $\tilde A_\ell$ be a maximal subset of $A_\ell$ such that $\bigcup_{i=\ell}^{\kappa} \tilde A_\ell$ is an independent set. We form the solution $S:=\bigcup_{\ell=1}^\kappa \tilde A_\ell$. Essentially, $S$ is the outcome of a greedy algorithm that processes elements in descending order of their weights.
By construction, $S$ constitutes a feasible set. The matroid exchange property then implies that
$$|\tilde A_\ell| = \rank\left( \bigcup_{i=\ell}^\kappa A_i \right) -  \rank\left(\bigcup_{i=\ell+1}^\kappa A_i\right).$$
Consequently, for any $1 \leq i \in \ceil{\kappa/2}$,
\begin{align*}
    |\tilde A_{2i-1}| + |\tilde A_{2i}| &= \rank\left(\bigcup_{j=2i-1}^\kappa A_j\right) - \rank\left(\bigcup_{j=2i+1}^\kappa A_j\right)\\
    &\geq \rank(A_{2i-1}) - \rank(B_{2i+1}) && \left(\bigcup_{j=2i+1}^\kappa A_j \subseteq \spn(B_{2i+1})\right)\\
    &= \frac{d}{2^{2i-1}} - \frac{d}{2^{2i}} 
    = \frac{1}{2} \cdot \frac{d}{2^{2i-1}}.
\end{align*}
Therefore, we can bound the weighted rank of the matroid as follows:
\begin{align*}
    \rank_{\vec w}(\M^{\times n}) 
    &\geq \vec w(S) = \sum_{\ell=1}^\kappa \vec w(\tilde A_\ell)\geq \sum_{i=1}^{\ceil{\kappa/2}} (|\tilde A_{2i-1}| + |\tilde A_{2i}|) \cdot 2^{2i-1}\\
    &\geq \sum_{i=1}^{\ceil{\kappa/2}} \frac{1}{2} \cdot \frac{d}{2^{2i-1}} \cdot 2^{2i-1}= \sum_{i=1}^{\ceil{\kappa/2}} \frac{d}{2} = \frac{\kappa \cdot d}{4}
\end{align*}
where the second inequality follows from the fact that $\vec w(\vec v^i) \geq 2^{2i-1}$ for any $\vec v^i \in \tilde A_{2i-1} \cup \tilde A_{2i}$. This concludes the proof of Lemma~\ref{lem:prophet_lowerbound}.


\subsubsection{Proof of Lemma~\ref{lem:gambler_upperbound}: Upper-bound for Gambler's Reward}

Let $S$ be the set of elements selected by the gambler and define $S_\ell$ as its subset which contains elements from level $\ell$, i.e., $S_\ell = S \cap \FF_q^{2d} \times L_\ell$, and let $S_{\leq \ell} = \bigcup_{i=1}^\ell S_i$. We assume without loss of generality that the gambler only selects elements with non-zero weight. 
Recall that $\lambda$ first presents all elements with first level labels $L_1$, then all elements with second level labels $L_2$, and so on. Thus, any algorithm visits non-zero-weight elements of the matroid in increasing order of weight. Therefore, $S_\ell$ is a random set which is independent of $B_{\ell+1}, B_{\ell+2}, \ldots, B_\kappa$ given the information revealed by the elements up to level $\ell$. This independence remains even when conditioned on $R$.

We define $\rho_\ell := |S_\ell|$ and $\gamma_\ell := |S_{\leq \ell} \cap \spn(RB_\ell \times [n])|$ for each $\ell \in [\kappa]$. To simplify the notation and reduce clutter, we will use a shorthand for $\spn(\cdot)$, where $\spn(V)$ will represent $\spn(V \times [n])$, effectively omitting labels for the ease of understanding. The following claim establishes a relationship between the two series $\set{\rho_\ell}_{\ell \in [\kappa]}$ and $\set{\gamma_\ell}_{\ell \in [\kappa]}$ in expectation.
\begin{claim}\label{claim:rho_gamma}
	$\Ex[\rho_\ell] \leq \Ex[\gamma_\ell] - \frac{\Ex[\gamma_{\ell-1}]}{2}.$
\end{claim}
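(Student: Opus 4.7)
My approach will be to compute both sides of the claimed inequality via a linear-algebraic decomposition in terms of the expected counts $\Ex[\rho_{\ell'}]$ for $\ell' \leq \ell-1$, and I will in fact show the claim holds with equality. First, note that each set $A_\ell$ is supported on labeled copies of the columns of $X_\ell = R\Sigma_\ell$, which all lie in $\spn(RB_\ell)$ by Property~\ref{property:almost_fullrank_sigma}; since the gambler only selects nonzero-weight elements, $S_\ell \subseteq A_\ell \subseteq \spn(RB_\ell \times [n])$ and therefore $\gamma_\ell = \rho_\ell + |S_{\leq \ell-1} \cap \spn(RB_\ell \times [n])|$. The claim thus reduces to showing $\Ex[|S_{\leq \ell-1} \cap \spn(RB_\ell \times [n])|] = \tfrac{1}{2}\Ex[\gamma_{\ell-1}]$.

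Next I would transfer the question back to the domain space $\FF_2^d$, using that $R$ is injective under $\Ehard$: each $e' \in S_{\leq \ell-1}$ corresponds uniquely to a column $\sigma(e')$ of some $\Sigma_{i(e')}$ with $i(e') \leq \ell-1$, and $e' \in \spn(RB_j \times [n])$ iff $\sigma(e') \in \spn(B_j)$. Parameterizing by underlying columns,
\[
\gamma_\ell - \rho_\ell \;=\; \sum_{\ell'=1}^{\ell-1}\ \sum_{\sigma \in \Sigma_{\ell'}} \mathbf{1}[e'(\sigma) \in S_{\ell'}] \cdot \mathbf{1}[\sigma \in \spn(B_\ell)],
\]
and analogously for $\gamma_{\ell-1}$ with $\spn(B_{\ell-1})$ in place of $\spn(B_\ell)$. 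The central step is then an iterated-expectation argument: for each fixed column $\sigma$ of $\Sigma_{\ell'}$, condition on the sigma-algebra $\mathcal{F}_{\ell'}$ generated by $(\Sigma_1, B_1, \ldots, \Sigma_{\ell'}, B_{\ell'})$ together with $R$ and the gambler's internal coins. Under this conditioning, $\mathbf{1}[e'(\sigma) \in S_{\ell'}]$ is deterministic, since the gambler's decisions by the end of level $\ell'$ are a function of $A_1,\ldots,A_{\ell'}$ (themselves determined by $R$ and $\Sigma_1,\ldots,\Sigma_{\ell'}$ under $\Ehard$), the fixed order, and the internal coins. Meanwhile Property~\ref{property:level_increase} applied to $\sigma \in \Sigma_{\ell'}$ gives $\Pr[\sigma \in \spn(B_\ell) \mid \text{enlarged conditioning}] = 2^{-(\ell-\ell')}$. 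Summing then yields
\[
\Ex[\gamma_\ell - \rho_\ell] = \sum_{\ell'=1}^{\ell-1} 2^{-(\ell-\ell')}\,\Ex[\rho_{\ell'}], \qquad \Ex[\gamma_{\ell-1}] = \sum_{\ell'=1}^{\ell-1} 2^{-(\ell-1-\ell')}\,\Ex[\rho_{\ell'}],
\]
from which the desired equality $\Ex[\gamma_\ell - \rho_\ell] = \tfrac{1}{2}\Ex[\gamma_{\ell-1}]$ is immediate.

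The main obstacle I anticipate is justifying the conditioning enlargement in the last step. Property~\ref{property:level_increase} only records the probability conditional on $(\Sigma_1,B_1,\ldots,\Sigma_{\ell'},B_{\ell'})$, yet to freeze the selection indicator I must also condition on $R$ and the gambler's internal coins. This is legitimate because $R$ is sampled independently of the $(\Sigma_i,B_i)$ and the internal coins are independent of everything else, so $B_\ell$ remains conditionally independent of these extras given $\mathcal{F}_{\ell'}$. The finicky bookkeeping step is to verify that further restricting to $\Ehard$ --- which is itself a condition on $R$ together with the per-level coins of the $\MtS$ procedure --- does not disrupt this conditional independence, which follows because $\Ehard$ is measurable with respect to $(R, \text{coins})$ and carries no information about $B_\ell$ beyond $\mathcal{F}_{\ell'}$.
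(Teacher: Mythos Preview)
Your proposal is correct and follows essentially the same approach as the paper: both arguments reduce to computing $\Ex[|S_{\leq \ell-1}\cap\spn(RB_\ell)|]$ by summing over levels $\ell'\leq\ell-1$ and applying Property~\ref{property:level_increase} to obtain the factor $2^{-(\ell-\ell')}$. The only cosmetic differences are that you expand both sides explicitly as $\sum_{\ell'} 2^{-(\ell-\ell')}\Ex[\rho_{\ell'}]$ (whereas the paper applies Property~\ref{property:level_increase} forward and then backward), and that you obtain equality while the paper records only the needed inequality; your more careful justification of the conditioning on $R$, the gambler's coins, and $\Ehard$ is a welcome addition.
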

Before we prove the claim, we utilize it to complete the proof of Lemma~\ref{lem:gambler_upperbound}. Remember that all expectations are conditioned on the event $\Ehard$.
\begin{align*}
		\Ex[\vec w(S)] &= \sum_{\ell=1}^{\kappa} \Ex[2^\ell \cdot |S_\ell|] \\
		&= \sum_{\ell=1}^{\kappa} \Ex[2^\ell \cdot \rho_\ell] \\
		&\leq 2 \cdot \Ex[\gamma_1] + \sum_{\ell=2}^\kappa 2^\ell \cdot \left(\Ex[\gamma_\ell] - \Ex\left[\frac{\gamma_{\ell-1}}{2}\right] \right) && (\text{Claim~\ref{claim:rho_gamma}})\\
		&= 2^\kappa \cdot \Ex[\gamma_\kappa] + \sum_{\ell=1}^{\kappa-1} \left(2^{\ell} \cdot \Ex[\gamma_\ell] - 2^{\ell+1} \cdot \Ex\left[\frac{\gamma_{\ell}}{2}\right]\right)\\
		&\leq 2^\kappa \cdot \rank(RB_\kappa) && (S_{\leq \kappa} \cap \spn(RB_\kappa) \subseteq \spn(RB_\kappa))\\
		&= 2^\kappa \cdot \frac{d}{2^{\kappa-1}} = 2d.
\end{align*}

We complete the discussion with the following proof of Claim~\ref{claim:rho_gamma}.
\begin{proof}[Proof of Claim~\ref{claim:rho_gamma}]
		First, we bound $\rho_\ell$ as follows.
	\begin{align*}
		\rho_\ell &\leq |S_{\leq\ell} \cap \spn(RB_\ell)| - |S_{\leq \ell-1} \cap \spn(RB_\ell)|= \gamma_\ell - |S_{\leq \ell-1} \cap \spn(RB_\ell)|. 
	\end{align*}
	Here, the inequality follows from the fact that $S_\ell \subseteq \spn(RB_\ell)$ and $S_{\leq \ell}$ is an independent set.

	We re-express the expectation of $|S_{\leq \ell-1} \cap \spn(RB_\ell)|$ to complete the proof.
	\begin{align*}
		\Ex [|S_{\leq \ell-1} \cap \spn(RB_{\ell})|]
		&= \sum_{\ell'=1}^{\ell-1} \Ex[|S_{\ell'} \cap \spn(RB_\ell)|] && (\text{disjoint})\\
		&= \sum_{\ell'=1}^{\ell-1} \Ex\left[ \sum_{e \in S_{\ell'}} \Pr[e \in \spn(RB_\ell) ] \right]\\
		&= \sum_{\ell'=1}^{\ell-1} \Ex\left[ \sum_{e \in S_{\ell'}} \frac{1}{2^{\ell-\ell'}}\right] && (\text{Property~\ref{property:level_increase}})\\
		&= \sum_{\ell'=1}^{\ell-1} \Ex\left[ \frac{1}{2} \cdot \sum_{e \in S_{\ell'}} \frac{1}{2^{(\ell-1)
		-\ell'}}\right]\\
		&= \sum_{\ell'=1}^{\ell-1} \Ex\left[ \frac{1}{2} \cdot \sum_{e \in S_{\ell'}} \Pr[e \in \spn(RB_{\ell-1}) ] \right] && (\text{Property~\ref{property:level_increase}}) \\
		&= \frac{1}{2} \cdot \sum_{\ell'=1}^{\ell-1} \Ex[|S_{\ell-1}  \cap \spn(RB_{\ell-1})|] \\
		&=\frac{1}{2} \cdot \Ex[\gamma_{\ell-1}]. && (\text{disjoint})
	\end{align*}
	Each of our two invocations Property~\ref{property:level_increase} uses the conditional independence of $S_{\ell'}$ of $B_\ell$  given the information revealed by elements at levels up to $\ell'$.
\end{proof}


\subsection{\texorpdfstring{Construction of $\Sigma$}{Construction of Sigma}}
\label{sec:pi_sigma_construct}

\begin{figure}
\begin{procedure}{Construction of Pairwise Linearly Independent Collection of Vectors $\Sigma$
	}\label{proc:constructing_sigma_prophet_inequality}
	\textbf{Input: } Dimension of column vector $d$, Number of levels $\kappa$.
	
	\begin{enumerate}[label={(\arabic*)}]
		\item \textbf{Base level $\ell=1$}:
		\begin{enumerate}
			\item $B_1 = \{\vec e_1 , \dots, \vec e_d\}$ be the principal basis of $\Ft^{d}$.
			\item $\mathbf P_1 = \{P_1(i) : i \in [d/2]\}$ be a partition of $B_1$ where $P_1(i)=\{\vec e_{2i-1}, \vec e_{2i}\}$.
		\end{enumerate}
		\item \textbf{Level $\ell>1$}:
		\begin{enumerate}
			\item Let $\tilde{\mathbf{P}}_{\ell-1}$ be a uniformly random half of $\mathbf{P}_{\ell-1}$, written $\tilde{\mathbf{P}}_{\ell-1}:=\{\tilde{P}_{\ell-1}(i) : i\in[d/2^{2\ell-2}]\}$.
			\item Define $P_{\ell}(i) = \tilde P_{\ell-1}(2i-1) \cup \tilde P_{\ell-1}(2i)$ for  $i \in \left[\frac{d}{2^{2\ell-1}}\right]$ and let $\mathbf{P}_{\ell} = \{P_{\ell}(i) : i \in [d/2^{2\ell-1}]\}$.
			\item Define $B_\ell = \bigcup_{i=1}^{d/2^{2\ell-1}} P_\ell(i)$, which we refer to as the alive basis vectors of level $\ell$.
		\end{enumerate}
		
		\item \label{step:vec-generation} 
		{For each $\ell \in [\kappa]$ and $j \in [d/2^{2\ell-1}]$, define $\Sigma_\ell(j) \in \Ft^{d \times 2^{\ell-1}}$ as follows. Let $\vec v'_1, \dots \vec v'_{2^{\ell}}$ be an arbitrary enumeration of $P_\ell(j)$, and let the $t$-th column of $\Sigma_\ell(j)$ be  $\sigma_t :=\sum_{i=t}^{t+2^{\ell-1}-1} \vec v'_i.$}
		\item Let $\Sigma_\ell = \begin{bmatrix}\Sigma_\ell(1)\ \Sigma_\ell(2)\ \dots\ \Sigma_\ell(d/2^{2\ell-1})\end{bmatrix} \in \FF_2^{d \times d/2^\ell}$. 
	\end{enumerate}	
	\textbf{Output: } $\left[\Sigma_1, \Sigma_2, \dots, \Sigma_\kappa\right]$.
      \end{procedure}
\end{figure}
This section is dedicated to presenting a method for constructing a random matrix $\Sigma$, along with a nested system of subspaces. Each subspace is defined as the span of a subset of principal basis vectors $B_\ell$ for every level $\ell \in [\kappa]$. The goal is to demonstrate that this construction adheres to Properties ~\ref{property:nested_B}, \ref{property:almost_fullrank_sigma}, \ref{property:level_increase}, and \ref{property:pw_indep_sigma} simultaneously. We will prove  the following lemma.

	\begin{lemma}\label{lem:propertiesofsigma}
 The matrices $\Sigma_1,\ldots,\Sigma_\kappa$ and sets $B_1\ldots,B_\kappa$ constructed by Procedure~\ref{proc:constructing_sigma_prophet_inequality} satisfy  
  Properties~\ref{property:nested_B}, \ref{property:almost_fullrank_sigma}, \ref{property:level_increase}, and \ref{property:pw_indep_sigma}.
\end{lemma}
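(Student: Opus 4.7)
The plan is to verify the four properties \ref{property:nested_B}--\ref{property:pw_indep_sigma} in turn, leveraging the fact that under Procedure~\ref{proc:constructing_sigma_prophet_inequality} each block $P_\ell(j)$ evolves \emph{atomically} through the random halvings: at every level $\ell' \geq \ell$, $B_{\ell'}$ is a union of (some) blocks of $\mathbf{P}_\ell$, so $P_\ell(j)$ is either contained entirely in $B_{\ell'}$ or disjoint from it. This atomicity is the lever that reduces the probabilistic Property \ref{property:level_increase} to a sequence of independent fair coin flips.

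Properties \ref{property:nested_B} and \ref{property:almost_fullrank_sigma} are essentially combinatorial. For \ref{property:nested_B}, observe inductively that $\mathbf{P}_{\ell-1}$ partitions $B_{\ell-1}$, so $B_\ell = \bigcup \tilde{\mathbf{P}}_{\ell-1} \subseteq \bigcup \mathbf{P}_{\ell-1} = B_{\ell-1}$; a direct count gives $|B_\ell| = |\tilde{\mathbf{P}}_{\ell-1}| \cdot 2^{\ell-1} = d/2^{\ell-1}$. For \ref{property:almost_fullrank_sigma}, each column of $\Sigma_\ell(j)$ is a sum of principal basis vectors in $P_\ell(j) \subseteq B_\ell$, hence lies in $\spn(B_\ell)$. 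Linear independence of all columns of $\Sigma_\ell$ decomposes across blocks, since columns from distinct blocks have disjoint principal-basis supports. Within a single block, restricting to the rows indexed by $\vec v'_1,\ldots,\vec v'_{2^{\ell-1}}$ yields a $2^{\ell-1}\times 2^{\ell-1}$ submatrix of $\Sigma_\ell(j)$ which is lower triangular with ones on the diagonal (since $\vec v'_i$ with $i \leq 2^{\ell-1}$ appears in $\sigma_t$ iff $t \leq i$) and is therefore invertible.

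The main substance lies in Property \ref{property:level_increase}, where I expect the main obstacle to be the careful handling of conditional independence across levels. Fix any column $e$ of $\Sigma_\ell$; it lies in some block $\Sigma_\ell(j)$ and is a nonzero sum of principal basis vectors drawn from $P_\ell(j)$. By the atomicity observation, $e \in \spn(B_{\ell'})$ if and only if $P_\ell(j) \subseteq B_{\ell'}$. Given the conditioning on $\Sigma_1,\ldots,\Sigma_\ell, B_1,\ldots,B_\ell$, the block $P_\ell(j)$ and the partition $\mathbf{P}_\ell$ are both determined, while the subsequent selections $\tilde{\mathbf{P}}_\ell, \tilde{\mathbf{P}}_{\ell+1},\ldots,\tilde{\mathbf{P}}_{\ell'-1}$ are independent of this conditioning. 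A short induction on $\ell'-\ell$ then gives $\Pr[P_\ell(j) \subseteq B_{\ell'}] = 1/2^{\ell'-\ell}$: the base case uses that any fixed block has marginal probability $1/2$ of appearing in the uniformly random half $\tilde{\mathbf{P}}_\ell$, and the inductive step observes that, conditional on $P_\ell(j) \subseteq B_t$, the unique $\mathbf{P}_t$-block containing $P_\ell(j)$ is retained in $\tilde{\mathbf{P}}_t$ with a fresh probability $1/2$ independent of prior survival events.

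Finally, Property \ref{property:pw_indep_sigma} follows from deterministic case analysis. Columns coming from $\Sigma_{\ell_1}$ and $\Sigma_{\ell_2}$ with $\ell_1 \neq \ell_2$ are sums of different numbers ($2^{\ell_1-1}$ versus $2^{\ell_2-1}$) of distinct principal basis vectors and so cannot coincide. Columns from distinct blocks of the same $\Sigma_\ell$ have disjoint principal-basis supports, hence differ. Columns within the same block $\Sigma_\ell(j)$ are linearly independent by Property \ref{property:almost_fullrank_sigma} and therefore distinct (in $\FF_2$, any two equal nonzero vectors would sum to zero, contradicting independence). Since the entire argument is deterministic, distinctness of columns of $\Sigma$ holds with probability one.
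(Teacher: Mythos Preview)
Your proof is correct and follows essentially the same approach as the paper's: both use the atomic survival of blocks (the paper's Observations~\ref{obs:blocks-survives} and~\ref{obsn:probability-of-survival}) to reduce Property~\ref{property:level_increase} to a chain of fair coin flips, and both handle Property~\ref{property:pw_indep_sigma} by counting principal basis vectors across levels and invoking full column rank within a level. The only cosmetic difference is your linear independence argument for Property~\ref{property:almost_fullrank_sigma}, where you exhibit a lower-triangular submatrix on rows $\vec v'_1,\ldots,\vec v'_{2^{\ell-1}}$, whereas the paper argues that each column $\sigma_t$ contains the ``fresh'' basis vector $\vec v'_{t+2^{\ell-1}-1}$ absent from all prior columns; both are equivalent triangularization arguments.
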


To aid understanding, we now give a brief explanation of our construction. The procedure starts at the first level by defining the ``alive" basis vectors as $B_1 = \set{\vec e_1, \dots, \vec e_d}$, which are the principal basis vectors in $\FF_2^d$. It then organizes these vectors into consecutive pairs to create the set $\vec P_1$, a partition of $B_1$. We use $P_1(j)=\{e_{2j-1},e_{2j}\}$ to denote the $j$th pair in $\vec P_1$, for $j \in [d/2]$. We let $\Sigma_1(j) \in \FF_2^{d \times 1}$ be a matrix consisting of a single non-zero column, chosen as a linear combination of the two vectors in $P_1(j)$. These matrices are then concatenated to form the $d \times \frac{d}{2}$ matrix $\Sigma_1 = \left[ \Sigma_1(1),\ldots,\Sigma_1\left(\frac{d}{2}\right)\right]$ with full column rank.

In the second level, the procedure randomly selects half of the elements from $\vec P_1$ to form $\vec {\tilde P_1}$. For simplicity in our discussion, we will renumber the selected parts in $\vec{ \tilde P_1}$ as $\set{ \tilde P_1(i): i \in [d/4]}$. Following this, we combine consecutive parts in $\vec {\tilde P_1}$ to create the second level partition, formally denoted as $\vec {P_2}=\set{P_2(j): j \in [d/8]}$, where $P_2(j)=\tilde{P}_1(2j-1) \cup \tilde{P}_1(2j)$. The  basis vectors appearing in a part of $\vec{ P_2}$ are designated the ``alive" basis vectors at level two, and denoted by $B_2$. Note that the $B_2$ consists of half the vectors in $B_1$.
In a manner akin to the first level, vectors in each part $P_2(j)$ in $\vec P_2$ are linearly combined to create a  matrix $\Sigma_2(j) \in \FF_2^{d \times 2}$, with two linearly-independent columns lying within $\spn(P_{2}(j))$. Moreover, the procedure ensures that columns of each $\Sigma_2(j)$ are distinct from the columns of $\Sigma_1$. These matrices are then concatenated to form $\Sigma_2$.

\begin{figure}[t]
	\centering
	\begin{tikzpicture}[xscale=0.9]
    \def\marksegments{1,2,5,8,9,10,13,15}
    \def\MarkList{2,3,6,9,10,11,14,16}

    \edef\levelcord{0}
    \node[left] at (1,\levelcord) {$\textbf{P}_1$};
    \readlist*\activelist{1,1,1,1,1,1,1,1,1,1,1,1,1,1,1,1}

    \foreach \i in {1,2,...,16} {
        \ifnum\activelist[\i]=1
            \draw[fill=black, draw=black] (\i+0.3,\levelcord) circle (1.3mm);
            \draw[fill=black, draw=black] (\i+0.7,\levelcord) circle (1.3mm);
        \else
            \draw[fill=black, draw=black, opacity=0.2] (\i+0.3,\levelcord) circle (1.3mm);
            \draw[fill=black, draw=black, opacity=0.2] (\i+0.7,\levelcord) circle (1.3mm);
        \fi
    }

    \readlist*\boxcolor{0,1,0,1,0,1,0,1,0,1,0,1,0,1,0,1}
    \readlist*\begincord{1,2,3,4,5,6,7,8,9,10,11,12,13,14,15,16}
    \readlist*\endcord{2,3,4,5,6,7,8,9,10,11,12,13,14,15,16,17}
    \readlist*\croslist{0,1,1,0,0,1,0,1,0,1,1,1,0,0,1,0}
    \foreach \i in {1,2,...,16}{

        \ifnum\boxcolor[\i]=0
            \draw[fill=red, opacity=0.2] (\begincord[\i],\levelcord+-0.2) rectangle    (\endcord[\i],\levelcord+0.2);
        \else
            \draw[fill=blue, opacity=0.2] (\begincord[\i],\levelcord+-0.2) rectangle    (\endcord[\i],\levelcord+0.2);
        \fi
        
        \pgfmathparse{\begincord[\i]+\endcord[\i]}
        \pgfmathparse{\pgfmathresult/2}
        \xdef\midcord{\pgfmathresult}
        \node[below, font=\fontsize{6pt}{8pt}\selectfont] at (\midcord,\levelcord-0.32) {$P_1(\i)$};
    }

    \edef\levelcord{2}
    \node[left] at (1,\levelcord) {$\textbf{P}_2$};
    \readlist*\activelist{1,0,0,1,1,0,1,0,1,0,0,0,1,1,0,1}

    \foreach \i in {1,2,...,16} {
        \ifnum\activelist[\i]=1
            \draw[fill=black, draw=black] (\i+0.3,\levelcord) circle (1.3mm);
            \draw[fill=black, draw=black] (\i+0.7,\levelcord) circle (1.3mm);
        \else
            \draw[fill=gray, draw=gray, opacity=0.2] (\i+0.3,\levelcord) circle (1.3mm);
            \draw[fill=gray, draw=gray, opacity=0.2] (\i+0.7,\levelcord) circle (1.3mm);
        \fi
    }

    \readlist*\boxcolor{0,1,0,1}
    \readlist*\begincord{1,5,9,14}
    \readlist*\endcord{5,8,14,17}
    \readlist*\croslist{1,0,0,1}
    \foreach \i in {1,2,3,4}{

        \ifnum\boxcolor[\i]=0
            \draw[fill=red, opacity=0.2] (\begincord[\i],\levelcord+-0.2) rectangle    (\endcord[\i],\levelcord+0.2);
        \else
            \draw[fill=blue, opacity=0.2] (\begincord[\i],\levelcord+-0.2) rectangle    (\endcord[\i],\levelcord+0.2);
        \fi
        
        \pgfmathparse{\begincord[\i]+\endcord[\i]}
        \pgfmathparse{\pgfmathresult/2}
        \xdef\midcord{\pgfmathresult}
        \node[below, font=\fontsize{6pt}{8pt}\selectfont] at (\midcord,\levelcord-0.32) {$P_2(\i)$};
    }
    
    \edef\levelcord{4}
    \node[left] at (1,\levelcord) {$\textbf{P}_2$};
    \readlist*\activelist{0,0,0,0,1,0,1,0,1,0,0,0,1,0,0,0}

    \foreach \i in {1,2,...,16} {
        \ifnum\activelist[\i]=1
            \draw[fill=black, draw=black] (\i+0.3,\levelcord) circle (1.3mm);
            \draw[fill=black, draw=black] (\i+0.7,\levelcord) circle (1.3mm);
        \else
            \draw[fill=gray, draw=gray, opacity=0.2] (\i+0.3,\levelcord) circle (1.3mm);
            \draw[fill=gray, draw=gray, opacity=0.2] (\i+0.7,\levelcord) circle (1.3mm);
        \fi
    }

    \readlist*\boxcolor{0}
    \readlist*\begincord{5}
    \readlist*\endcord{14}
    \readlist*\croslist{0}
    \foreach \i in {1}{

        \ifnum\boxcolor[\i]=0
            \draw[fill=red, opacity=0.2] (\begincord[\i],\levelcord+-0.2) rectangle    (\endcord[\i],\levelcord+0.2);
        \else
            \draw[fill=blue, opacity=0.2] (\begincord[\i],\levelcord+-0.2) rectangle    (\endcord[\i],\levelcord+0.2);
        \fi
        
        \pgfmathparse{\begincord[\i]+\endcord[\i]}
        \pgfmathparse{\pgfmathresult/2}
        \xdef\midcord{\pgfmathresult}
        \node[below, font=\fontsize{6pt}{8pt}\selectfont] at (\midcord,\levelcord-0.30) {$P_3(\i)$};
    }
    
    \edef\levelcord{4}
    \node[left] at (1,\levelcord) {$\textbf{P}_3$};
    \readlist*\colormap  {2,2,2,2,0,2,0,2,0,2,2,2,0,2,2,2}
    \readlist*\activelist{1,0,0,1,1,0,1,0,1,0,0,0,1,1,0,1}

    \draw[->, >=latex, very thick] (1.5,0.5) -- (1.5,1.7);
    \draw[->, >=latex, very thick] (4.5,0.5) -- (4.5,1.7);
    \draw[->, >=latex, very thick] (5.5,0.5) -- (5.5,1.7);
    \draw[->, >=latex, very thick] (7.5,0.5) -- (7.5,1.7);
    \draw[->, >=latex, very thick] (9.5,0.5) -- (9.5,1.7);
    \draw[->, >=latex, very thick] (13.5,0.5) -- (13.5,1.7);
    \draw[->, >=latex, very thick] (14.5,0.5) -- (14.5,1.7);
    \draw[->, >=latex, very thick] (16.5,0.5) -- (16.5,1.7);

    \draw[->, >=latex, very thick] (6.5,2.5) -- (6.5,3.7);
    \draw[->, >=latex, very thick] (11.5,2.5) -- (11.5,3.7);

\end{tikzpicture}
	\caption{\textbf{A snapshot of the construction of $B_\ell$ and partition $\vec P_\ell$ for the initial $3$ levels.} In this illustration, vertically aligned dots in groups of three represent the same basis vector. At any given level $\ell$, vectors indicated in solid colors comprise the alive vectors $B_\ell$ of level $\ell$. 
	Furthermore, the colored boxes at each level $\ell$, along with their constituent alive basis vectors, represent the parts $P_\ell(i)$ of the partition of $\vec P_\ell$. Random parts $\vec {\tilde P_\ell}$ that survive through to the next level are indicated by upward arrows.}
	\label{fig:construction_sigma}
\end{figure}
For each subsequent level $\ell$, the process follows a similar pattern. Given $B_{\ell-1}, \vec P_{\ell - 1}$ and $\Sigma_{\ell - 1}$,
\begin{enumerate}
	\item We let $\tilde{\vec P}_{\ell - 1}$ be the random half of $\vec P_{\ell - 1}$ and then obtain $\vec P_\ell$ whose parts  are formed by merging two consecutive parts from $\tilde{\vec P}_{\ell - 1}$ after an arbitrary enumeration. We let $B_\ell$ consist of the vectors appearing in $\vec P_\ell$. Here, note that $|\vec P_{\ell }| = \frac{1}{4} |\vec P_{\ell-1}| = \frac{d}{2^{2\ell -1}}$. Moreover, $|P_\ell(j)| = 2^\ell $ for any $j\in [d/2^{2\ell -1}]$, and $|B_\ell| = \frac{d}{2^{\ell}-1}$.  We refer the reader to Figure~\ref{fig:construction_sigma} for a visualization of the construction of $B_\ell$, $\vec P_\ell$, and $\vec {\tilde P_{\ell}}$ for the initial three levels.
	\item For each part $P_\ell(j)$ in $\vec P_\ell$, a matrix $\Sigma_\ell(j) \in \FF_2^{d \times |P_\ell(j)|/2}$ with full column rank is constructed. Its columns lie within $\spn(P_\ell(j))$, and are distinct from the columns of $\Sigma_1,\dots, \Sigma_{\ell-1}$.
	\item The matrices $\Sigma_\ell(j)$ are concatenated to form $\Sigma_\ell$. \end{enumerate}

    Since each $\Sigma_\ell(j)$ has linearly-independent columns in $\spn(P_\ell(j))$, and $P_\ell$ partitions the linearly-independent set of vectors $B_\ell$, it follows that each $\Sigma_\ell$ has full column rank with its columns in $\spn(B_\ell)$.

\begin{figure}[t]
	\begin{align*}
		P_3(j) &= \begin{Bmatrix} &\color{white}\sigma_1 = \color{black}&v_1',\color{white},\color{black} &v_2',\color{white},\color{black} &v_3',\color{white},\color{black} &v_4',\color{white},\color{black} &v_5',\color{white},\color{black} &v_6',\color{white},\color{black} &v_7',\color{white},\color{black} &v_8' \end{Bmatrix}\\
		\Sigma_3(j) &= 
		\begin{bmatrix} 
			&\sigma_1 = &v_1' + &v_2' + &v_3' + &v_4' \color{white}+\color{black}& & &\\
			&\sigma_2 = & &v_2' + &v_3' + &v_4' + &v_5' \color{white}+\color{black} & &\\
			&\sigma_3 = & & &v_3' + &v_4' + &v_5' + &v_6' \color{white}+\color{black} &\\
			&\sigma_4 = & & & &v_4' + &v_5' + &v_6' + &v_7'\color{white}aaaaa\color{black}
		\end{bmatrix}^T
	\end{align*}
	\caption{\textbf{Construction of $\Sigma_3(j)$ matrix}.}
	\label{fig:vec-generation}
\end{figure}    

Our construction also guarantees that the columns of all the matrices $\Sigma_\ell$, across all the different levels $\ell$,  are distinct and therefore pairwise linearly-independent in $\FF_2$.  Each column of $\Sigma_\ell(j)$ is formulated as a sum of distinct subsets (which may overlap) of vectors from $P_\ell(j)$, with each subset containing $2^{\ell-1}$ vectors. Given that the number of columns in $\Sigma_\ell(j)$ is half the size  $P_\ell(j)$, such a construction is possible. In detail, given $P_\ell(j) = \{\vec v_1' , \dots, \vec v'_{2^\ell}\}$, the $t$-th column of $\Sigma_\ell(j)$ is defined as: $$\sigma_t :=\sum_{i=t}^{t+2^{\ell-1}-1} \vec v'_i.$$
As an illustrative example, we depict the construction of $\Sigma_3(j)$ for  $j\in [d/2^{2\ell -1}]$ in Figure~\ref{fig:vec-generation}.	

The final matrix $\Sigma$ is the concatenation of the matrices across all levels, namely $\Sigma:=\begin{bmatrix} \Sigma_1\ \Sigma_2\ \dots\ \Sigma_\kappa \end{bmatrix}$.
The precise construction is described in Procedure \ref{proc:constructing_sigma_prophet_inequality}.

We now present two essential observations about our construction, which are vital for the proof of Lemma~\ref{lem:propertiesofsigma}.

\begin{observation}\label{obs:blocks-survives}
For each level $\ell \in [\kappa]$ and part $P_\ell(j) \in \vec P_\ell$, there exists level $\ell'$ such that $\ell \leq \ell' \leq \kappa$ and
\begin{enumerate}
    \item $P_\ell(j) \subseteq B_i$ for all $i \leq \ell'$,
    \item $P_\ell(j) \cap B_i = \emptyset$ for all $i >\ell'$.
\end{enumerate}

\end{observation}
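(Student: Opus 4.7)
The plan is to strengthen the observation into an inductive ``nesting'' invariant that tracks how a block $P_\ell(j)$ sits inside the parts at subsequent levels, and then read off the desired $\ell'$ as the last level at which this block has not been thinned out. Specifically, I would prove by induction on $i \in \{\ell, \ell+1, \dots, \kappa\}$ the dichotomy that either (a) $P_\ell(j) \cap B_i = \emptyset$, or (b) there is a unique index $k$ with $P_\ell(j) \subseteq P_i(k)$. The base case $i=\ell$ is immediate with $k=j$.

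For the inductive step, case~(a) at level $i$ propagates to level $i+1$ via $B_{i+1} \subseteq B_i$ (Property~\ref{property:nested_B}, which is built into the construction since $B_{i+1}$ is the union of parts of $\mathbf{P}_{i+1}$, each of which is contained in $\tilde{\mathbf{P}}_i \subseteq \mathbf{P}_i$, and $\mathbf{P}_i$ partitions $B_i$). In case~(b), write $P_\ell(j) \subseteq P_i(k)$. If $P_i(k)$ survives to $\tilde{\mathbf{P}}_i$, say as $\tilde P_i(k')$, the merging rule places it inside the single part $P_{i+1}(\lceil k'/2 \rceil)$, re-establishing case (b) at level $i+1$. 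Otherwise $P_i(k)$ is not in $\tilde{\mathbf{P}}_i$, and because the disjoint partition $\mathbf{P}_{i+1}$ of $B_{i+1}$ is built exclusively from surviving parts of $\mathbf{P}_i$, every vector of $P_i(k)$---and hence of $P_\ell(j)$---is absent from $B_{i+1}$, yielding case (a).

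Having established the dichotomy, I would set $\ell' := \max\{i \in [\ell,\kappa] : P_\ell(j) \subseteq B_i\}$, which is well-defined since the set contains $\ell$. For $i \leq \ell$, Property~\ref{property:nested_B} gives $P_\ell(j) \subseteq B_\ell \subseteq B_i$ directly. For $\ell \leq i \leq \ell'$, the dichotomy at level $i$ must be (b): case (a) would contradict $P_\ell(j) \subseteq B_{\ell'}$ via the containment $B_{\ell'} \subseteq B_i$, so $P_\ell(j) \subseteq B_i$. For $i > \ell'$, maximality of $\ell'$ rules out (b), so (a) holds, i.e.\ $P_\ell(j) \cap B_i = \emptyset$.

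The delicate point is choosing the right inductive invariant: tracking only ``$P_\ell(j) \subseteq B_i$'' is not strong enough, because a priori $P_\ell(j)$ could end up partially inside and partially outside $B_{i+1}$ after the next random thinning. The strengthening to ``$P_\ell(j)$ lies inside a single parent part $P_i(k)$'' precisely captures the fact that the randomization acts at the granularity of parts, not of individual vectors, so the ``thin then merge'' procedure of Procedure~\ref{proc:constructing_sigma_prophet_inequality} can never split a surviving block.
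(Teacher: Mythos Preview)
Your proof is correct and follows essentially the same approach as the paper: both hinge on the observation that at each level transition a part $P_\ell(j)$ is either wholly contained in some part of $\mathbf P_{i+1}$ or disjoint from $B_{i+1}$, and then use the nestedness of the $B_i$'s to extract $\ell'$. Your version is simply more explicit, spelling out the induction and the strengthened invariant (containment in a single parent part) that the paper treats in one informal sentence.
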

\begin{proof}
	Considering a specific part $P_\ell(j)$ within $\vec P_{\ell}$, it is constructed such that either it becomes part of a subsequent part $P_{\ell+1}(j')$ in $\vec P_{\ell+1}$, or it does not intersect with any part in $\vec P_{\ell+1}$. This pattern holds for any two consecutive levels. Therefore, for any two levels $\ell < \ell'$ and a given part $P_{\ell}(j)$, it is either entirely contained within a part $P_{\ell'}(j')$ in $\vec P_{\ell'}$, or it shares no common elements with any part in $\vec P_{\ell'}$. Since $\vec P_{\ell'}$ is a partition of $B_{\ell'}$, this implies that $P_\ell(j)$ is either a subset of $B_{\ell'}$ or has no overlap with $B_{\ell'}$. 
	Since $\set{B_i}_{i \in [\kappa]}$ form a nested system, i.e., $B_i \subseteq B_{i+1}$ for each $i \in [\kappa-1]$, the proof is complete.	
\end{proof}

The following observation directly follows from the fact that each part survives to the next level with probability $1/2$ independently.
\begin{observation}\label{obsn:probability-of-survival}
	For any level $\ell \in [\kappa-1]$ and part $P_\ell(j) \in \vec P_\ell$ for $j\in [d/2^{2\ell-1}]$,
	\begin{equation*}
		\Pr[P_\ell(j) \subseteq B_{\ell+1} \mid P_\ell(j) \subseteq B_\ell] = \frac 1 2.
	\end{equation*}
\end{observation}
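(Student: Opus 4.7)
The plan is very short because the observation is essentially a direct consequence of the definitions in Procedure~\ref{proc:constructing_sigma_prophet_inequality}. The two things I need to verify are: (a) that the conditioning event $\{P_\ell(j)\subseteq B_\ell\}$ is vacuous (so it can be safely dropped), and (b) that the event $\{P_\ell(j)\subseteq B_{\ell+1}\}$ is exactly the event that the part $P_\ell(j)$ survives into the uniformly random half $\tilde{\mathbf P}_\ell$.

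For (a), I would point out that the procedure defines $B_\ell := \bigcup_{i=1}^{d/2^{2\ell-1}} P_\ell(i)$, so by construction every $P_\ell(i)\in \mathbf P_\ell$ is a subset of $B_\ell$; in particular the conditioning event holds with probability one. For (b), I would trace through the construction at level $\ell+1$: the alive set $B_{\ell+1}$ is the union of the parts $P_{\ell+1}(\cdot)$, and each $P_{\ell+1}(\cdot)$ is in turn a union of two consecutive parts in $\tilde{\mathbf P}_\ell$. Consequently $B_{\ell+1}=\bigcup_{Q\in \tilde{\mathbf P}_\ell} Q$, and since $\mathbf P_\ell$ is a partition of $B_\ell$, a particular part $P_\ell(j)$ is contained in $B_{\ell+1}$ if and only if $P_\ell(j)\in \tilde{\mathbf P}_\ell$.

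It then remains to compute $\Pr[P_\ell(j)\in \tilde{\mathbf P}_\ell]$. Since $\tilde{\mathbf P}_\ell$ is a uniformly random subset of $\mathbf P_\ell$ of size exactly $|\mathbf P_\ell|/2$, a standard symmetry/counting argument gives
\[
\Pr[P_\ell(j)\in \tilde{\mathbf P}_\ell]
= \frac{\binom{|\mathbf P_\ell|-1}{|\mathbf P_\ell|/2 - 1}}{\binom{|\mathbf P_\ell|}{|\mathbf P_\ell|/2}}
= \frac{1}{2}.
\]
Combining the two reductions with this calculation yields $\Pr[P_\ell(j)\subseteq B_{\ell+1}\mid P_\ell(j)\subseteq B_\ell]=1/2$, as claimed. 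There is no real obstacle here; the only thing to be careful about is to phrase precisely that the conditioning event has probability one, so that the conditional probability is literally the unconditional one.
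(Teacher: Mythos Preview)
Your proposal is correct and essentially matches the paper's approach: the paper justifies this observation in a single sentence (``each part survives to the next level with probability $1/2$''), and your argument simply spells out why, via the symmetry of choosing a uniformly random half of $\mathbf P_\ell$. If anything, your treatment is slightly more careful than the paper's, since a uniformly random half does not make survivals literally independent across parts; only the marginal $1/2$ is needed here, and that is exactly what your binomial identity establishes.
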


We are now ready to prove the Lemma~\ref{lem:propertiesofsigma}. 
\begin{proof}[Proof of Lemma~\ref{lem:propertiesofsigma}]
	
	We will demonstrate separately that each property is met by the construction:
	\begin{enumerate}[label=(\roman*)]
		\item At each level, precisely half of the ``alive'' basis vectors continue to the next level. Initially, at the first level, there are $d$ alive basis vectors, i.e., $|B_1| = d$. We conclude Property~\ref{property:nested_B}.
		\item By construction, each column of $\Sigma_\ell$ is a sum of standard basis vectors in $B_\ell$. It remains to show linear independence. Recall that $\Sigma_\ell = [ \Sigma_\ell(1), \ldots, \Sigma_\ell(d/2^{2 \ell - 1}) ]$, with $\Sigma_\ell(j) \sse \spn(P_\ell(j))$. Also recall that $\vec P_\ell$ is a partition of $B_\ell$. Consider an arbitrary column $e$ of $\Sigma_{\ell}$, corresponding to the $t$th column of a particular $\Sigma_\ell(j)$. From Step~\ref{step:vec-generation} of Procedure~\,\ref{proc:constructing_sigma_prophet_inequality}, $e= \sum_{i=t}^{s} \vec v'_i$ for $s=t+2^{\ell -1} - 1$, where $\vec v'_1, \dots \vec v'_{2^{\ell}}$ is a fixed enumeration of $P_\ell(j)$. All columns of $\Sigma_\ell$ to the left of $e$, whether in $\Sigma_\ell(j)$ or otherwise, are orthogonal to $v'_{s}$, whereas $e$ is not. Therefore, $e$ is linearly independent of all columns to its left. Applying this argument inductively, we conclude that $\Sigma_\ell$ has full column rank. Therefore, our construction satisfies Property~\ref{property:almost_fullrank_sigma}.

		\item Fix two levels $\ell , \ell' \in [\kappa]$ such that $\ell ' > \ell$. Let  $\vec e \in \Sigma_\ell$ be an arbitrary column of $\Sigma_\ell$. Let $j$ be such that $\vec e$ is a column of $\Sigma_\ell(j)$. By construction, we have that $\vec e \in \spn(P_\ell(j))$. Then,
		\begin{align*}
			\Pr[e \in \spn(B_{\ell'}) &\mid \{e \in \Sigma_{\ell}(j)\} , \Sigma_1,\dots, \Sigma_\ell, B_1,\ldots,B_\ell]\\
			&= \Pr[P_\ell(j) \subseteq \spn(B_{\ell'})  \mid   \Sigma_1,\dots, \Sigma_\ell, B_1,\ldots,B_\ell]\\
			&= \prod_{i = \ell +1}^{\ell'} \Pr \left [P_\ell(j) \subseteq \spn(B_{i})  \mid P_{\ell}(j) \subseteq \spn(B_{i-1}), \Sigma_1,\ldots,\Sigma_\ell, B_1,\ldots,B_\ell
			\right]\\
			&=\prod_{i = \ell +1}^{\ell'} \Pr \left [P_\ell(j) \subseteq \spn(B_{i})  \mid
			P_{\ell}(j) \subseteq \spn(B_{i-1})
			\right]\\
			&= \left( \frac{1}{2} \right)^{\ell' - \ell }.
		\end{align*}
		Above, the first equality follows from Observation~\ref{obs:blocks-survives}. The second one follows from the fact that $\set{B_\ell}_{\ell \in [\kappa]}$ forms a nested system. The third equality is because $\Sigma_1,\ldots,\Sigma_\ell$ and $B_1,\ldots,B_\ell$ are functions of $\vec P_1,\ldots,\vec P_\ell$, conditionally independent of what transpires at higher levels. 
        The last equality follows from Observation~\ref{obsn:probability-of-survival}. Therefore, the construction satisfies Property~\ref{property:level_increase}.
		\item Each $\Sigma_\ell$ has full column rank, and therefore it's columns are distinct. It remains to compare columns $\sigma \in \Sigma_\ell$ and $\sigma' \in \Sigma_{\ell'}$ for $\ell \neq \ell'$. The vector $\sigma$ is the sum of $2^{\ell-1}$ principal basis vectors, while $\sigma'$ is the sum of $2^{\ell'-1}$ principal basis vectors. It follows that $\sigma$ and $\sigma'$ are distinct. This confirms Property~\ref{property:pw_indep_sigma}.
	\end{enumerate}	
\end{proof}

\subsection{Optimal Pairwise Independent Matroid Prophet Inequality}

In this section, we present an optimal (up to a constant factor) algorithm for the pairwise independent matroid prophet inequality problem against the almighty adversary. It is important to note that although our algorithm is designed to perform against the strongest adversary, the almighty adversary; our impossibility result holds against the weakest adversary, namely the oblivious adversary. 

The following theorem is the main result of the section.
\begin{theorem}\label{thm:log rank PwI PI}
	There exists an $\Omega \left(\frac{1}{\log\rank}\right)$-competitive algorithm for the pairwise independent prophet inequality problem against the almighty adversary for any given matroid $\M=(E, \I)$. Here, $\rank$ is a shorthand notation for $\rank(\M)$.
\end{theorem}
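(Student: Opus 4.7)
The plan is to design an order-oblivious bucketing algorithm and to leverage pairwise independence only where greedy selection on its own does not suffice. Let $\OPT := \Ex[\rank_{\vec w}(E)]$ denote the prophet's expected reward, which the gambler can compute from the known prior. Since any singleton is independent, $\Ex[\max_{e} \vec w(e)] \leq \OPT$, and since every independent set has size at most $\rank$, elements of weight below $\frac{\OPT}{2\rank}$ can together contribute at most $\OPT/2$ to any feasible solution.

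I would partition the elements by realized weight into $k+1 = O(\log \rank)$ non-discard buckets covering $[\frac{\OPT}{2\rank}, 3\OPT]$ dyadically, together with a discard bucket $B_{-1}$ for weights below this range and a tail bucket $B_\infty$ for weights exceeding $3\OPT$. For each non-discard bucket $B$, define $q(B) := \Ex[\rank_{\vec w}(B)]$. Since a max-weight independent set decomposes across buckets, $\OPT \leq \sum_B q(B)$, while $q(B_{-1}) \leq \OPT/2$, so $\sum_{B \neq B_{-1}} q(B) \geq \OPT/2$. Pigeonhole then yields a bucket $B^\star$ with $q(B^\star) = \Omega(\OPT/\log \rank)$. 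The algorithm commits to $B^\star$ upfront (using only the prior) and greedily accepts any arriving element of $B^\star$ subject to feasibility, regardless of the order chosen by the almighty adversary.

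For the analysis, the two cases differ substantively. If $B^\star$ is a regular bucket, all its weights lie within a factor of two, and after processing every arrival greedy has produced a maximal independent set in $B^\star$; since all maximal independent sets of a matroid share the same cardinality, the greedy solution has weight at least $\frac{1}{2}\rank_{\vec w}(B^\star)$, independent of the order. If instead $B^\star = B_\infty$, I first apply Markov to get $\Pr[B_\infty \neq \emptyset] \leq 1/3$, then apply Lemma~\ref{lem:local-lemma-type} to the pairwise-independent events $\{e \in B_\infty\}$ to deduce $\sum_e \Pr[e \in B_\infty] \leq \tfrac{\Pr[B_\infty \neq \emptyset]}{1-\Pr[B_\infty \neq \emptyset]} \leq \tfrac{1}{2}$. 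For any fixed $e$ conditioned on its realization $\vec w(e) > 3\OPT$, pairwise independence of weights keeps each marginal $\Pr[f \in B_\infty]$ unchanged, so a union bound gives that $e$ is the unique occupant of $B_\infty$ with conditional probability at least $1/2$. Such a solitary $e$ is trivially feasible, and greedy accepts it irrespective of order, yielding $\Ex[\text{reward from } B_\infty] \geq \tfrac{1}{2}\Ex[\vec w(B_\infty)] \geq \tfrac{1}{2} q(B_\infty)$.

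Combining both cases produces an $\Omega(1/\log \rank)$-competitive algorithm against the almighty adversary. The main subtlety I anticipate is the tail bucket: regular buckets succumb to a standard factor-of-two greedy argument with no use of pairwise independence, but $B_\infty$ genuinely requires Lemma~\ref{lem:local-lemma-type} to rule out crowded realizations of very large weights. Everything else — the bucket count, the pigeonhole loss, and the order-obliviousness — is routine once this hurdle is cleared.
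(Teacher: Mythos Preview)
Your proposal is correct and follows essentially the same approach as the paper: dyadic bucketing of weights above $\OPT/(2\rank)$, pigeonhole to pick the best bucket, a factor-two greedy argument on regular buckets, and the Markov $+$ Lemma~\ref{lem:local-lemma-type} $+$ union-bound analysis on the tail bucket. The only differences are cosmetic constants (you cut the tail at $3\OPT$ and obtain $\Pr[B_\infty\neq\emptyset]\leq 1/3$, $\sum_e\Pr[e\in B_\infty]\leq 1/2$, solitary probability $\geq 1/2$; the paper cuts at roughly $4\OPT$ and gets $1/4$, $1/3$, $2/3$), and your passing remark that ``any singleton is independent'' implicitly assumes loops have been discarded, which is harmless.
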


The algorithm defines the strategy for the gambler by dividing elements into \emph{weight buckets}. Let $\M=(E,\I)$ be a matroid and $\w\sim \D$ be a pairwise independent weight distribution. We use $\opt$ to denote the expected value of $\rank_{\w}(\M)$ under $\D$. To proceed, we set $k$ as the ceiling of $\log (8 \rank)$ and introduce $k+2$ buckets: $B_0, B_1, \dots B_k$, along with $B_\infty$. The buckets are defined as follows:
\begin{align*}
	B_0 = \left[0, \frac{\opt}{2\rank}\right),
	B_1 = \left[\frac{\opt}{2\rank}, \frac{2 \opt}{2\rank} \right), \dots , B_k = \left[ \cdot \frac{2^{k-1}  \opt}{2\rank},  \frac{2^k  \opt}{2\rank}\right),
	B_{\infty} = \left[ \frac{2^k \opt}{2\rank}, \infty \right).
\end{align*}

Given any draw of the weights $w \sim \mathcal D$, we partition the elements into random sets $E_0, E_1, \dots E_k$ and $E_\infty$ based on their realization. We define $E_i = \{e \in E: \vec w(e) \in B_i\}$ for any $i \in \{0, \dots k, \infty\}$, which represents the elements whose weight lies in bucket $B_i$. It is important to note that the sets $E_i$ for $i \in \{0,\dots, k, \infty \}$ are random. 

We define the expected optimal reward from bucket $B_i$ for any $i\in \{0,1, \dots, k, \infty \}$ as 
$$\opt(B_i) = \Ex\left[ \max_{S \subseteq E_i, S \in \I} \rank_{\vec w}(S) \right].$$ This represents the maximum expected weighted rank of elements belonging to bucket $B_i$. We further define $$B^* = \max_{i\in \{1,\dots, k,\infty\}} \opt(B_i) .$$
	Note that we ignore  bucket $B_0$ because every element $e\in E_0$ has weight $\vec w(e) \leq \frac{\opt}{2 \cdot \rank}$, and therefore $\opt(B_0) \leq \frac{\opt}{2}$. Next, we aim to upper-bound the expected reward of the prophet by considering the total expected optimal rewards from each bucket. By invoking the fact that $\opt(B_0)\leq \frac \opt 2$, we obtain:

\begin{equation}\label{eq:bound_opt}
	\opt \leq \sum_{i=0}^{k} \opt(B_i) + \opt(B_\infty) \leq 2 \cdot \left(\sum_{i=1}^{k} \opt(B_i) + \opt(B_\infty)\right).
\end{equation}
Based on these observations, we define Algorithm~\ref{alg:pi_algorithm}.
\begin{algorithm}[t]
	\caption{Pairwise Independent Matroid Prophet Inequality Algorithm.}\label{alg:logrankPI}
	\label{alg:pi_algorithm}
	\textbf{Input: }Matroid $\mathcal M = (E,\I)$ and pairwise independent joint weight distribution $\w\sim \D$.\\
	\textbf{Output:} $S$.
	\begin{algorithmic}
		\State Compute $\opt$ and buckets $B_0,\dots, B_{k+1}$
		\State Let $B^* = \argmax_{B_i \in \{B_1,\dots, B_{k}, B_\infty\}} \opt (B_i)$
		\State Set $S\gets \emptyset$
		\While{Visit elements $e \in E$ in given order}
		\If{$\w(e)\in B^*$ and $S\cup \{e\} \in \I$}
		\State Update $S\gets S\cup \{e\}$
		\EndIf
		\EndWhile
	\end{algorithmic}
\end{algorithm}

	To prove the competitive ratio of the algorithm, we consider two cases: (i) when $B^* = B_i$ for some $i \in [k]$, and (ii) when $B^* = B_\infty$. First, we demonstrate that within each bucket $B_i$ (where $i \in \{1, \dots, k\}$), the greedy algorithm obtains a substantial portion of the optimal solution. This is because the weights of items in each bucket differ by at most a factor of $2$

\begin{lemma}\label{lem:finite_weights}
	Let $S$ be the output of Algorithm~\ref{alg:pi_algorithm}. If  $B^* \in \{B_1, \dots B_k\}$ then
	$$\Ex[\w(S) ] \geq \frac{1}{2} \cdot \Ex[\rank_{\w}(B^*) ].$$
\end{lemma}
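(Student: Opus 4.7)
The plan is to reduce to a deterministic, pointwise inequality for each realization of $\w$, then take expectations. Let $i \in [k]$ be the index with $B^* = B_i$, write $E^* := \{ e \in E : \w(e) \in B^* \}$ (the random set $E_i$), and let $w_{\min} = \frac{2^{i-1}\opt}{2\rank}$ and $w_{\max} = \frac{2^i \opt}{2\rank}$, so every element of $E^*$ has weight in $[w_{\min}, w_{\max})$ and $w_{\max} = 2 w_{\min}$.

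The first step is to observe that on any realization of $\w$ and any fixed arrival order chosen by the almighty adversary, Algorithm~\ref{alg:pi_algorithm} inspects exactly the elements of $E^*$ (it ignores elements whose weight is outside $B^*$) and greedily takes each one that extends the current independent set. The resulting $S \subseteq E^*$ is therefore a maximal independent subset of $E^*$, and by the matroid axioms every maximal independent subset of $E^*$ is a base of the restricted matroid $\M|_{E^*}$. In particular $|S| = \rank(E^*)$, regardless of the order in which $E^*$'s elements arrive.

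The second step is the factor-of-two comparison. Pointwise on any realization,
\begin{align*}
\w(S) \;\geq\; w_{\min} \cdot |S| \;=\; w_{\min} \cdot \rank(E^*) \;\geq\; \tfrac{1}{2}\, w_{\max} \cdot \rank(E^*) \;\geq\; \tfrac{1}{2}\, \rank_{\w}(E^*),
\end{align*}
where the last inequality uses that $\rank_{\w}(E^*) = \max\{\w(T) : T \subseteq E^*, T \in \I\} \leq w_{\max} \cdot \rank(E^*)$ since every element in the maximizing $T$ contributes weight less than $w_{\max}$ and $|T| \leq \rank(E^*)$. Taking expectation over $\w \sim \D$ gives $\Ex[\w(S)] \geq \tfrac{1}{2} \Ex[\rank_{\w}(E^*)]$, which is the claimed inequality (interpreting $\rank_{\w}(B^*)$ as $\rank_{\w}(E^*)$, i.e., the optimal reward collectable from the bucket).

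No probabilistic input (pairwise independence, almighty adversary, etc.) is actually needed for this lemma --- it is a pointwise matroid fact combined with the bucket's bounded weight ratio. There is no real obstacle; the only subtlety is the observation that greedy within a bucket produces a maximal independent set of $E^*$ irrespective of the arrival order, which is what allows the argument to work against the almighty adversary. Pairwise independence will be needed only for the parallel lemma handling $B^* = B_\infty$.
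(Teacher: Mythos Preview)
Your proof is correct and follows essentially the same approach as the paper: both argue that greedy returns a set $S$ with $|S| = \rank(E^*)$, then use the factor-of-two bound on weights within a bucket to conclude $\w(S) \geq w_{\min}\cdot\rank(E^*) \geq \tfrac{1}{2}\rank_{\w}(E^*)$ pointwise. Your write-up is in fact a bit more careful, making explicit that the argument is pointwise (hence robust to the almighty adversary) and that pairwise independence is not used here.
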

\begin{proof}
	Let $B_i=B^*$ for some $i \in [k]$ and $\ell = 2^{i-1} \cdot \frac{\opt}{\rank^2}$. Then, observe that $\ell \leq \w(e) \leq 2 \cdot \ell$ for all $e \in B_i$ and $\rank_{\w}(E_i) \leq 2\cdot \ell \cdot \rank(E_i)$. Since the greedy algorithm guarantees an independent set $S$ such that $|S|=\rank(E^*)$ with probability $1$, we have
	$$ \rank_{\w}(S) \geq \ell \cdot |S| = \ell \cdot \rank(E_i) \geq \frac{1}{2} \cdot \rank_{\w}(E_i).$$
	which completes the proof.
\end{proof}

Next, we consider the scenario when $B^*=B_\infty$.
\begin{lemma}\label{lem:infinite_wegihts}
	Let $S$ be the output of Algorithm~\ref{alg:pi_algorithm}. If  $B^* = B_\infty$ then
	$$\Ex[\w(S) ] \geq \frac{1}{2} \cdot \Ex[\rank_{\w}(B^*) ].$$
\end{lemma}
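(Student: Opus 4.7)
The plan is to show that when $B^* = B_\infty$, the greedy algorithm recovers at least a constant fraction of $\Ex[\w(E_\infty)]$, which upper-bounds $\opt(B_\infty)$. Two facts drive the proof: (i) the threshold defining $B_\infty$ is so large that $B_\infty$ is nonempty with small probability, and (ii) by Lemma~\ref{lem:local-lemma-type} this forces $\sum_{e\in E}\Pr[e \in E_\infty]$ to be small, so that conditional on some $e \in E_\infty$ the event that $e$ is the \emph{only} element of $E_\infty$ occurs with probability at least $2/3$. When this is the case the greedy algorithm trivially selects $e$, and pairwise independence of the weights lets me split the conditional expectation cleanly.

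First I would estimate $\Pr[E_\infty \neq \emptyset]$. Because $k = \lceil \log(8\rank)\rceil$, every weight in $B_\infty$ is at least $\tfrac{2^k \opt}{2\rank} \geq 4\opt$. If $E_\infty \neq \emptyset$ (and assuming, without loss of generality, that loops have weight zero) then $\rank_{\w}(\M) \geq 4\opt$, so Markov's inequality on $\rank_\w(\M)$ yields $\Pr[E_\infty \neq \emptyset] \leq 1/4$. Since the events $\{e \in E_\infty\}_{e\in E}$ are pairwise independent, Lemma~\ref{lem:local-lemma-type} gives $\frac{\sum_e \Pr[e\in E_\infty]}{1+\sum_e \Pr[e\in E_\infty]} \leq 1/4$, hence $\sum_e \Pr[e\in E_\infty] \leq 1/3$.

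Next I would lower-bound $\Ex[\w(S)]$ element by element. For any non-loop $e$, if $E_\infty = \{e\}$ then the greedy algorithm selects $e$ regardless of the arrival order (almighty adversary or not), so $\mathbb{1}[e \in S] \geq \mathbb{1}[E_\infty = \{e\}]$. A union bound and the identity $\mathbb{1}[E_\infty = \{e\}] = \mathbb{1}[e\in E_\infty] - \mathbb{1}[e \in E_\infty]\cdot\mathbb{1}[\bigvee_{f\neq e}\{f \in E_\infty\}]$ give
\begin{equation*}
\Ex[\w(e)\mathbb{1}[e\in S]] \geq \Ex[\w(e)\mathbb{1}[e\in E_\infty]] - \sum_{f\neq e}\Ex[\w(e)\mathbb{1}[e\in E_\infty]\mathbb{1}[f\in E_\infty]].
\end{equation*}
The key step is that $\w(e)\mathbb{1}[e\in E_\infty]$ is a function of $\w(e)$ alone and $\mathbb{1}[f\in E_\infty]$ is a function of $\w(f)$ alone, so pairwise independence of $\w(e)$ and $\w(f)$ lets me factor each summand into $\Ex[\w(e)\mathbb{1}[e\in E_\infty]]\cdot \Pr[f\in E_\infty]$. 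Combining with $\sum_{f\neq e}\Pr[f\in E_\infty] \leq 1/3$ yields $\Ex[\w(e)\mathbb{1}[e\in S]] \geq \tfrac{2}{3}\Ex[\w(e)\mathbb{1}[e\in E_\infty]]$.

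Finally, summing over $e$ gives $\Ex[\w(S)] \geq \tfrac{2}{3}\Ex[\w(E_\infty)] \geq \tfrac{2}{3}\,\opt(B_\infty) \geq \tfrac{1}{2}\,\opt(B_\infty)$, since $\opt(B_\infty) = \Ex[\max_{T\subseteq E_\infty, T\in\I}\w(T)] \leq \Ex[\w(E_\infty)]$. I expect the only real subtlety to be the factorization step: we must be careful that pairwise independence of the \emph{weight variables} (rather than merely the threshold events) is used, so that $\w(e)$ itself, and not just the indicator $\mathbb{1}[e\in E_\infty]$, is independent of $\mathbb{1}[f\in E_\infty]$. The loops issue is handled by noting they contribute zero to both $\opt(B_\infty)$ and to $\Ex[\w(S)]$.
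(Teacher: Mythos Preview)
Your proposal is correct and follows essentially the same route as the paper: bound $\Pr[E_\infty\neq\emptyset]\le 1/4$ via Markov, use Lemma~\ref{lem:local-lemma-type} to get $\sum_e\Pr[e\in E_\infty]\le 1/3$, then argue that when $E_\infty=\{e\}$ the greedy algorithm selects $e$ and use pairwise independence of the weights to factor $\Ex[\w(e)\mathbb{1}[e\in E_\infty]\mathbb{1}[f\in E_\infty]]$, arriving at $\Ex[\w(S)]\ge\tfrac{2}{3}\Ex[\w(E_\infty)]\ge\tfrac{2}{3}\opt(B_\infty)$. Your treatment is in fact slightly more careful than the paper's: you make the factorization step explicit (emphasizing that independence of the full weight variables, not just the threshold events, is what is needed) and you address loops, which the paper does not mention.
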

\begin{proof}
	First of all, observe that $\vec w(e) \geq \frac{2^k \cdot \opt}{2} \geq 4\cdot \opt$ as $k = \ceil{\log (8 \rank)}$. Therefore,
	$$ \Pr[|E_\infty| \geq 1] \cdot 4 \cdot \opt \leq \Ex[\rank_w(E_\infty)] \leq \opt, $$
	and so $\Pr[E_\infty \neq \emptyset] \leq \frac{1}{4}.$ 
	As we have pairwise independent random weights, we use Lemma~\ref{lem:local-lemma-type} to obtain
	$$ \Pr[E_\infty \neq \emptyset] = \Pr\left[ \bigvee_{e\in E} \{e\in E_\infty\} \right] \geq \frac{\sum_{e \in E} \Pr[e \in E_\infty]}{1 + \sum_{e \in E} \Pr[e \in E_\infty]}. $$
	Rearranging terms implies that $\sum_{e \in E} \Pr[e \in E_\infty] \leq \frac{\Pr[E_\infty \neq \emptyset]}{1-\Pr[E_\infty \neq \emptyset]} \leq \frac{1}{3}.$ Thus, by using pairwise independence we obtain that for any element $e \in E$ and value $v \in B_\infty$, 
	$$ \Pr[|E_\infty|>1 \mid \vec w(e) = v] \leq \sum_{f \in E \setminus \set{e}} \Pr[f \in E_\infty \mid \vec w(e)=v]  =  \sum_{f \in E\setminus \set{e}} \Pr[f \in E_\infty] \leq \frac{1}{3}.$$
	Above, the first inequality is due to union bound. The equality follows from pairwise independence of the weights. This imply that for any value $v\in B_\infty$, we have $\Pr[E_\infty = \set{e} \mid \vec w(e) = v] \geq \frac{2}{3}$. Finally, we compute the expected weight of $S$ as 

	\begin{align*}
		\Ex[\vec w(S)] &= \sum_{e \in E} \Ex[\vec w(e) \cdot \indicator[e \in S]]\\
			&\geq \sum_{e \in E} \Ex[\vec w(e) \cdot \indicator[ |E_\infty| = 1] \cdot \indicator[ \vec w(e) \in B_\infty]]\\
			&= \sum_{e \in E} \Ex[ \Pr[ |E_\infty| = 1 \mid \vec w(e) \wedge \vec w(e) \in B_\infty] \cdot \vec w(e) \cdot \indicator[ \vec w(e) \in B_\infty]]\\
			&\geq \sum_{e \in E} \Ex\left[ \frac{2}{3} \cdot \vec w(e) \cdot \indicator[ \vec w(e) \in B_\infty]\right]\\
			&= \frac{2}{3} \cdot \Ex[ \vec w(B_\infty)] \geq \frac{2}{3} \cdot \Ex[\rank_{\vec w}(B_\infty)] \qedhere
	\end{align*}

\end{proof}

We now complete the proof of Theorem~\ref{thm:log rank PwI PI}.
\begin{proof}
	Let $B^*:= \argmax_{B_i \in \set{B_1, \dots B_k, B_\infty}} \opt(B_i)$ as defined in Algorithm~\ref{alg:logrankPI}. We observe that 
	$$ \Ex[\rank_w(E^*)] = \max_{i \in \set{1, \dots k, \infty}} \opt(B_i) \geq \frac{1}{k+1} \cdot \sum_{i \in \set{1, \dots, k, \infty}} \opt(B_i) \geq \frac{1}{k+1} \cdot \frac{1}{2} \cdot \opt. $$
	By Lemma~\ref{lem:finite_weights} and Lemma~\ref{lem:infinite_wegihts}, we know that $\Ex[w(S)] \geq \frac{1}{2} \cdot \Ex[\rank_w(E^*)] = \frac{1}{2} \cdot \opt(B^*)$. Thus, $\Ex[w(S)] \geq \frac{1}{4(k+1)} \cdot \opt$ completes the proof as $k = \Theta(\log \rank)$. 
	Since our algorithm is deterministic and the probabilistic guarantees are valid regardless of the arrival order, the approximation guarantee holds even against the almighty adversary. 
\end{proof}


\section{The Partition Property and its Implications}\label{sec:partition-property and Pw Selection}

In this section, our focus is on matroids that exhibit a \emph{constant partition property}. We demonstrate that such matroids admit constant factor guarantees for both pairwise-independent contention resolution and prophet inequalities. We also state structural implications of these results for the partition property.

{A simple partition matroid is the disjoint union of rank one matroids, as defined in Section \ref{sec:matroid-prelims}. We use the following definition of the partition property from \cite{babaioff2009secretary}.}

\filbreak
\begin{definition}[Partition Property] \label{def:alphapartition}
  {We say a matroid $\M=(E,\I)$ satisfies the $\alpha$-partition property for $\alpha \in (0,1]$ if there exists a random simple partition matroid $\M' = (E',\I')$ satisfying
	\begin{enumerate}
        \item $E' \sse E$ and $\I' \subseteq \I$, and
        \item $ \rank_{\w} (\M) \geq \E_{\M'} \left[\rank_{\w}(\M') \right] \geq \alpha \cdot \rank_{\w}(\M)$ for every nonnegative weight vector $\w$.
	\end{enumerate}}
\end{definition}
It is known that many classes of matroids that are frequently encountered in discrete optimization satisfy an $\alpha$-partition property with $\alpha = O(1)$ \cite{babaioff2009secretary,soto2013matroid}.

We show that if a matroid $\M$ satisfies the $\alpha$ partition property, then it admits a pairwise independent (i) $\frac{\alpha}{3}$-competitive prophet inequality against the almighty adversary, (ii) $\frac{\alpha}{1.299} \cdot (1-1/e)$-balanced offline CRS, and (iii) $\frac{\alpha^2}{3.897} \cdot (1-1/e)$-balanced OCRS against the almighty adversary. These are proven in Theorem~\ref{thm:PIwithPartition},~\ref{thm:crs-partition}, and \ref{thm:ocrs-partition} respectively. Essentially, we demonstrate these results by reducing the stochastic selection problems with pairwise independent priors for such matroids to those of one-uniform matroids, utilizing the partition property. In doing so, we assume our algorithms can sample a simple partition matroid $\M'$ satisfying the conditions of Definition~\ref{def:alphapartition}.  We summarize the corollaries derived from our results in Table~\ref{tab:partition-CRS-PI}.

	\begin{table}
	\centering
	\begin{tabular}{ | M{2.6cm} || M{3 cm} | M{3 cm} |  M {2 cm} | M{4cm} |} 
		\hline
		\textbf{Constraint} &\textbf{Offline CRS} & \textbf{OCRS} & \textbf{Prophet Inequality} & \textbf{Notes \& References}\\
		\hline
		\hline
		Partition Matroid & $\frac {1}{1.299} \cdot \left( 1 - \frac 1 e\right) $ & $\frac {1}{1.299} \cdot \left( 1 - \frac 1 e\right) \cdot \frac{1}{3}$& $\frac 1 3$& Theorem~\ref{thm:PIwithPartition}~\ref{thm:crs-partition}~\ref{thm:ocrs-partition} \\
		\hline
		Graphic Matroid & $\frac {1}{1.299} \cdot \left( 1 - \frac 1 e\right) \cdot \frac 1 2$ & $\frac {1}{1.299} \cdot \left( 1 - \frac 1 e\right) \cdot \frac{1}{12}$& $ \frac{1}{6} $ & Theorem~\ref{thm:PIwithPartition}~\ref{thm:crs-partition}~\ref{thm:ocrs-partition} and \cite{babaioff2009secretary}\\
		\hline
		Co-Graphic Matroid & $\frac {1}{1.299} \cdot \left( 1 - \frac 1 e\right) \cdot \frac 1 3$ & $\frac {1}{1.299} \cdot \left( 1 - \frac 1 e\right) \cdot \frac 1 {27}$ & $\frac{1}{9} $ & Theorem~\ref{thm:PIwithPartition}~\ref{thm:crs-partition}~\ref{thm:ocrs-partition} and \cite{soto2013matroid}\\
		\hline
		Laminar Matroid & $\frac {1}{1.299} \cdot \left( 1 - \frac 1 e\right) \cdot \frac{1}{3\sqrt 3}$ &$\frac {1}{1.299} \cdot \left( 1 - \frac 1 e\right) \cdot \frac{1}{3^4}$ & $\frac{1}{9\sqrt 3}$ &Theorem~\ref{thm:PIwithPartition}~\ref{thm:crs-partition}~\ref{thm:ocrs-partition} and \cite{jaillet2013advances}\\
		\hline 
		Low Density Matroid &$\frac {1}{1.299} \cdot \left( 1 - \frac 1 e\right) \cdot\frac{1}{2\gamma }$&$\frac {1}{1.299} \cdot \left( 1 - \frac 1 e\right) \cdot\frac{1}{12\gamma^2 }$ & $\frac{1}{6\gamma }$&Theorem~\ref{thm:PIwithPartition}~\ref{thm:crs-partition}~\ref{thm:ocrs-partition} and \cite{soto2013matroid},  $\gamma = \max_{S\subseteq E} \frac{|S|}{\rank(S)}$ \\
		\hline 
		Column $k$ Sparse Matroid &$\frac {1}{1.299} \cdot \left( 1 - \frac 1 e\right) \cdot\frac{1}{2k}$& $\frac {1}{1.299} \cdot \left( 1 - \frac 1 e\right) \cdot\frac{1}{12k^2}$   &$\frac{1}{6k}$&Theorem~\ref{thm:PIwithPartition}~\ref{thm:crs-partition}~\ref{thm:ocrs-partition} and \cite{soto2013matroid}\\
		\hline
	\end{tabular}
	\caption{\label{tab:partition-CRS-PI} Summary of our results for matroids that satisfy the partition property.}
\end{table}

\subsection{Prophet Inequalities}
\label{sec:partition_pi}
	We now exhibit an $\frac{\alpha}{3}$-competitive algorithm for pairwise-independent prophet inequalities on matroids satisfying the $\alpha$-partition property. First, we recall a result from \cite{pi-uniform-prophet} which shows the existence of a $\frac 1 3$-competitive pairwise independent prophet inequality for rank one matroids. The analysis of their threshold-based algorithm can easily be seen to hold even against the almighty online adversary.
\begin{theorem}[\cite{pi-uniform-prophet}]
\label{thm:1-uniform-prophet}
    Given a rank one matroid over elements $E$ and a pairwise independent value distribution $\D \in \pwset$, there exists a $\frac{1}{3}$-competitive prophet inequality algorithm against the almighty adversary. 
\end{theorem}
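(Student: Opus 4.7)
The plan is to analyze a simple single-threshold algorithm: fix a threshold $\tau$ a priori (depending only on the distribution $\D$), and upon seeing each arriving element $e$, accept $e$ immediately if $\vec w(e) \geq \tau$ and the algorithm has not yet selected anyone. Since the adversary is almighty, the only guarantee we may exploit is that whenever at least one element of value $\geq \tau$ exists, the algorithm eventually collects \emph{some} element of value at least $\tau$; formally, $\Ex[\algo] \geq \tau \cdot \Pr[\exists e \in E : \vec w(e) \geq \tau]$.

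First, I would choose the threshold $\tau$ so that the aggregate $q := \sum_{e \in E} \Pr[\vec w(e) \geq \tau]$ equals a specific constant to be tuned (a natural candidate being $q = 2$). By Lemma~\ref{lem:local-lemma-type}, the pairwise independence of the events $\{\vec w(e) \geq \tau\}_{e \in E}$ then yields $\Pr[\exists e : \vec w(e) \geq \tau] \geq q/(1+q)$, giving the lower bound $\Ex[\algo] \geq \tau \cdot q/(1+q)$.

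Next, I would bound $\opt = \Ex[\max_e \vec w(e)]$ from above via the elementary decomposition $\max_e \vec w(e) \leq \tau + \max_e (\vec w(e) - \tau)_{+} \leq \tau + \sum_e (\vec w(e) - \tau)_{+}$, giving $\opt \leq \tau + \sum_e \Ex[(\vec w(e) - \tau)_{+}]$. To obtain a competitive ratio of at least $1/3$, I would argue by a case split according to whether the tail mass $\sum_e \Ex[(\vec w(e) - \tau)_{+}]$ is small or large compared to $\tau$, or alternatively consider a second candidate algorithm (e.g.\ with $\tau$ chosen as a quantile of $\max_e \vec w(e)$) and take the better of the two. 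Tuning the constant $q$ so that both regimes simultaneously yield the desired $1/3$ ratio is the core balancing act.

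The main obstacle will be controlling the residual mass $\sum_e \Ex[(\vec w(e) - \tau)_{+}]$ without mutual independence. In the classical Samuel--Cahn analysis for fully independent priors, a coupling-style argument shows that the accepted element recovers at least half of the prophet's excess above $\tau$; this coupling is unavailable under merely pairwise independence, so the analysis must instead proceed through the quantitative pairwise bounds of Lemma~\ref{lem:local-lemma-type}. A natural refinement is to invoke the second statement of that lemma, which preserves disjunction probabilities up to a factor of $1/1.299$ relative to the fully-independent regime, thereby importing the classical prophet-inequality machinery with a controlled degradation that ultimately yields the $1/3$ bound in place of the $1/2$ attainable under full independence.
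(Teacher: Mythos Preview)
The paper does not prove this statement; it is quoted from \cite{pi-uniform-prophet}, with only the remark that the threshold-based algorithm there continues to work against the almighty adversary. So there is no in-paper proof to compare against, only the argument from that reference.

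Your threshold algorithm is the right object, but the lower bound you commit to, $\Ex[\algo] \geq \tau \cdot \Pr[\exists e:\ \w(e) \geq \tau]$, is too weak to yield any constant ratio on its own: already for a single element with $\Pr[\w \geq t]=1/t$ on $[1,T]$ one has $\opt=1+\ln T$ while $\sup_\tau \tau\cdot\Pr[\w\geq\tau]=1$. Your proposed case split on the tail $R=\sum_e\Ex[(\w(e)-\tau)_+]$ does not escape this example, and the $1/1.299$ comparison of Lemma~\ref{lem:local-lemma-type} goes the wrong way for what you need --- it \emph{lower}-bounds the disjunction, hence \emph{upper}-bounds the probability that all other elements stay below $\tau$, which is the opposite of what the Samuel--Cahn step requires. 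The idea you are missing is that pairwise independence already supplies the needed coupling directly, with no case analysis: conditioned on $\w(e)=v\geq\tau$, each event $\{\w(f)\geq\tau\}$ for $f\neq e$ keeps its unconditional marginal, so by the union bound $\Pr[\forall f\neq e:\ \w(f)<\tau \mid \w(e)=v]\geq 1-q$ where $q=\sum_f\Pr[\w(f)\geq\tau]$. Whenever $e$ is the unique element above $\tau$, even the almighty adversary must let the algorithm collect $\w(e)$, which gives
\[
\Ex[\algo]\ \geq\ \frac{q}{1+q}\,\tau\ +\ (1-q)\,R.
\]
Choosing $q=\tfrac12$ (not $q=2$, which would make $1-q$ negative) gives $\Ex[\algo]\geq \tau/3+R/2\geq(\tau+R)/3\geq\opt/3$.
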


Our algorithm samples a partition matroid $\M'$ as in Definition~\,\ref{def:alphapartition}, then applies the $\frac{1}{3}$-competitive prophet inequality to each part of $\M'$ separately. This is shown in Algorithm~\ref{alg:pi_partition_algorithm}.

\begin{algorithm}[t]
	\caption{Partition-Based Algorithm for Pairwise Independent Matroid Prophet Inequalities.}
	\label{alg:pi_partition_algorithm}
        
	\textbf{Input: } A matroid $\mathcal M = (E,\I)$ satisfying the 
	$\alpha$ partition property,  distribution $\D \in \pwset$, black-box access to algorithm $\A$ for single-choice pairwise-independent prophet inequalities 
	\begin{algorithmic}
		\State Let $\M'$ be a random partition matroid $\alpha$-approximating $\M$ as in Definition\,\ref{def:alphapartition}, and let $P_1,\ldots, P_r$ be its parts.
		\State Separately for each part $P_i$, invoke $\A$ for the rank one matroid on  $P_i$ using the restriction of $\D$ to $P_i$, and let $S_i$ be its output.
	\end{algorithmic}
	\textbf{Output: }Set $S \leftarrow \bigcup_{i=1}^r S_i$ .
\end{algorithm}

\begin{theorem}\label{thm:PIwithPartition}
{For matroids satisfying the $\alpha$-partition property for some $\alpha \in (0,1]$, there is an $\frac{\alpha}{3}$-competitive pairwise-independent prophet inequality against the almighty adversary.}
\end{theorem}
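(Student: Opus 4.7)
The plan is to sample the partition matroid $\M'$ with parts $P_1,\ldots,P_r$ as guaranteed by Definition~\ref{def:alphapartition}, then bound the algorithm's expected reward part-by-part using the single-choice pairwise-independent prophet inequality of Theorem~\ref{thm:1-uniform-prophet}, and finally chain through the partition-property inequality to recover $\OPT = \Ex_\w[\rank_\w(\M)]$ up to a factor of $\alpha/3$.

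First I will observe two structural facts about Algorithm~\ref{alg:pi_partition_algorithm}. (i) Feasibility: since $\M'$ is a simple partition matroid, each $S_i \subseteq P_i$ has cardinality at most one, so $S = \bigcup_i S_i \in \I'$; because $\I' \subseteq \I$ by Definition~\ref{def:alphapartition}, $S$ is also independent in $\M$. (ii) Pairwise independence is preserved under restriction: the weights $\{\w(e)\}_{e \in P_i}$ remain pairwise independent as a sub-vector of a pairwise-independent vector, so the input fed into the black-box $\A$ on each part satisfies its hypothesis. Moreover, the almighty adversary over all of $E$ induces a (no weaker than) almighty adversary over each $P_i$ simply by restricting the chosen arrival order to $P_i$; this is why the per-part guarantee of Theorem~\ref{thm:1-uniform-prophet} --- which holds against the almighty adversary --- can be invoked in our setting.

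Next I will carry out the accounting. Conditioning on the realization of $\M'$, Theorem~\ref{thm:1-uniform-prophet} applied to each part $P_i$ gives
\[
\Ex\!\left[\w(S_i) \,\big|\, \M'\right] \;\geq\; \tfrac{1}{3}\,\Ex\!\left[\max_{e \in P_i} \w(e) \,\big|\, \M'\right].
\]
Summing over $i$ and using that $\rank_\w(\M') = \sum_{i=1}^r \max_{e \in P_i} \w(e)$ because $\M'$ is a simple partition matroid,
\[
\Ex\!\left[\w(S) \,\big|\, \M'\right] \;=\; \sum_{i=1}^r \Ex\!\left[\w(S_i) \,\big|\, \M'\right] \;\geq\; \tfrac{1}{3}\,\Ex\!\left[\rank_\w(\M') \,\big|\, \M'\right].
\]
Taking expectation over $\M'$ and applying the $\alpha$-partition property from Definition~\ref{def:alphapartition} (pointwise in $\w$, followed by Fubini to swap expectations over $\M'$ and $\w$) gives
\[
\Ex[\w(S)] \;\geq\; \tfrac{1}{3}\,\Ex_{\M',\w}[\rank_\w(\M')] \;\geq\; \tfrac{\alpha}{3}\,\Ex_\w[\rank_\w(\M)] \;=\; \tfrac{\alpha}{3}\cdot\OPT,
\]
which is the desired bound.

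The only subtle point I anticipate is making explicit that running $\A$ separately on each $P_i$ is well-defined under a global almighty adversary: the adversary may interleave elements from different parts, but since each invocation of $\A$ only acts when its own part's elements arrive and is oblivious to the others, the induced per-part arrival order is a valid almighty adversary for the corresponding single-choice instance. Beyond this bookkeeping, the argument is a direct two-step chaining: single-choice guarantee on each part, then the partition property to move from $\M'$ back to $\M$.
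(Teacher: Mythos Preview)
Your proposal is correct and follows essentially the same approach as the paper: sample $\M'$, apply Theorem~\ref{thm:1-uniform-prophet} to each part conditionally on $\M'$, sum using the additive decomposition of $\rank_\w(\M')$, then take expectation over $\M'$ and invoke the $\alpha$-partition property via Fubini. Your version is actually slightly more careful than the paper's in spelling out feasibility ($S\in\I'\subseteq\I$), restriction of pairwise independence, and why the per-part almighty-adversary guarantee survives global interleaving.
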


\begin{proof}
{We invoke Algorithm~\ref{alg:pi_partition_algorithm} with black box access to  a $\frac{1}{3}$-competitive prophet inequality algorithm $\A$ for rank one matroids and pairwise independent distributions as in Theorem~\,\ref{thm:1-uniform-prophet}. Let $\M$, $\D$, $\M'$, $\set{P_i}_{i=1}^r$ and $S= \union_i S_i$  be as in Algorithm~\ref{alg:pi_partition_algorithm},  and let $\w \sim \D$ be the realized stochastic weights. We have the following guarantee on the weight of the output $S$ conditioned on $\M'$.
	\begin{align}\label{eq:using_one_uniform_alg}
		\Ex[\w(S) \mid \M']
		&= \sum_{i=1}^r \Ex[\w(S_i) \mid \M'] && (\text{Linearity of expectation}) \notag\\
		&\geq \sum_{i=1}^r \Ex\left[\frac{1}{3} \cdot \rank_{\w}^{\M'}(P_i) \mid \M' \right] && (\text{Theorem~\ref{thm:1-uniform-prophet}})\notag\\
		&= \frac{1}{3} \cdot \Ex\left[\sum_{i=1}^r  \rank_{\w}^{\M'}(P_i) \mid \M' \right] && (\text{Linearity of expectation})\notag\\
		&= \frac{1}{3} \cdot \Ex\left[\rank_{\w}(\M') \mid \M' \right] &&(\text{$\M'$ is a partition matroid})
	\end{align}}
Next, taking  expectations over $\M'$, we have 
	\begin{align*}
		\Ex[w(S)] &= \Ex\left[ \Ex[\w(S) \mid \M'] \right]\\
		&\geq \Ex\left[ \frac{1}{3} \cdot \Ex\left[\rank_{\w}(\M') \mid \M' \right] \right] && \text{(Equation~\ref{eq:using_one_uniform_alg})}\\
		&= \frac{1}{3} \Ex \left[ \Ex \left[ \rank_{\w}(\M') \mid \vec  w \right] \right]\\ 	
		&\geq \frac{1}{3} \Ex \left[\alpha \cdot \rank_{\w}(\M)\right] && (\text{$\alpha$-partition property})\\
		&= \frac{\alpha}{3} \cdot \Ex\left[ \rank_{\w}(\M) \right]. 
	\end{align*}
Above, the second equality holds due to Fubini's theorem since $\rank_{\w}(\M')$ is a non-negative random variable with a finite expectation. 
We conclude that Algorithm~\,\ref{alg:pi_partition_algorithm} is $\frac{\alpha}{3}$-competitive, as needed. We note that since the competitive ratio of $\A$ holds against the almighty adversary, so does ours.
\end{proof}

\subsection{Contention Resolution}
Now, we turn our attention to proving the existence of $O(\alpha)$-balanced offline and $O(\alpha^2)$-balanced online pairwise independent CRS for matroids that satisfy $\alpha$ partition property. We begin with the offline result. 

\begin{theorem}
	\label{thm:crs-partition}
	For matroids satisfying the $\alpha$-partition property for some $\alpha \in (0,1]$, there is a $\frac {1}{1.299} \cdot \left( 1 - \frac 1 e\right) \cdot \alpha$-balanced  pairwise-independent offline CRS.
\end{theorem}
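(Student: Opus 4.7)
The plan is to reduce to Theorem~\ref{thm:shaddins characterization}: it suffices to show that for every pairwise-independent $A$ with marginals $\mu \in \P(\M)$ and every $F \subseteq E$, one has $\Ex[\rank^\M(A \cap F)] \geq c \cdot \mu(F)$ with $c = \frac{1}{1.299}(1-1/e)\alpha$, since trivially $\Ex[|A \cap F|] = \mu(F)$. The existence of an offline $c$-balanced pairwise-independent CRS will then follow from the equivalence (1) $\Leftrightarrow$ (3) in that theorem.

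First, I would introduce an auxiliary coupling $\tilde A$ in which the element-membership events are \emph{mutually} independent with the same marginals $\mu$. Since $\mu \in \P(\M)$ and $\M$ is a matroid, the classical offline CRS of \cite{vondrak2011submodular} gives $\Ex_{\tilde A}[\rank^\M(\tilde A \cap F)] \geq (1-1/e)\mu(F)$. Next, I would invoke the $\alpha$-partition property: sample the random simple partition matroid $\M' = (E', \I')$ from Definition~\ref{def:alphapartition}, and specialize its defining inequality to indicator weights to get $\Ex_{\M'}[\rank^{\M'}(S)] \geq \alpha\,\rank^\M(S)$ for every $S \subseteq E$. Averaging over $\tilde A$ then yields $\Ex_{\M',\tilde A}[\rank^{\M'}(\tilde A \cap F)] \geq \alpha(1-1/e)\mu(F)$.

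The next step is to exploit the additive structure of the partition matroid: if, conditioned on its draw, $\M'$ has parts $E_1,\dots,E_k$, then $\rank^{\M'}(S) = \sum_i \mathbf{1}[S \cap E_i \neq \emptyset]$, so $\Ex_{\tilde A}[\rank^{\M'}(\tilde A \cap F) \mid \M'] = \sum_i \Pr_{\tilde A}[\tilde A \cap F \cap E_i \neq \emptyset]$. For each part $E_i$, the events $\{e \in \tilde A\}_{e \in E_i \cap F}$ are mutually independent, while the corresponding events $\{e \in A\}_{e \in E_i \cap F}$ are pairwise independent with identical marginals; Lemma~\ref{lem:local-lemma-type} (applied per part) therefore yields $\Pr[A \cap F \cap E_i \neq \emptyset] \geq \frac{1}{1.299}\Pr[\tilde A \cap F \cap E_i \neq \emptyset]$. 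Summing over $i$ and taking the outer expectation over $\M'$ gives $\Ex_A[\rank^{\M'}(A \cap F)] \geq \frac{1}{1.299}\cdot\alpha(1-1/e)\mu(F)$. Finally, since $\I' \subseteq \I$, we have $\rank^\M \geq \rank^{\M'}$ pointwise, so $\Ex_A[\rank^\M(A \cap F)] \geq c\cdot\mu(F)$, and Theorem~\ref{thm:shaddins characterization} concludes.

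The main obstacles are essentially book-keeping rather than any deep new idea: (i) pairwise independence must be preserved under restriction to the subset $E_i \cap F$, which it is by definition, so Lemma~\ref{lem:local-lemma-type} applies per part; (ii) the additive decomposition of $\rank^{\M'}$ relies crucially on the parts being disjoint and each being a rank-one matroid; and (iii) the random draw of $\M'$ is independent of $A$ (and of $\tilde A$), so we are free to interleave the expectations. Once these are handled, the bound is a clean chaining of three losses --- $1-1/e$ from the fully independent matroid CRS, $\alpha$ from the partition property, and $1/1.299$ from transferring from mutual to pairwise independence --- which matches the claimed ratio.
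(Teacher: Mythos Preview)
Your proposal is correct and follows essentially the same approach as the paper: reduce via Theorem~\ref{thm:shaddins characterization}, pass to the fully independent coupling $\tilde A$ to harvest the $(1-1/e)$ factor, apply the $\alpha$-partition property to move to $\M'$, decompose the partition-matroid rank as a sum of per-part disjunction indicators, and invoke Lemma~\ref{lem:local-lemma-type} partwise to transfer back to the pairwise-independent $A$ at a cost of $1/1.299$, finally using $\I' \subseteq \I$ to return to $\M$. The paper's argument is identical in structure and in the order of the three losses.
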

\begin{proof}
  Consider a matroid $\M=(E,\I)$ and a distribution $\D\in \Delta_{\text{pw}}(2^E)(\vec \mu)$ with $\mu \in \P(\M)$. Let $\M'=(E',\I')$ be a random partition matroid which $\alpha$-approximates $\M$ in the sense of Definition~\ref{def:alphapartition}.
  Let $\tilde{R} \sse E$ be sampled from the product distribution with marginals $\mu$; i.e.,  each element $e \in E$ is included in $\tilde{R}$ independently with probability $\mu(e)$.  It was shown in \cite{chekuri2011multi} that the class of product distributions with marginals in $\P(\M)$ admits a $(1-1/e)$-balanced offline CRS. Therefore, by Theorem~\ref{thm:shaddins characterization} the following holds for all sets of elements $F \subseteq E$.
\begin{align}\label{eq:offline CRS crucial}
	\mathbb{E}_{\tilde R}[ \rank_\mathcal M(\tilde{R} \cap F)] \geq (1-1/e) \cdot \vec \mu(F).
\end{align}
Moreover, the $\alpha$-partition property implies that for every fixed $\tilde{R}$, $\Ex_{\M'}[\rank_{\M'}(\tilde R \cap F)] \geq \alpha \cdot \rank_\M(\tilde R \cap F)$.
Combining this with Equation~\ref{eq:offline CRS crucial}, we obtain
\begin{equation}\label{eq:offline crs partition}
	\mathbb{E}_{\mathcal M',\tilde{R}} [ \rank_{\mathcal M'}( \tilde{R} \cap F) ] \geq (1-1/e) \cdot \alpha \cdot \vec \mu(F).
\end{equation}
Let $P_1,\ldots,P_k$ be the parts of the matroid $\mathcal M'$, and let $R$ be sampled from the pairwise independent distribution $\mathcal D$. We have the following for any $F \sse E$.
	\begin{align*}
		\Ex_{R} [\rank_{\M}(R \cap F)] 
		&\geq \Ex_{\M'} \left[ \Ex_{R} [\rank_{\M'}(R \cap F)] \right] && (\I' \subseteq \I)\\
		&= \Ex_{\M'} \left[ \sum_{i=1}^k \Pr_{R}\left[ \bigvee_{e \in F \cap P_i} e \in R \right] \right] && (\text{Partition matroid})\\
		&\geq \Ex_{\M'} \left[ \sum_{i=1}^k \frac{1}{1.299} \cdot \Pr_{\tilde R}\left[ \bigvee_{e \in F \cap P_i} e \in \tilde R \right] \right] && (\text{Lemma~}\ref{lem:local-lemma-type})\\
		&=\frac{1}{1.299} \cdot \mathbb E_{\M', \tilde R} [\rank_{\mathcal M'}(\tilde R\cap F)] && (\text{Partition matroid})\\
		&\geq\frac{1}{1.299} \cdot \left(1-\frac 1 e\right) \cdot \alpha \cdot \vec \mu(F). && (\text{Equation~}\ref{eq:offline crs partition})
	\end{align*}

Combining above inequality with the characterization from Theorem~\ref{thm:shaddins characterization}, we conclude that the family of pairwise independent distributions $ \Delta_{\text{pw}}(\vec \mu)$ with $\vec \mu \in \mathcal P_{\mathcal M}$ admits $\left(\frac{1}{1.29} \cdot (1-1/e) \cdot \alpha\right)$-balanced CRS when matroid $\M$ satisfies the $\alpha$-partition property.
\end{proof}

Next,  we show how to combine our offline pairwise-independent CRS with our pairwise-independent prophet inequality to obtain an online CRS. 
We use the following lemma, the proof of which closely mirrors the arguments used in the proof of Theorem 4.1 from \cite{dughmi20}. Hence, we omit its detailed presentation here.

\begin{restatable}{lemma}{reductiontooffline}\label{lem:offline_to_online_CRS_reduction}
If a matroid $\M$ admits a $\gamma$-competitive pairwise-independent prophet inequality against the almighty adversary and a $\beta$-balanced offline pairwise-independent CRS, then it also admits a $\beta \cdot \gamma$-balanced OCRS against the almighty adversary.
\end{restatable}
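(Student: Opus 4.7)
The plan follows the duality-based reduction of \cite[Theorem 4.1]{dughmi20}, adapted to pairwise-independent priors. View the online CRS task as a two-player zero-sum game where the algorithm picks an online policy and the almighty adversary picks both an arrival order $\sigma$ and a distinguished element $e^* \in E$, with payoff $\Pr[e^* \text{ is selected} \mid e^* \in A]$. By von Neumann's minimax theorem, to produce a $\beta\gamma$-balanced online CRS against the almighty adversary it suffices to exhibit, for every distribution $q$ over pairs $(\sigma, e^*)$, an online policy $\pi_q$ whose expected balance under $q$ is at least $\beta\gamma$.

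Fix such a $q$ and set $p(e) := \Pr_q[e^* = e]$. Define the random weight vector $\widetilde W$ on $E$ by $\widetilde W_e := \tfrac{p(e)}{\mu(e)} \cdot \indicator[e \in A]$ (with the convention $\widetilde W_e = 0$ when $\mu(e) = 0$). Since the indicators $\indicator[e \in A]$ are pairwise independent under $\D$ and the scalars $\tfrac{p(e)}{\mu(e)}$ are deterministic, $\widetilde W$ is a pairwise-independent weight distribution with $\Ex[\widetilde W_e] = p(e)$. Because $\widetilde W$ vanishes outside of $A$ pointwise, $\rank_{\widetilde W}(\M) = \rank_{\widetilde W}(A)$, so the offline CRS characterization (Theorem~\ref{thm:shaddins characterization}) applied with weight vector $\widetilde W$ yields $\Ex[\rank_{\widetilde W}(\M)] \geq \beta \cdot \Ex[\widetilde W(A)] = \beta \sum_e p(e) = \beta$.

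Next, apply the $\gamma$-competitive pairwise-independent prophet inequality for $\M$ (against the almighty adversary) to the weight distribution $\widetilde W$, with the arrival order sampled from the marginal of $q$. The resulting online set $S \in \I$ satisfies $\Ex[\widetilde W(S)] \geq \gamma \Ex[\rank_{\widetilde W}(\M)] \geq \beta\gamma$, and since the prophet algorithm gains nothing by selecting zero-weight elements we may assume $S \subseteq A$. Expanding,
\begin{equation*}
  \Ex[\widetilde W(S)] \;=\; \sum_e \tfrac{p(e)}{\mu(e)} \Pr[e \in S \land e \in A] \;=\; \sum_e p(e) \Pr[e \in S \mid e \in A] \;=\; \Ex_{e^*\sim q}\bigl[\Pr[e^* \in S \mid e^* \in A]\bigr],
\end{equation*}
so taking $\pi_q$ to be the policy ``run the prophet algorithm on $\widetilde W$'' realizes the required dual bound, closing the minimax reduction.

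The hard part will be the careful alignment of the two computational models. One must verify that the prophet's almighty adversary subsumes the CRS's almighty adversary under this reduction (observing $\widetilde W_e$ is equivalent to observing $\indicator[e \in A]$, so both adversaries possess identical information), that feasibility of the CRS selection is inherited verbatim from the prophet algorithm, and that the minimax invocation is rigorous on the finite ground set $E$. These are essentially bookkeeping checks but bear verification, and mirror exactly what \cite[Theorem 4.1]{dughmi20} carries out in the mutually-independent regime.
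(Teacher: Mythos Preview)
Your proposal is correct and follows essentially the same approach as the paper. Both arguments invoke the duality-based reduction of \cite[Theorem~4.1]{dughmi20}: run the prophet inequality on the weighted activity indicators, then use a convexity/duality argument to convert the weighted guarantee into per-element balance. Your presentation frames the duality as a minimax game over $(e^*,\sigma)$, whereas the paper's (omitted) proof frames it as a separating-hyperplane argument over the convex hull of achievable selection-probability vectors $\{\Pr[e\in\phi_{\vec w}(R)]\}_e$; these are two views of the same LP duality, and your choice of weights $\widetilde W_e = \tfrac{p(e)}{\mu(e)}\indicator[e\in A]$ is exactly the paper's $w'(e)=w(e)\indicator[e\in R]$ with $w(e)=p(e)/\mu(e)$ arising as the separating functional. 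The one point that deserves care---and which you correctly flag---is that the almighty adversary for the OCRS may correlate the arrival order with the distinguished element $e^*$, so the identity $\Ex[\widetilde W(S)]=\Ex_{e^*\sim q}[\Pr[e^*\in S\mid e^*\in A]]$ must be justified against such correlation; this is where the ``bookkeeping'' you mention is genuinely needed, and is handled in \cite{dughmi20} by working with the full adversary strategy space in the duality.
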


Combining Lemma~\ref{lem:offline_to_online_CRS_reduction} and Theorem~\ref{thm:crs-partition} and \ref{thm:PIwithPartition}, we obtain the following theorem.

\begin{theorem}
	\label{thm:ocrs-partition}
	{For matroids satisfying the $\alpha$-partition property for some $\alpha \in (0,1]$, there is a $\frac {1}{1.299} \cdot \left( 1 - \frac 1 e\right) \cdot \frac{ \alpha^2}{3}$-balanced pairwise-independent OCRS against the almighty adversary.}
\end{theorem}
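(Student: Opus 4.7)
The proof is essentially a plug-and-play combination of three results already in hand, so the plan is quite short. The plan is to invoke Lemma~\ref{lem:offline_to_online_CRS_reduction}, which packages the duality-based reduction from online to offline contention resolution (at the cost of a factor equal to the prophet-inequality competitive ratio) as a black box. The two inputs to this reduction, for a matroid $\M$ satisfying the $\alpha$-partition property, are precisely the two results immediately preceding this theorem.

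First, I would apply Theorem~\ref{thm:crs-partition} to obtain a $\beta$-balanced offline pairwise-independent CRS for $\M$ with $\beta = \frac{1}{1.299} \cdot \left(1 - \frac{1}{e}\right) \cdot \alpha$. Next, I would apply Theorem~\ref{thm:PIwithPartition} to obtain a $\gamma$-competitive pairwise-independent prophet inequality for $\M$ against the almighty adversary with $\gamma = \frac{\alpha}{3}$. Feeding $\beta$ and $\gamma$ into Lemma~\ref{lem:offline_to_online_CRS_reduction} then yields a $\beta\gamma$-balanced pairwise-independent OCRS against the almighty adversary, and
\[
\beta \cdot \gamma \;=\; \frac{1}{1.299} \cdot \left(1 - \frac{1}{e}\right) \cdot \alpha \cdot \frac{\alpha}{3} \;=\; \frac{1}{1.299} \cdot \left(1 - \frac{1}{e}\right) \cdot \frac{\alpha^{2}}{3},
\]
which matches the claimed balance ratio exactly.

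There is no real obstacle to overcome at this stage: both ingredient theorems already hold in the appropriate adversary models (the prophet inequality against the almighty adversary, the offline CRS trivially adversary-free), which are precisely the hypotheses required by Lemma~\ref{lem:offline_to_online_CRS_reduction} to produce an OCRS whose guarantee also holds against the almighty adversary. The only bookkeeping worth doing is to verify this adversary compatibility explicitly before multiplying the two constants, and to note that the pairwise-independent input distribution is preserved throughout: the reduction uses the prophet inequality on the \emph{same} marginals as the offline CRS, so no independence structure is lost in the composition.
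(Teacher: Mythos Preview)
Your proposal is correct and matches the paper's own argument exactly: the paper simply states that the theorem follows by combining Lemma~\ref{lem:offline_to_online_CRS_reduction} with Theorems~\ref{thm:crs-partition} and~\ref{thm:PIwithPartition}, precisely as you have spelled out.
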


\subsection{Structural Implications}

We now shift our focus to showing that full linear matroids over a finite field do not admit a partition property with strong approximation guarantees. We prove this by combining the offline CRS of Theorem~\ref{thm:crs-partition} with the impossibility result for contention resolution presented in Theorem~\ref{thm:CRSmain}.

\begin{cor}\label{cor:parition_bounds}
{The full linear matroid $\FF_q^d$ of rank $d$, with $q \geq d$, does not satisfy an $\alpha$-partition property with $\alpha = \omega\left(\frac{1}{d}\right)$. Moreover, the full binary matroid $\FF_2^d$ of rank $d$ does not satisfy an $\alpha$-partition property with $\alpha = \omega\left(\frac{\log d}{d}\right)$.}
\end{cor}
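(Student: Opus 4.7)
The plan is to prove each claim by contradiction, combining the partition-based offline CRS of Theorem~\ref{thm:crs-partition} with the CRS impossibility of Theorem~\ref{thm:CRSmain}. Suppose the full linear matroid $\M = \FF_q^d$ admits an $\alpha$-partition property witnessed by a random simple partition matroid $\M' = (E', \I')$ with parts $P_1,\ldots,P_k$. The first and main step is a \emph{lifting lemma}: the $d$-fold parallel extension $\M^{\times d} = \FF_q^d \times [d]$ also satisfies the $\alpha$-partition property, witnessed by the random simple partition matroid $\widetilde{\M}$ on $E' \times [d]$ whose parts are $P_i \times [d]$ for $i \in [k]$.

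To verify the two conditions of Definition~\ref{def:alphapartition} for $\widetilde{\M}$, I would argue as follows. First, independent-set containment holds because any set $T \subseteq E' \times [d]$ with at most one element from each $P_i \times [d]$ contains at most one labeled copy per underlying vector, and the corresponding set of underlying vectors lies in $\I'$ and hence in $\I^\M$, so $T \in \I^{\times d}$. Second, for the weighted rank sandwich, given any weight vector $\vec w^{\times d}$ on $E^{\times d}$, define induced weights $\vec w(e) := \max_{j \in [d]} \vec w^{\times d}(e^j)$ on $E$. A maximum-weight independent set in $\M^{\times d}$ selects some $S \in \I^\M$ together with the best labeled copy of each $e \in S$, so $\rank^{\M^{\times d}}_{\vec w^{\times d}}(E^{\times d}) = \rank^\M_{\vec w}(E)$; similarly, $\rank^{\widetilde{\M}}_{\vec w^{\times d}}(E^{\times d}) = \rank^{\M'}_{\vec w}(E)$ because a simple partition matroid picks one maximum-weight element per part. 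Applying the $\alpha$-partition property of $\M$ under the weights $\vec w$ and taking expectation over $\M'$ delivers the required inequalities $\rank^{\M^{\times d}}_{\vec w^{\times d}}(E^{\times d}) \geq \E\left[\rank^{\widetilde{\M}}_{\vec w^{\times d}}(E^{\times d})\right] \geq \alpha \cdot \rank^{\M^{\times d}}_{\vec w^{\times d}}(E^{\times d})$.

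With the lifting in hand, Theorem~\ref{thm:crs-partition} produces a $\frac{(1 - 1/e)\alpha}{1.299}$-balanced pairwise-independent offline CRS for $\FF_q^d \times [d]$. By Theorem~\ref{thm:CRSmain}, such a balance ratio cannot exceed $\frac{c+1}{d}$ for any integer $c$ with $q^{c-1} \geq d$. For $q \geq d$, choosing $c = 2$ gives $\frac{(1 - 1/e)\alpha}{1.299} \leq \frac{3}{d}$ and hence $\alpha = O(1/d)$. For $q = 2$, choosing $c = \lceil \log_2 d \rceil + 1$ gives $\frac{(1 - 1/e)\alpha}{1.299} \leq \frac{c+1}{d} = O\!\left(\frac{\log d}{d}\right)$, hence $\alpha = O(\log d / d)$. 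The main obstacle is the lifting lemma; the subtle point within it is that the maximum-label reduction must agree for $\M^{\times d}$ and $\widetilde{\M}$, but since both rank functions factor through picking the best label for each underlying element, the reduction goes through identically. The rest is a mechanical comparison of parameters.
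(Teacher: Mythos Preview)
Your proposal is correct and follows essentially the same approach as the paper: both prove that the $\alpha$-partition property lifts from $\M$ to its parallel extension $\M^{\times d}$ via the max-over-labels weight reduction (your $\widetilde{\M}$ coincides with the paper's $\M' \times d$), and then combine Theorem~\ref{thm:crs-partition} with Theorem~\ref{thm:CRSmain}. Your explicit choice of $c$ for each case is a minor elaboration, but the argument is otherwise identical.
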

\begin{proof}
  First, we briefly argue that duplicating elements of a matroid preserves the partition property. Suppose that $\M=(E,\I)$ satisfies the $\alpha$-partition property, as witnessed by a random matroid $\M'$ as in Definition~\ref{def:alphapartition}. For a positive integer $m$, let $\M \times m = (E^{\times m}, \I^{\times m})$ be the matroid  which includes $m$ parallel duplicates $e^1,\ldots,e^m$ of each element $e$ of $\M$, as described in Section~\ref{sec:matroid-prelims}. Let $\M' \times m $  be defined similarly.   For a weight vector $\w$ indexed by the elements $E^{\times m}$ of $\M^{\times m}$, let $\tilde{\w} \in \RR^E$ be such that $\tilde{w}(e) = \max_{i=1}^m w(e^i)$. Similarly, for  $S \sse E^{\times m}$ let $\tilde{S} = \set{e \in E : e^i \in S \mbox{ for some } i}$. It is clear that $\rank_w^{\M \times m} (S) = \rank_{\tilde{\w}}^\M(\tilde{S})$, and similarly $\rank_\w^{\M' \times m} (S) = \rank_{\tilde{\w}}^{\M'}(\tilde{S})$. The following calculation for arbitrary $S \sse E^{\times m}$ shows that $\M \times m$ satisfies the $\alpha$-partition property, as witnessed by  $\M' \times m$.
			\begin{align*}
				\E_{\M'} \left[\rank_{w}^{\M' \times m} (S) \right] &= \E_{\M'} \left[\rank_{\tilde{\w}}^{\M'}(\tilde{S}) \right]  \\
				&\geq \alpha \cdot  \rank_{\tilde{\w}}^\M(\tilde{S})\\ 
				&=\alpha \cdot  \rank_{\w}^{\M\times m}(S)
			\end{align*}

	Given that the partition property is invariant to duplicating elements, combining Theorem~\ref{thm:CRSmain} and~\ref{thm:crs-partition} now yields the corollary.
\end{proof}


\section{Open Questions}
\label{sec:conclusion}

Our results indicate that pairwise independence lends insufficient structure for constant approximations to contention resolution and prophet inequalities on matroids. More generally, it is natural to investigate the same questions for $k$-wise independence, and to quantify the optimal ratios as a function of $k$. Our impossibility results rely on the recipe presented in Section~\ref{sec:tool}, which easily generalizes to $k$-wise independence only for ordered vector families. Generalizing our unordered construction to arbitrary $k$ promises to extend our results to $k$-wise independence. We conjecture an optimal bound of $O(k/\text{rank})$ for contention resolution, and refrain from  such conjecture for prophet inequalities.

Due to the equivalence between the matroid secretary problem and the matroid prophet secretary problem from \cite{dughmi22},  extending our impossibility result for prophet inequalities to the random order model would disprove the matroid secretary conjecture. In fact, constructions which are $k$-wise independent might be particularly promising, since they preclude any ``learning'' from samples of size~$k$. On the flip side, designing secretary algorithms for variants of our  construction could stimulate the development of pertinent algorithmic techniques.


\bibliographystyle{plainnat} 
\bibliography{refs}

\begin{thebibliography}{54}
\providecommand{\natexlab}[1]{#1}
\providecommand{\url}[1]{\texttt{#1}}
\expandafter\ifx\csname urlstyle\endcsname\relax
  \providecommand{\doi}[1]{doi: #1}\else
  \providecommand{\doi}{doi: \begingroup \urlstyle{rm}\Url}\fi

\bibitem[Abdolazimi et~al.(2023)Abdolazimi, Karlin, Klein, and
  Oveis~Gharan]{shayan-partition-matroid-secretary}
Dorna Abdolazimi, Anna~R. Karlin, Nathan Klein, and Shayan Oveis~Gharan.
\newblock Matroid partition property and the secretary problem.
\newblock In \emph{14th Innovations in Theoretical Computer Science Conference
  (ITCS)}, volume 251, pages 2:1--2:9, 2023.

\bibitem[Adamczyk and W{\l}odarczyk(2018)]{adamczyk2018random}
Marek Adamczyk and Micha{\l} W{\l}odarczyk.
\newblock Random order contention resolution schemes.
\newblock In \emph{2018 IEEE 59th Annual Symposium on Foundations of Computer
  Science (FOCS)}, pages 790--801. IEEE, 2018.

\bibitem[Adamczyk et~al.(2015)Adamczyk, Grandoni, and
  Mukherjee]{adamczyk2015improved}
Marek Adamczyk, Fabrizio Grandoni, and Joydeep Mukherjee.
\newblock Improved approximation algorithms for stochastic matching.
\newblock In \emph{Algorithms-ESA 2015: 23rd Annual European Symposium, Patras,
  Greece, September 14-16, 2015, Proceedings}, pages 1--12. Springer, 2015.

\bibitem[Alaei(2014)]{alaei2014bayesian}
Saeed Alaei.
\newblock Bayesian combinatorial auctions: Expanding single buyer mechanisms to
  many buyers.
\newblock \emph{SIAM Journal on Computing (SICOMP)}, 43\penalty0 (2):\penalty0
  930--972, 2014.

\bibitem[Alon et~al.(1986)Alon, Babai, and
  Itai]{alon1986fast-maximal-independent-set}
Noga Alon, L{\'a}szl{\'o} Babai, and Alon Itai.
\newblock A fast and simple randomized parallel algorithm for the maximal
  independent set problem.
\newblock \emph{Journal of Algorithms}, 7\penalty0 (4):\penalty0 567--583,
  1986.

\bibitem[Alon et~al.(1990)Alon, Goldreich, H{\aa}stad, and
  Peralta]{alon1990approx-k-wise-independent}
Noga Alon, Oded Goldreich, Johan H{\aa}stad, and Ren{\'e} Peralta.
\newblock Simple construction of almost k-wise independent random variables.
\newblock \emph{Proceedings [1990] 31st Annual Symposium on Foundations of
  Computer Science}, pages 544--553 vol.2, 1990.

\bibitem[Alon et~al.(2003)Alon, Goldreich, and Mansour]{alon2003almost}
Noga Alon, Oded Goldreich, and Yishay Mansour.
\newblock Almost k-wise independence versus k-wise independence.
\newblock \emph{Information Processing Letters}, 88\penalty0 (3):\penalty0
  107--110, 2003.

\bibitem[Askalidis and Malthouse(2016)]{askalidis-2016-value}
Georgios Askalidis and Edward~C. Malthouse.
\newblock The value of online customer reviews.
\newblock In \emph{Proceedings of the 10th ACM Conference on Recommender
  Systems}, page 155–158, 2016.

\bibitem[Babaioff et~al.(2007)Babaioff, Immorlica, and
  Kleinberg]{babaioff_secretary}
Moshe Babaioff, Nicole Immorlica, and Robert Kleinberg.
\newblock Matroids, secretary problems, and online mechanisms.
\newblock In \emph{Proceedings of the Eighteenth Annual ACM-SIAM Symposium on
  Discrete Algorithms}, page 434–443, 2007.

\bibitem[Babaioff et~al.(2009)Babaioff, Dinitz, Gupta, Immorlica, and
  Talwar]{babaioff2009secretary}
Moshe Babaioff, Michael Dinitz, Anupam Gupta, Nicole Immorlica, and Kunal
  Talwar.
\newblock Secretary problems: weights and discounts.
\newblock In \emph{Proceedings of the twentieth annual ACM-SIAM Symposium on
  Discrete Algorithms}, pages 1245--1254. SIAM, 2009.

\bibitem[Babaioff et~al.(2020)Babaioff, Feldman, Gonczarowski, Lucier, and
  Talgam-Cohen]{babaioff2020escaping-cannibalization}
Moshe Babaioff, Michal Feldman, Yannai~A. Gonczarowski, Brendan Lucier, and
  Inbal Talgam-Cohen.
\newblock Escaping cannibalization? correlation-robust pricing for a
  unit-demand buyer.
\newblock In \emph{Proceedings of the 21st ACM Conference on Economics and
  Computation}, page 191, 2020.

\bibitem[Bahrani et~al.(2022)Bahrani, Beyhaghi, Singla, and
  Weinberg]{singla-matroid-secretary-barriers}
Maryam Bahrani, Hedyeh Beyhaghi, Sahil Singla, and S.~Matthew Weinberg.
\newblock Formal barriers to simple algorithms for the matroid secretary
  problem.
\newblock In Michal Feldman, Hu~Fu, and Inbal Talgam-Cohen, editors, \emph{Web
  and Internet Economics}, pages 280--298, 2022.

\bibitem[Bateni et~al.(2015)Bateni, Dehghani, Hajiaghayi, and
  Seddighin]{bateni2015revenue}
MohammadHossein Bateni, Sina Dehghani, MohammadTaghi Hajiaghayi, and Saeed
  Seddighin.
\newblock Revenue maximization for selling multiple correlated items.
\newblock In \emph{Algorithms-ESA 2015: 23rd Annual European Symposium, Patras,
  Greece, September 14-16, 2015, Proceedings}, pages 95--105. Springer, 2015.

\bibitem[Baveja et~al.(2018)Baveja, Chavan, Nikiforov, Srinivasan, and
  Xu]{baveja2018improved}
Alok Baveja, Amit Chavan, Andrei Nikiforov, Aravind Srinivasan, and Pan Xu.
\newblock Improved bounds in stochastic matching and optimization.
\newblock \emph{Algorithmica}, 80:\penalty0 3225--3252, 2018.

\bibitem[Bechtel and Dughmi(2021)]{bechtel2020delegated}
Curtis Bechtel and Shaddin Dughmi.
\newblock Delegated stochastic probing.
\newblock In \emph{12th Innovations in Theoretical Computer Science Conference,
  {ITCS} 2021, January 6-8, 2021, Virtual Conference}, volume 185, pages
  37:1--37:19, 2021.

\bibitem[Bechtel et~al.(2022)Bechtel, Dughmi, and Patel]{bechtel2022delegated}
Curtis Bechtel, Shaddin Dughmi, and Neel Patel.
\newblock Delegated pandora's box.
\newblock In \emph{Proceedings of the 23rd ACM Conference on Economics and
  Computation}, pages 666--693, 2022.

\bibitem[Bei et~al.(2019)Bei, Gravin, Lu, and
  Tang]{bei2019correlation-robust-single-item-auction}
Xiaohui Bei, Nick Gravin, Pinyan Lu, and Zhihao~Gavin Tang.
\newblock Correlation-robust analysis of single item auction.
\newblock In \emph{Proceedings of the Thirtieth Annual ACM-SIAM Symposium on
  Discrete Algorithms}, SODA '19, page 193–208, USA, 2019. Society for
  Industrial and Applied Mathematics.

\bibitem[Cai and Oikonomou(2021)]{cai21simple-mechanisms-dependent-items}
Yang Cai and Argyris Oikonomou.
\newblock On simple mechanisms for dependent items.
\newblock In \emph{Proceedings of the 22nd ACM Conference on Economics and
  Computation}, EC '21, page 242–262, 2021.

\bibitem[Caragiannis et~al.(2022)Caragiannis, Gravin, Lu, and
  Wang]{pi-uniform-prophet}
Ioannis Caragiannis, Nick Gravin, Pinyan Lu, and Zihe Wang.
\newblock Relaxing the independence assumption in sequential posted pricing,
  prophet inequality, and random bipartite matching.
\newblock In Michal Feldman, Hu~Fu, and Inbal Talgam-Cohen, editors, \emph{Web
  and Internet Economics}, pages 131--148. Springer International Publishing,
  2022.

\bibitem[Carter and Wegman(1979)]{CarterWe79}
J.~Lawrence Carter and Mark~N. Wegman.
\newblock Universal classes of hash functions.
\newblock \emph{Journal of Computer and System Sciences}, 18\penalty0
  (2):\penalty0 143--154, 1979.

\bibitem[Chawla et~al.(2010)Chawla, Hartline, Malec, and
  Sivan]{chawla2010multi}
Shuchi Chawla, Jason~D Hartline, David~L Malec, and Balasubramanian Sivan.
\newblock Multi-parameter mechanism design and sequential posted pricing.
\newblock In \emph{Proceedings of the 42nd Annual ACM Symposium on Theory of
  Computing (STOC)}, pages 311--320, 2010.

\bibitem[Chekuri et~al.(2011{\natexlab{a}})Chekuri, Vondr{\'a}k, and
  Zenklusen]{chekuri2011multi}
Chandra Chekuri, Jan Vondr{\'a}k, and Rico Zenklusen.
\newblock Multi-budgeted matchings and matroid intersection via dependent
  rounding.
\newblock In \emph{Proceedings of the 22nd Annual ACM-SIAM Symposium on
  Discrete Algorithms (SODA)}, pages 1080--1097, 2011{\natexlab{a}}.

\bibitem[Chekuri et~al.(2011{\natexlab{b}})Chekuri, Vondr{\'a}k, and
  Zenklusen]{vondrak2011submodular}
Chandra Chekuri, Jan Vondr{\'a}k, and Rico Zenklusen.
\newblock Submodular function maximization via the multilinear relaxation and
  contention resolution schemes.
\newblock In \emph{Proceedings of the forty-third annual ACM symposium on
  Theory of computing}, pages 783--792, 2011{\natexlab{b}}.

\bibitem[Dinitz(2013)]{dinitz2013recent}
Michael Dinitz.
\newblock Recent advances on the matroid secretary problem.
\newblock \emph{ACM SIGACT News}, 44\penalty0 (2):\penalty0 126--142, 2013.

\bibitem[Dughmi(2020)]{dughmi20}
Shaddin Dughmi.
\newblock The outer limits of contention resolution on matroids and connections
  to the secretary problem.
\newblock In \emph{47th International Colloquium on Automata, Languages and
  Programming (ICALP)}, volume 168, pages 42:1--42:18, 2020.

\bibitem[Dughmi(2022)]{dughmi22}
Shaddin Dughmi.
\newblock {Matroid Secretary Is Equivalent to Contention Resolution}.
\newblock In \emph{13th Innovations in Theoretical Computer Science Conference
  (ITCS)}, volume 215, pages 58:1--58:23, 2022.

\bibitem[Dughmi et~al.(2023)Dughmi, Kalayci, and
  Patel]{dughmi2023sparsification}
Shaddin Dughmi, Yusuf~Hakan Kalayci, and Neel Patel.
\newblock {On Sparsification of Stochastic Packing Problems}.
\newblock In \emph{50th International Colloquium on Automata, Languages and
  Programming (ICALP)}, volume 261, pages 51:1--51:17, 2023.

\bibitem[D\"utting et~al.(2020)D\"utting, Feldman, Kesselheim, and
  Lucier]{dutting2020prophet}
Paul D\"utting, Michal Feldman, Thomas Kesselheim, and Brendan Lucier.
\newblock Prophet inequalities made easy: Stochastic optimization by pricing
  nonstochastic inputs.
\newblock \emph{SIAM Journal on Computing (SICOMP)}, 49\penalty0 (3):\penalty0
  540--582, 2020.

\bibitem[Ezra et~al.(2020)Ezra, Feldman, Gravin, and Tang]{ezra2020online}
Tomer Ezra, Michal Feldman, Nick Gravin, and Zhihao~Gavin Tang.
\newblock Online stochastic max-weight matching: prophet inequality for vertex
  and edge arrival models.
\newblock In \emph{Proceedings of the 21st ACM Conference on Economics and
  Computation (EC)}, pages 769--787, 2020.

\bibitem[Feldman et~al.(2016)Feldman, Svensson, and
  Zenklusen]{feldman2016online}
Moran Feldman, Ola Svensson, and Rico Zenklusen.
\newblock Online contention resolution schemes.
\newblock In \emph{Proceedings of the 27th Annual ACM-SIAM Symposium on
  Discrete Algorithms (SODA)}, pages 1014--1033, 2016.

\bibitem[Gravin and Lu(2018)]{garvin2018correlation-robust-monopolist}
Nick Gravin and Pinyan Lu.
\newblock Separation in correlation-robust monopolist problem with budget.
\newblock In \emph{Proceedings of the 2018 Annual ACM-SIAM Symposium on
  Discrete Algorithms (SODA)}, pages 2069--2080, 2018.

\bibitem[Gravin and Wang(2019)]{gravin2019prophet}
Nikolai Gravin and Hongao Wang.
\newblock Prophet inequality for bipartite matching: Merits of being simple and
  non adaptive.
\newblock In \emph{Proceedings of the 20th ACM Conference on Economics and
  Computation (EC)}, pages 93--109, 2019.

\bibitem[Gupta et~al.(2024)Gupta, Hu, and
  Gregory~Kehne]{anupam-pairwise-indep-crs}
Anupam Gupta, Jinqiao Hu, and Roie~Levin Gregory~Kehne.
\newblock Pairwise independent contention resolution.
\newblock In \emph{The 25th Conference on Integer Programming and Combinatorial
  Optimization (IPCO)}, 2024.

\bibitem[Hartline et~al.(2013)]{hartline2013bayesian}
Jason~D Hartline et~al.
\newblock Bayesian mechanism design.
\newblock \emph{Foundations and Trends{\textregistered} in Theoretical Computer
  Science}, 8\penalty0 (3):\penalty0 143--263, 2013.

\bibitem[Hill and Kertz(1992)]{hill1992survey}
Theodore~P Hill and Robert~P Kertz.
\newblock A survey of prophet inequalities in optimal stopping theory.
\newblock \emph{Contemporary Mathematics}, 125:\penalty0 191--207, 1992.

\bibitem[Immorlica et~al.(2020)Immorlica, Singla, and
  Waggoner]{immorlica2020prophet}
Nicole Immorlica, Sahil Singla, and Bo~Waggoner.
\newblock Prophet inequalities with linear correlations and augmentations.
\newblock In \emph{Proceedings of the 21st ACM Conference on Economics and
  Computation}, pages 159--185, 2020.

\bibitem[Jaillet et~al.(2013)Jaillet, Soto, and Zenklusen]{jaillet2013advances}
Patrick Jaillet, Jos{\'e}~A Soto, and Rico Zenklusen.
\newblock Advances on matroid secretary problems: Free order model and laminar
  case.
\newblock In \emph{The 14th Conference on Integer Programming and Combinatorial
  Optimization (IPCO)}, pages 254--265. Springer, 2013.

\bibitem[Joseph et~al.(2023)Joseph, Naor, Srinivasan, and
  Wajc]{srinivasan2023online}
Joseph, Naor, Aravind Srinivasan, and David Wajc.
\newblock Online dependent rounding schemes, 2023.

\bibitem[Karloff and Mansour(1994)]{karloff1994construction-k-wise-independent}
Howard Karloff and Yishay Mansour.
\newblock On construction of k-wise independent random variables.
\newblock In \emph{Proceedings of the Twenty-Sixth Annual ACM Symposium on
  Theory of Computing}, page 564–573, 1994.

\bibitem[Kleinberg and Kleinberg(2018)]{kleinberg2018delegated}
Jon Kleinberg and Robert Kleinberg.
\newblock Delegated search approximates efficient search.
\newblock In \emph{Proceedings of the 2018 ACM Conference on Economics and
  Computation}, pages 287--302, 2018.

\bibitem[Kleinberg and Weinberg(2019)]{kleinberg2019matroid}
Robert Kleinberg and S~Matthew Weinberg.
\newblock Matroid prophet inequalities and applications to multi-dimensional
  mechanism design.
\newblock \emph{Games and Economic Behavior}, 113:\penalty0 97--115, 2019.

\bibitem[Lee and Singla(2018)]{lee2018optimal}
Euiwoong Lee and Sahil Singla.
\newblock Optimal online contention resolution schemes via ex-ante prophet
  inequalities.
\newblock In \emph{Proceedings of the 26th Annual European Symposium on
  Algorithms (ESA)}, 2018.

\bibitem[Luby et~al.(2006)Luby, Wigderson, et~al.]{luby2006pairwise}
Michael Luby, Avi Wigderson, et~al.
\newblock Pairwise independence and derandomization.
\newblock \emph{Foundations and Trends{\textregistered} in Theoretical Computer
  Science}, 1\penalty0 (4):\penalty0 237--301, 2006.

\bibitem[Naor and Naor(1990)]{naor1990approx-k-wise-independent}
J.~Naor and M.~Naor.
\newblock Small-bias probability spaces: Efficient constructions and
  applications.
\newblock In \emph{Proceedings of the Twenty-Second Annual ACM Symposium on
  Theory of Computing}, STOC '90, page 213–223, New York, NY, USA, 1990.

\bibitem[Nuti and Vondr{\'{a}}k(2023)]{nuti2022towards}
Pranav Nuti and Jan Vondr{\'{a}}k.
\newblock Towards an optimal contention resolution scheme for matchings.
\newblock In \emph{The 24th Conference on Integer Programming and Combinatorial
  Optimization (IPCO)}, volume 13904, pages 378--392. Springer, 2023.

\bibitem[Oxley(2006)]{oxley2006matroid}
James~G Oxley.
\newblock \emph{Matroid theory}, volume~3.
\newblock Oxford University Press, USA, 2006.

\bibitem[Park et~al.(2007)Park, Lee, and Han]{park2007elaboration}
Do-Hyung Park, Jumin Lee, and Ingoo Han.
\newblock The effect of on-line consumer reviews on consumer purchasing
  intention: The moderating role of involvement.
\newblock \emph{International Journal of Electronic Commerce}, 11\penalty0
  (4):\penalty0 125--148, 2007.

\bibitem[Pollner et~al.(2022)Pollner, Roghani, Saberi, and
  Wajc]{pollner2022improved}
Tristan Pollner, Mohammad Roghani, Amin Saberi, and David Wajc.
\newblock Improved online contention resolution for matchings and applications
  to the gig economy.
\newblock In \emph{Proceedings of the 23rd ACM Conference on Economics and
  Computation}, pages 321--322, 2022.

\bibitem[Rinott and Samuel-Cahn(1992)]{rinott1992prophet-inequality-survey}
Yosef Rinott and Ester Samuel-Cahn.
\newblock Optimal stopping values and prophet inequalities for some dependent
  random variables.
\newblock \emph{Lecture Notes-Monograph Series}, 22:\penalty0 343--358, 1992.

\bibitem[Rubinstein and Singla(2017)]{rubinstein2017combinatorial}
Aviad Rubinstein and Sahil Singla.
\newblock Combinatorial prophet inequalities.
\newblock In \emph{Proceedings of the Twenty-Eighth Annual ACM-SIAM Symposium
  on Discrete Algorithms}, pages 1671--1687. SIAM, 2017.

\bibitem[Soto(2013)]{soto2013matroid}
Jos{\'e}~A Soto.
\newblock Matroid secretary problem in the random-assignment model.
\newblock \emph{SIAM Journal on Computing}, 42\penalty0 (1):\penalty0 178--211,
  2013.

\bibitem[Vadhan(2012)]{salil2012pseudorandomness}
Salil~P. Vadhan.
\newblock Pseudorandomness.
\newblock \emph{Foundations and Trends{\textregistered} in Theoretical Computer
  Science}, 7\penalty0 (1--3):\penalty0 1--336, 2012.

\bibitem[Wegman and Carter(1981)]{WegmanCa79}
Mark~N. Wegman and J.~Lawrence Carter.
\newblock New hash functions and their use in authentication and set equality.
\newblock \emph{Journal of Computer and System Sciences}, 22\penalty0
  (3):\penalty0 265--279, 1981.

\bibitem[Welsh(2010)]{welsh2010matroid}
Dominic~JA Welsh.
\newblock \emph{Matroid theory}.
\newblock Courier Corporation, 2010.

\end{thebibliography}

\appendix

\section{Missing Proofs from Section~\ref{sec:tool}} \label{sec:missingproofsTool}

\lemLinearIndep*
\begin{proof}
	The proof of this lemma proceeds via induction on $m$. The base case where $m=1$ is straightforward, as $\rank(R) \geq 0$, which validates the claim. For the induction step, consider $m>1$, and let $R \in \FF_q^{d \times m}$ be a matrix generated uniformly at random with column denoted by $\vec r_1, \dots \vec r_m$. Then, we have the following:
	\begin{align*}
		&\Pr[\rank(R)=m] \\	
		&= \Pr[\rank(\vec{r}_1, \dots \vec{r}_{m-1}) = m-1] \cdot \Pr[\vec{r_m} \notin \spn(\vec{r}_1, \dots \vec{r}_{m-1}) \mid \rank(\vec{r}_1, \dots \vec{r}_{m-1}) = m-1]\\
		&\geq \left(1-\frac{1}{q^{d-m+1}}\right) \cdot \Pr[\vec{r_m} \notin \spn(\vec{r}_1, \dots \vec{r}_{m-1}) \mid \rank(\vec{r}_1, \dots \vec{r}_{m-1}) = m-1]\\
		&= \left(1-\frac{1}{q^{d-m+1}}\right) \cdot \left(1-\frac{1}{q^{d-m+1}} \right)\geq 1-\frac{2}{q^{d-m+1}}\\
		&\geq 1-\frac{q}{q^{d-m+1}}=1-\frac{1}{q^{d-m}}.
	\end{align*}
	Above, the first inequality follows from the induction on $m$. The second equality holds since $\vec r_m$ is a sampled independent of $\vec r_1,\dots , \vec r_{m-1}$ and  uniformly from $\FF_q^d \setminus \spn(\vec r_1 , \dots, \vec r_{m-1})$ once conditioned on the event $\vec r_m \notin \spn(\vec r_1,\dots , \vec r_{m-1})$. Hence, $\Pr[\vec r_m \notin \spn(\vec{r}_1, \dots \vec{r}_{m-1})\mid \rank(\vec r_1,\dots ,\vec r_{m-1})] = \left(1-\frac{1}{q^{d-m+1}} \right)$.
\end{proof}

\lemSetConstruct*

\begin{proof}
First we fix any $\vec v^i \in \FF_q^d \times [n]$. We can express,
\begin{align*}
 \Pr[\vec v^i \in A] &= \left(  1 - \frac{1}{q^d}\right) \cdot \Pr[\vec v^i \in A\mid A\sim \D_1] + \frac{1}{q^d} \cdot \Pr[\vec v^i \in A \mid A\sim \D_2]\\
 & = \left(  1 - \frac{1}{q^d}\right) \cdot \Pr[\vec x_i = \vec v] + \frac{1}{q^d} \cdot \frac{1}{q^d}\\
 & = \left(  1 - \frac{1}{q^d}\right) \cdot \frac 1{q^d}+ \frac{1}{q^d} \cdot \frac{1}{q^d} = \frac{1}{q^d}.
\end{align*}
Above the second equality follows from the definition of the distribution $\D_1$ and $\D_2$. The second equality follows because each $\vec x_i$ is sampled uniformly from $\FF_q^d$. This concludes the proof of the first part of the lemma. 

Next, we fix any two distinct elements $\vec v^i, \vec u^j \in \FF_q^d$. We compute the joint probability of $\Pr[\vec v^i \in A \land \vec u^j \in A]$ for two cases separately. First, we consider the case when $i\neq j$. We have,
\begin{align*}
	\Pr[\vec v^i \in A \land \vec u^j \in A] &= \left(  1 - \frac{1}{q^d}\right) \cdot \Pr[\vec v^i \in A \land \vec u^j \in A  \mid A\sim \D_1] + \frac{1}{q^d} \cdot \Pr[\vec v^i \in A \land \vec u^j \in A \mid A\sim \D_2]\\
	& = \left(  1 - \frac{1}{q^d}\right) \cdot \Pr[\vec x_i = \vec v\land \vec x_j = \vec u] + \frac{1}{q^d} \cdot \frac{1}{q^{2d}}\\
	& = \left(  1 - \frac{1}{q^d}\right) \cdot \frac 1{q^{2d}}+ \frac{1}{q^d} \cdot \frac{1}{q^{2d}} = \frac{1}{q^{2d}}.
\end{align*}
Above the second equality follows from the definition of the distribution $\D_1$ and $\D_2$. The second equality follows because the ordered family of vectors $\vec x_1,\dots , x_n$ are pairwise independent and $i\neq j$. 

Second, we consider the case when $i=j$. We have 
\begin{align*}
	\Pr[\vec v^i \in A \land \vec u^i \in A] &= \left(  1 - \frac{1}{q^d}\right) \cdot \Pr[\vec v^i \in A \land \vec u^i \in A  \mid A\sim \D_1] + \frac{1}{q^d} \cdot \Pr[\vec v^i \in A \land \vec u^i \in A \mid A\sim \D_2]\\
	& = \frac{1}{q^d} \cdot \frac{1}{q^{d}} =  \frac{1}{q^{2d}}.
\end{align*}
Above the second equality follows from the definition of the distribution $\D_1$ and $\D_2$. This concludes the proof of the lemma. 
\end{proof}

\end{document}
